\pdfoutput=1
\documentclass[11pt]{article}

\usepackage[margin=1in]{geometry}
\usepackage[T1]{fontenc}
\usepackage[utf8]{inputenc}
\usepackage{lmodern}
\usepackage{microtype}
\usepackage{amsmath,amssymb,amsfonts,amsthm,mathtools}
\usepackage{bm}
\usepackage{booktabs}
\usepackage{array}
\usepackage{enumitem}
\usepackage[section]{placeins}
\usepackage{float}
\newfloat{algorithm}{tbp}{loa}
\floatname{algorithm}{Algorithm}
\usepackage{hyperref}

\hypersetup{
  colorlinks=true,
  linkcolor=blue,
  citecolor=blue,
  urlcolor=blue,
  pdftitle={Fixed-Composition Shuffle Asymptotics in the Full-Support Gaussian Regime},
  pdfauthor={Alex Shvets}
}

\newtheorem{theorem}{Theorem}[section]
\newtheorem{proposition}[theorem]{Proposition}
\newtheorem{lemma}[theorem]{Lemma}
\newtheorem{corollary}[theorem]{Corollary}
\newtheorem{definition}[theorem]{Definition}
\newtheorem{remark}[theorem]{Remark}

\usepackage[nameinlink,capitalize]{cleveref}

\crefname{theorem}{Theorem}{Theorems}
\Crefname{theorem}{Theorem}{Theorems}
\crefname{proposition}{Proposition}{Propositions}
\Crefname{proposition}{Proposition}{Propositions}
\crefname{lemma}{Lemma}{Lemmas}
\Crefname{lemma}{Lemma}{Lemmas}
\crefname{corollary}{Corollary}{Corollaries}
\Crefname{corollary}{Corollary}{Corollaries}
\crefname{remark}{Remark}{Remarks}
\Crefname{remark}{Remark}{Remarks}
\crefname{definition}{Definition}{Definitions}
\Crefname{definition}{Definition}{Definitions}

\newcommand{\R}{\mathbb{R}}

\newcommand{\Z}{\mathbb{Z}}
\newcommand{\E}{\mathbb{E}}
\newcommand{\Pp}{\mathbb{P}}
\newcommand{\Var}{\operatorname{Var}}
\newcommand{\Cov}{\operatorname{Cov}}
\newcommand{\diag}{\operatorname{diag}}
\newcommand{\KL}{\operatorname{KL}}
\newcommand{\JSD}{\operatorname{JSD}}

\newcommand{\supp}{\operatorname{supp}}
\newcommand{\one}{\mathbf{1}}
\newcommand{\eps}{\varepsilon}
\newcommand{\dd}{\mathrm{d}}
\newcommand{\Y}{\mathcal{Y}}
\newcommand{\T}{\mathsf{T}}

\newcommand{\PhiG}{\Phi}
\newcommand{\phiG}{\varphi}

\newcolumntype{L}[1]{>{\raggedright\arraybackslash}p{#1}}
\title{Fixed-Composition Shuffle Asymptotics in the Full-Support Gaussian Regime\thanks{A preliminary version of this work appeared on Zenodo (DOI: 10.5281/zenodo.18112202). This manuscript is the journal revision of arXiv:2602.09029 and Part~I of the series described below.}}
\author{Alex Shvets\\Independent Researcher, Haifa, Israel\\\texttt{alex@shvets.io}; \texttt{alt178332@gmail.com}\\ORCID: 0009-0005-9802-379X}
\date{Revised manuscript, July 2026}

\begin{document}
\maketitle

\begin{abstract}
We study privacy amplification by shuffling for binary-input local randomizers with a fixed finite output alphabet and full support.  For a dataset containing exactly $k$ ones among $n$ users, let $T_{n,k}$ denote the shuffled histogram law.  For fixed-composition neighboring shuffled histogram laws in the interior regime, we identify the covariance and Fisher constant governing the neighboring pair $(T_{n,k},T_{n,k+1})$.  For a composition parameter $\pi\in[0,1]$ the correct covariance is $\Sigma_\pi=(1-\pi)\Sigma_0+\pi\Sigma_1$, where $\Sigma_b=\diag(W_b)-W_bW_b^\top$, rather than the multinomial covariance of the mixture $(1-\pi)W_0+\pi W_1$.  With $v=W_1-W_0$, the resulting constant is $I_\pi=v^\top\Sigma_\pi^+v$ and is strictly larger than the mixture-covariance proxy for every nontrivial channel and every interior composition.  We prove exact likelihood-ratio identities and the regression decomposition $L_{n,k}-1=n^{-1}s_n^\top(N-\mathbb EN)+R_{n,k}$, where the score $s_n$ is evaluated at the empirical composition, with $\mathbb E R_{n,k}^2=O(n^{-2})$ and $\mathbb E R_{n,k}^4=O(n^{-4})$, uniformly over interior compositions.  Consequently, for these fixed-composition neighboring laws, the Jensen--Shannon divergence equals $I_{k/n}/(8n)+O(n^{-2})$, and the same constant governs smooth divergence asymptotics.  For $\mu_n=\sqrt{I_{k/n}/n}$, both directed hockey-stick privacy curves at $\varepsilon=t\mu_n$ equal $\mu_n\{\varphi(t)-t\Phi(-t)\}+O(n^{-1})$, uniformly for $t$ in compact sets.  This local asymptotic statement is separated from finite-sample certification, for which we give exact positive-part accounting formulas.  Binary-output and fixed-message unbundled specializations, including an exact variance decomposition, are also provided.
\end{abstract}

\noindent\textbf{MSC 2020:} 62G10, 94A17, 68P27.\quad
\textbf{Keywords:} differential privacy, shuffle model, privacy amplification, Gaussian differential privacy, Jensen--Shannon divergence, Gaussian tangent experiments, finite-$n$ privacy accounting.

\tableofcontents

\section{Introduction}

The shuffle model of differential privacy inserts a trusted shuffler between users and an analyst.  Each user applies a local randomizer and sends a message; the shuffler releases the messages in random order.  For a binary-input local randomizer $W:\{0,1\}\to\Delta(\Y)$ with finite output alphabet $\Y=\{1,\ldots,d\}$, the shuffled output is equivalently the histogram
\[
        N=(N_y)_{y\in\Y},\qquad N_y=\sum_{i=1}^n {\bf 1}\{Y_i=y\}.
\]
If the input dataset has exactly $k$ ones, the histogram law is denoted $T_{n,k}$.  By exchangeability, every neighboring binary pair of datasets induces some neighboring pair $(T_{n,k},T_{n,k+1})$.

This paper studies the asymptotic structure of such neighboring experiments when the alphabet and the local randomizer are fixed, the randomizer has full support, and $k/n$ remains in a compact subinterval of $(0,1)$.  This is the full-support Gaussian regime.  The goal is not to replace finite-$n$ privacy accounting algorithms.  Those algorithms answer the certification question: for this mechanism and this $n$, what valid $(\eps,\delta)$ guarantee can be reported?  The goal here is structural: to identify the correct Gaussian tangent scale and score geometry, the correct covariance, and the constants controlling local privacy and information leakage for fixed-composition neighboring histogram pairs.

\subsection{Scope and limitations}

The main results, in \cref{sec:covariance,sec:lr,sec:linearization,sec:jsd,sec:local-privacy,sec:finite-accounting,sec:unbundled}, are intentionally limited to the following regime: binary inputs, fixed finite output alphabet, fixed local randomizer, full support, and neighboring datasets differing in one user.  They do not cover growing alphabets, sparse supports, $n$-dependent alphabets, or non-absolutely-continuous pairs without a separate singular-component analysis.  \Cref{sec:boundary} separately treats one canonical randomized-response boundary regime in which the local parameter $\eps_0(n)$ is allowed to grow subject to $a_n=e^{\eps_0(n)}/n\to0$; no general theorem for randomizers with an $n$-dependent privacy parameter is claimed.  The results also do not assert that a Gaussian differential privacy (GDP) approximation is by itself a valid finite-$n$ privacy certificate.

This distinction is essential.  In the Gaussian regime the GDP parameter is
\[
        \mu_n(\pi)=\sqrt{I_\pi/n}=O(n^{-1/2}).
\]
Therefore a global trade-off error of order $O(n^{-1/2})$ is of the same order as the Gaussian signal.  Such a statement can identify the tangent scale, but it is too coarse to certify a small fixed-$\eps$ privacy curve.  The refined privacy statement in this revision is instead formulated on the natural local scale $\eps=t\mu_n$, where each directed privacy curve has size $\mu_n$ and an $O(n^{-1})$ remainder is genuinely smaller than the main term.  For finite-$n$ certification, one should use the exact positive-part formulas in \cref{sec:finite-accounting} or another explicit computable upper bound.

This manuscript is the journal revision of arXiv:2602.09029.  It narrows the
earlier claims: global GDP-style, Le Cam-equivalence, and finite-$n$ dominance
statements are replaced by a two-sided local privacy-curve theorem and by an
exact accounting layer.  It is Part~I of a series.  Part~II
\cite{ShvetsPartII2026} treats the critical Poisson, Skellam, and
compound-Poisson regimes, while Part~III \cite{ShvetsPartIII2026} develops the
dominant-block quotient geometry and hybrid Gaussian--compound-Poisson limits.
Separate companion work treats canonical growing-alphabet experiments
\cite{ShvetsGrowing2026} and anchored likelihood-ratio geometry and exact
privacy envelopes \cite{ShvetsAnchored2026}.  Those papers are logically
independent of the proofs in the present fixed-alphabet, full-support regime.

Fixing a neighboring composition pair does not turn the analysis into a worst-case finite-$n$ theorem.  By exchangeability every neighboring binary pair induces some $(T_{n,k},T_{n,k+1})$.  \Cref{prop:boundary-worstcase} proves only that the continuum tangent profile $\pi\mapsto I_\pi$ is convex and therefore has its largest endpoint value at $\pi=0$ or $\pi=1$.  The two endpoint constants correspond to the canonical experiment in the two possible channel orientations.  This does not identify the exact maximizing integer composition at finite $n$, because the interior expansion is not uniform in the boundary layer and lower-order terms may change the maximizer.  The exact worst-case curve remains the finite maximum in \eqref{eq:worst-case-dp} and must be evaluated or bounded through the accounting layer of \cref{sec:finite-accounting}.

\subsection{Main contributions}

We separate the supporting framework from the new results, and we do not claim
the framework as a contribution.

\paragraph{Supporting framework.}
Two ingredients are used throughout and are not new.  The first is the exact
likelihood-ratio representation.  For the canonical pair $(T_{n,0},T_{n,1})$
the likelihood ratio is linear in the histogram; for a general neighboring pair
$(T_{n,k},T_{n,k+1})$ it is a conditional expectation,
\[
        L_{n,k}(N)=\frac{\dd T_{n,k+1}}{\dd T_{n,k}}(N)
        =1+\E[r(Y_\star)\mid N],
        \qquad r(y)=\frac{W_1(y)}{W_0(y)}-1 .
\]
The canonical case appears in equivalent form in Girgis et al.
\cite{GirgisEtAl2021}; we record the general fixed-composition version because
it is the entry point for everything below.  The second ingredient is the
positive-part accounting identity of \cref{sec:finite-accounting}.  That
identity is standard; what we supply is its specialization to shuffled
histogram experiments and a clean separation between the exact accounting layer
and the asymptotic layer.  No new accounting principle is claimed.

\paragraph{Geometric identification.}
The tangent geometry of a fixed-composition neighboring pair is not the i.i.d.
mixture geometry.  If $k/n\to\pi$, the histogram is not a multinomial sample
from $(1-\pi)W_0+\pi W_1$; it is a fixed-composition sum of exactly $n-k$ draws
from $W_0$ and $k$ draws from $W_1$, so the covariance is
\[
        \Sigma_\pi=(1-\pi)\Sigma_0+\pi\Sigma_1,
\]
not the mixture covariance $\diag(f_\pi)-f_\pi f_\pi^\top$ with
$f_\pi=(1-\pi)W_0+\pi W_1$.  The identity itself is elementary.  Its
consequence is not: the two covariances differ by the rank-one term
$\pi(1-\pi)vv^\top$, and \cref{prop:fisher-proxy} shows that the mixture proxy
gives a strictly and always optimistically small Fisher constant, $I_\pi>I_f$,
for every $d\ge2$, every nontrivial channel, and every interior composition.
The effect is quantitatively significant: for the four compositions displayed
for the three-symbol channel in \cref{subsec:numerics}, the mixture constant is
smaller by $20.0$--$31.0\%$.  This is a warning against a natural but wrong
i.i.d.-mixture heuristic; it is not meant to attribute that substitution to any
particular prior shuffle-DP proof.  Composition dependence is a separate issue:
it is absent for variance-symmetric binary channels, can occur for
variance-asymmetric binary channels, and can also occur for larger alphabets;
it is not implied solely by the condition $d\ge3$.

\paragraph{New results.}
Three results are new.

The first is a uniform residual decomposition for the posterior likelihood
ratio.  Uniformly for $k/n\in[\eta,1-\eta]$, one can write under $P=T_{n,k}$
\[
        L_{n,k}-1=U_{n,k}=\frac{1}{n}Z_{n,k}+R_{n,k},
        \qquad Z_{n,k}=s_n^\top(N-\E_PN),
\]
with $s_n$ the score at the empirical composition, $\E_P[R_{n,k}Z_{n,k}]=0$,
$\E_P R_{n,k}^2=O(n^{-2})$, and $\E_P R_{n,k}^4=O(n^{-4})$.  This is strictly
stronger than a pointwise typical-set linearization: the bounds are $L^2$ and
$L^4$ bounds valid uniformly over interior compositions, not statements on a
high-probability event.  It is the load-bearing input for both results below.

The second is the exact information constant.  In the interior regime,
\[
        \JSD(T_{n,k}\Vert T_{n,k+1})=\frac{I_{k/n}}{8n}+O(n^{-2}),
        \qquad I_\pi=v^\top\Sigma_\pi^+v,
\]
and the same constant governs smooth $f$-divergences (\cref{thm:f-divergence}).
The remainder is $O(n^{-2})$, one full order below the leading term.

The third is a two-sided local privacy curve.  For
$\mu_n=\sqrt{I_{k/n}/n}$ and $t$ in compact subsets of $[0,\infty)$,
\[
        \delta_{T_{n,k+1}\Vert T_{n,k}}(t\mu_n)
        =\mu_n\{\phiG(t)-t\PhiG(-t)\}+O(n^{-1}).
\]
The leading term has size $\Theta(n^{-1/2})$ and the remainder is smaller by a
factor $n^{-1/2}$, so the statement is sharp on its own scale.  The same
Gaussian curve governs the reverse ordering (\cref{cor:two-sided}), which makes
it a genuine two-sided $(\eps,\delta)$ statement for interior compositions.  It
is proved by direct change of measure and a Lipschitz estimate, not by a Le Cam
argument.

\subsection{Main theorem dependency map}\label{subsec:main-map}

For readability and auditability, the logical dependencies of the main results
are as follows.  Exact likelihood-ratio identities are proved first and do not
use asymptotics.  The covariance correction then identifies the correct tangent
metric.  The lattice ratio lemmas in \cref{app:edgeworth} enter twice:
\cref{lem:ratio} supports the pointwise conditional-expectation linearization
of \cref{sec:linearization}, and its refinement \cref{lem:refined-ratio}
supports the residual bounds in \cref{prop:residual-bounds}.  Those residual
bounds are the load-bearing input for the $O(n^{-2})$ Jensen--Shannon
expansion.  The $O(n^{-1})$ local privacy-curve remainder uses both the
residual bounds and the non-uniform Berry--Esseen call-payoff estimate in
\cref{lem:call-payoff}.  The normalized privacy-loss expansion is a parallel
interpretive consequence and is not used as a black box in the privacy-curve
proof.

\begin{table}[H]
\centering
\small
\begin{tabular}{@{}L{0.22\textwidth}L{0.31\textwidth}L{0.34\textwidth}@{}}
\toprule
Layer & Main result & Output used later \\
\midrule
Exact shuffle algebra & \Cref{lem:canonical-lr,lem:general-lr} & $Q\ll P$ and $L_{n,k}=1+\E[r(Y_\star)\mid N]$ on $\supp(T_{n,k})$, together with exact accounting that retains the reverse singular mass. \\
Fixed-composition geometry & \Cref{prop:mixture-correction,prop:fisher-proxy,prop:boundary-worstcase,lem:sigma-lipschitz} & Correct covariance $\Sigma_\pi=(1-\pi)\Sigma_0+\pi\Sigma_1$ and Fisher constant $I_\pi=v^\top\Sigma_\pi^+v$. \\
Local lattice analysis & \Cref{lem:tilt-control,lem:tilted-aperiodicity,lem:edgeworth,lem:refined-ratio} & Uniform $C^1$ control of the relative Edgeworth remainder and the bounded-shift estimate used in the residual theorem. \\
Residual machinery & \Cref{lem:hist-moments,lem:regression,prop:residual-bounds}, using \cref{lem:refined-ratio} & $L_{n,k}-1=Z_{n,k}/n+R_{n,k}$ with $\E R_{n,k}^2=O(n^{-2})$, $\E R_{n,k}^4=O(n^{-4})$, and $\E[R_{n,k}Z_{n,k}]=0$. \\
Information constants & \Cref{thm:interior-jsd,thm:f-divergence} & $\JSD(T_{n,k}\Vert T_{n,k+1})=I_{k/n}/(8n)+O(n^{-2})$ and the same leading constant for smooth divergences. \\
Gaussian tangent & \Cref{thm:LAN-normalized} & A normalized privacy-loss expansion derived from the residual bounds; it is interpretive and not an input to the privacy-curve theorem. \\
Local privacy curve & \Cref{lem:call-payoff,thm:local-delta,cor:two-sided}, using \cref{prop:residual-bounds} & For $\eps=t\sqrt{I_{k/n}/n}$, both directed curves equal $\sqrt{I_{k/n}/n}\{\phiG(t)-t\PhiG(-t)\}+O(n^{-1})$ uniformly for bounded $t$. \\
Finite-$n$ certification & \Cref{prop:positive-part,cor:general-accounting,prop:chernoff} & Exact accounting identities and computable upper bounds; no Gaussian approximation is used as a certificate without an explicit error term. \\
\bottomrule
\end{tabular}
\caption{Dependency map for the main results.  The asymptotic constants and local privacy curve depend on the residual theorem, while finite-sample certification stays in the exact accounting layer.}
\label{tab:dependency-map}
\end{table}

\subsection{Relation to prior work}

Differential privacy was introduced by Dwork, McSherry, Nissim, and Smith \cite{DworkMcSherryNissimSmith2006}; see Dwork and Roth \cite{DworkRoth2014} for a textbook treatment.  The shuffle model and privacy amplification by shuffling were developed in Erlingsson et al. \cite{ErlingssonEtAl2019}, Cheu et al. \cite{CheuEtAl2019}, Balle et al. \cite{BalleBellGasconNissim2019}, and Feldman--McMillan--Talwar \cite{FeldmanMcMillanTalwar2021,FeldmanMcMillanTalwar2023}, among many others.  Further amplification-bound work includes variation-ratio reduction \cite{WangEtAl2024Unified}, decomposition-based optimal bounds \cite{SuChengWang2025Decomposition}, R\'enyi-DP and hypothesis-testing bounds \cite{ChenCaoGe2024RDP}, and a generalized framework for personalized local-privacy specifications \cite{ChenCaoGe2024Generalized}.  The latter works address amplification guarantees under broader local-privacy descriptions rather than the exact pair-specific fixed-composition constant studied here.

The distinction between fixed-composition sums and i.i.d. mixtures also belongs to the broader theory of permutation mixtures.  Han--Niles-Weed \cite{HanNilesWeed2024} bound statistical distances between high-dimensional permutation mixtures and their i.i.d. counterparts and derive, among other applications, a shuffle-privacy guarantee for Gaussian noise.  Their results give global approximate-independence bounds rather than the local categorical neighboring coefficient studied here; in particular, they do not identify the fixed-composition Fisher constant $I_\pi$ or the local hockey-stick expansion.

Takagi--Liew \cite{TakagiLiew2026} develop asymptotic blanket-divergence bands
beyond pure local DP.  The two analyses use opposite scalar conventions.  Their
blanket divergence
$\mathcal D^{\rm blanket}_{e^{\eps},n,\mathcal R_{\rm ref},\gamma}$ carries a
reference law $\mathcal R_{\rm ref}$ and a mass parameter $\gamma$, and the
associated shuffle index is
\[
        \chi_{\rm TL}=\frac{\sqrt\gamma}{\sigma},
        \qquad
        \sigma^2=\Var_{Y\sim\mathcal R_{\rm ref}}\bigl(l_0(Y)\bigr),
        \qquad
        l_0=\frac{\mathcal R_{x_1}-\mathcal R_{x_1'}}{\mathcal R_{\rm ref}} .
\]
Here $\gamma$ is the blanket mass when $\mathcal R_{\rm ref}$ is the blanket
distribution, so $\gamma<1$ for every nontrivial channel.  The value $\gamma=1$
arises instead in the lower-bound configuration
$\mathcal D^{\rm blanket}_{e^{\eps},n,\mathcal R_x,1}$ inherited from
Su--Cheng--Wang \cite{SuChengWang2025Decomposition}.  Taking the canonical
reference $\mathcal R_{\rm ref}=W_0$ in that configuration gives
\[
        \sigma^2
        =\Var_{W_0}\!\left(\frac{W_1-W_0}{W_0}\right)
        =\chi^2(W_1\Vert W_0)=I_0,
\]
so their index reduces to $\chi_{\rm TL}=I_0^{-1/2}$: larger $\chi_{\rm TL}$
means stronger privacy, whereas larger $I_0$ means a larger local privacy
signal.  The two regimes are moreover disjoint.  Their asymptotics require
$\eps_n=\omega(n^{-1/2})$ together with $\eps_n=O(\sqrt{\log n/n})$, a
moderate-deviation band that excludes the local scale
$\eps=t\sqrt{I_{k/n}/n}=\Theta(n^{-1/2})$ analyzed here.  Their subsequent
mechanism-design work uses the shuffle-index framework for private vector mean
estimation \cite{TakagiLiewOptimization2026}.

GDP was introduced by Dong--Roth--Su \cite{DongRothSu2022}.  Exact and numerical accounting methods include general FFT and characteristic-function approaches \cite{KoskelaJalkoHonkela2020,ZhuWang2021}, shuffle-specific privacy-loss accounting \cite{KoskelaHeikkilaHonkela2023}, and recent decomposition methods \cite{SuChengWang2025Decomposition}.  Mutual-information leakage in the shuffle model is studied by Su--Cheng--Wang \cite{SuChengWang2025MI}.  Their shuffle-DP functional $I(X_1;Z\mid X_{-1})$ specializes, for a uniform binary target input and a fixed background containing $k$ ones, to
\[
        I(X_1;Z\mid X_{-1}=x_{-1})
        =I(X_1;N\mid X_{-1}=x_{-1})
        =\JSD(T_{n,k}\Vert T_{n,k+1}),
\]
because the shuffled sequence $Z$ and its histogram $N$ are statistically equivalent.  Thus the information functional overlaps exactly in this specialization.  Su--Cheng--Wang give general local-DP-based information bounds and analyze homogeneous and heterogeneous background models; the distinct contribution here is the exact channel- and composition-specific coefficient $I_\pi/(8n)$ together with its connection to both directed local hockey-stick curves.  Quantitative-information-flow leakage for shuffled mechanisms is studied by Jurado et al. \cite{JuradoEtAl2023}.  Tight specialized accounting for shuffled randomized response is studied by Biswas--Jung--Palamidessi \cite{BiswasJungPalamidessi2024}.  The exact linear likelihood-ratio identity for the canonical shuffle pair appears in equivalent form in Girgis et al. \cite{GirgisEtAl2021}.

The comparison below is by objective, not by a claimed numerical dominance over existing amplification bounds.  Most shuffle-amplification results ask for valid worst-case central-DP, RDP, or computable accounting guarantees for a mechanism.  Here we fix a neighboring histogram pair in the interior fixed-composition regime and identify the local tangent constant $I_\pi$ governing JSD, smooth $f$-divergences, and both directed local privacy curves.  This pair-specific asymptotic descriptor is not a replacement for worst-case amplification certification; exact finite-$n$ certification remains in the positive-part/accounting layer.

\begin{table}[!t]
\centering
\footnotesize
\begin{tabular}{@{}L{0.18\textwidth}L{0.22\textwidth}L{0.22\textwidth}L{0.26\textwidth}@{}}
\toprule
Work & Model / method & Type of guarantee & Relation to this paper \\
\midrule
Balle et al. \cite{BalleBellGasconNissim2019} & Privacy blanket for shuffle amplification & General central-DP amplification upper bounds & Certification objective; not intended to identify the pair-specific fixed-composition Fisher constant $I_\pi$. \\
Feldman--McMillan--Talwar \cite{FeldmanMcMillanTalwar2021,FeldmanMcMillanTalwar2023} & Clone-based/RDP amplification analysis & Worst-case central-DP and RDP amplification bounds & Same shuffle-amplification setting at the level of guarantees, but a different functional from local tangent Fisher geometry. \\
Han--Niles-Weed \cite{HanNilesWeed2024} & Permutation mixtures versus i.i.d. counterparts & Statistical-distance bounds and Gaussian shuffle application & Global approximate-independence theory; this paper identifies the exact categorical neighboring tangent constant and local curve. \\
Takagi--Liew \cite{TakagiLiew2026} & Blanket-divergence asymptotics and shuffle index beyond pure local DP & Tight asymptotic bands and finite-$n$ FFT computation for blanket divergence & In the lower-bound configuration with reference $W_0$ their shuffle index is $\chi_{\rm TL}=I_0^{-1/2}$; their moderate-deviation band $\eps_n=\omega(n^{-1/2})$ excludes the local scale studied here. \\
Dong--Roth--Su \cite{DongRothSu2022} & GDP / $f$-DP framework & Gaussian trade-off calculus for experiments & Provides the Gaussian trade-off formalism; this paper computes the shuffle-specific tangent parameter $\mu_n=\sqrt{I_\pi/n}$. \\
Su--Cheng--Wang \cite{SuChengWang2025Decomposition} & Decomposition and FFT computation & Computable shuffle-amplification bounds & Directly relevant to finite-$n$ certification; complementary to the asymptotic identification of $I_\pi$. \\
Su--Cheng--Wang \cite{SuChengWang2025MI} & Mutual information in the shuffle model & Target-input and position leakage & Their target-input mutual information specializes to JSD for a uniform binary neighbor bit; this paper gives its exact fixed-composition coefficient and links it to local hockey-stick curves. \\
Biswas--Jung--Palamidessi \cite{BiswasJungPalamidessi2024} & Shuffled $k$-randomized response & Tight specialized privacy guarantees & Overlaps with binary-output/RR accounting; our Section~\ref{sec:finite-accounting} is an accounting specialization, not a new positive-part principle. \\
\bottomrule
\end{tabular}
\caption{Positioning by objective rather than by a uniform numerical improvement claim.}
\label{tab:related}
\end{table}

We do not study Jensen--Shannon contraction coefficients or strong data-processing constants.  In particular, the binary-input reduction for SDPI constants in Ordentlich--Polyanskiy \cite{OrdentlichPolyanskiy2022} concerns an optimization over input subchannels of a fixed channel.  Our JSD results instead evaluate the divergence between the two fixed neighboring shuffled histogram laws $T_{n,k}$ and $T_{n,k+1}$.

\subsection{Road map}

\Cref{sec:model} sets up the model, assumptions, likelihood-ratio notation, and worst-case DP curves.  \Cref{sec:covariance} proves the fixed-composition covariance formulas.  \Cref{sec:lr} gives exact likelihood-ratio identities.  \Cref{sec:linearization} proves the conditional-expectation linearization and residual decomposition.  \Cref{sec:jsd} derives Jensen--Shannon and smooth $f$-divergence asymptotics.  \Cref{sec:local-privacy} proves the Gaussian tangent and local privacy-curve results.  \Cref{sec:finite-accounting} gives exact finite-$n$ accounting identities.  \Cref{sec:unbundled} treats fixed-$m$ unbundled shuffling.  \Cref{sec:boundary} discusses the randomized-response boundary.  \Cref{sec:discussion} concludes.

\section{Model, definitions, and privacy curves}\label{sec:model}

\subsection{Shuffle histogram experiments}

Fix a finite output alphabet $\Y=\{1,\ldots,d\}$ with $d\ge2$.
The case $d=1$ is degenerate: necessarily $W_0=W_1$, all neighboring
histogram laws coincide, and all divergences and privacy curves considered
below are zero.  We therefore exclude this case.  A local randomizer is a channel
\[
        W:\{0,1\}\to\Delta(\Y),\qquad x\mapsto W_x.
\]
Given a dataset $D=(x_1,\ldots,x_n)\in\{0,1\}^n$, users release independent messages $Y_i\sim W_{x_i}$.  The shuffler reveals the histogram
\[
        N=(N_y)_{y\in\Y}\in\Z_{\ge0}^d,
        \qquad N_y=\sum_{i=1}^n {\bf 1}\{Y_i=y\},
        \qquad \sum_y N_y=n.
\]
For $k\in\{0,\ldots,n\}$, let $T_{n,k}$ denote the law of $N$ when the dataset contains exactly $k$ ones.

\begin{remark}[Composition reduction]
Because the shuffled mechanism is exchangeable in the users, $T_{n,k}$ depends only on the number $k$ of ones.  Every neighboring binary pair of datasets induces $(T_{n,k},T_{n,k+1})$ for some $k\in\{0,\ldots,n-1\}$.
\end{remark}

\subsection{Likelihood ratios and privacy curves}

For distributions $P,Q$ on a finite space with $Q\ll P$, define
\[
        L(\omega)=\frac{Q(\omega)}{P(\omega)},
        \qquad \omega\in\supp(P),
\]
and fix the representative $L(\omega)=0$ for
$\omega\notin\supp(P)$.  Thus $L=\dd Q/\dd P$ is used pointwise only
on $\supp(P)$.  On events where $L>0$, write $\Lambda=\log L$ for the
privacy loss.  All results using a bounded real-valued privacy loss impose
mutual absolute continuity through the full-support assumption below.
The one-sided hockey-stick privacy curve is
\[
        \delta_{Q\Vert P}(\eps)
        :=\sup_A\{Q(A)-e^\eps P(A)\}
        =\E_P[(L-e^\eps)_+],\qquad \eps\ge0.
\]
The two-sided curve is
\[
        \delta_{\{P,Q\}}(\eps)=\max\{\delta_{Q\Vert P}(\eps),\delta_{P\Vert Q}(\eps)\}.
\]
For the full shuffled binary mechanism, the worst-case curve is
\begin{equation}\label{eq:worst-case-dp}
        \delta_{\rm shuf}(\eps)
        =\max_{0\le k\le n-1}\max\{\delta_{T_{n,k+1}\Vert T_{n,k}}(\eps),
        \delta_{T_{n,k}\Vert T_{n,k+1}}(\eps)\}.
\end{equation}
Thus an interior asymptotic theorem for $k/n\in[\eta,1-\eta]$ is not by itself a worst-case DP theorem; boundary values of $k$ must be controlled separately or included through exact finite-$n$ accounting.

We use
\[
        \JSD(P\Vert Q)=\frac12\KL(P\Vert M)+\frac12\KL(Q\Vert M),\qquad M=\frac12(P+Q),
\]
and $f$-divergences $D_f(Q\Vert P)=\E_P[f(L)]$.

\subsection{Support assumptions}

\begin{definition}[Minimum masses]
Define
\[
        \delta_\star=\min_{y\in\Y}W_0(y),\qquad
        \delta_{\rm full}=\min_{y\in\Y}\min\{W_0(y),W_1(y)\}.
\]
\end{definition}

The identities in \cref{sec:lr} require $\delta_\star>0$ in the direction
$T_{n,k+1}\ll T_{n,k}$.  This one-sided assumption does not imply the
reverse domination.  Accordingly, the pointwise ratio in
\eqref{eq:general-lr-sum} is asserted only on $\supp(T_{n,k})$; outside
this support its numerator and denominator both vanish, and the
representative fixed above sets the likelihood ratio equal to zero.  The
exact accounting formulas in \cref{sec:finite-accounting} remain valid
under this one-sided domination and retain the singular mass in the reverse
privacy curve.  Results involving bounded privacy loss, Edgeworth ratios
uniformly over symbols, and local Gaussian privacy curves assume
$\delta_{\rm full}>0$.  Those asymptotic results do not treat the singular
regime.

\section{Fixed-composition covariance and Fisher constant}\label{sec:covariance}

Write $W_0,W_1$ as vectors in $\R^d$ and set
\[
        v=W_1-W_0,
        \qquad \one^\top v=0,
        \qquad \Sigma_b=\diag(W_b)-W_bW_b^\top,
        \quad b\in\{0,1\}.
\]
For $\pi\in[0,1]$ define the fixed-composition covariance
\begin{equation}\label{eq:sigma-pi}
        \Sigma_\pi=(1-\pi)\Sigma_0+\pi\Sigma_1.
\end{equation}
For a dataset with $k=n\pi$ ones (with the evident interpretation when $\pi=k/n$), independence of the one-hot messages gives the exact identities
\[
        \E N=(n-k)W_0+kW_1,
        \qquad
        \Cov(N)=(n-k)\Sigma_0+k\Sigma_1=n\Sigma_\pi.
\]
Let
\[
        \mathcal T=\{x\in\R^d:\one^\top x=0\}
\]
be the tangent space of the simplex.  Under full support, $\Sigma_\pi$ is positive definite on $\mathcal T$ for $\pi\in[0,1]$.  Indeed, for $x\in\mathcal T$,
\[
        x^\top\Sigma_bx=\Var_{Y\sim W_b}\{x(Y)\}.
\]
Full support makes this variance zero only when all coordinates of $x$ are equal; the tangent constraint then forces $x=0$.  The same conclusion holds for every convex combination $\Sigma_\pi$.  Let $\Sigma_\pi^+$ be its Moore--Penrose inverse, equivalently the inverse on $\mathcal T$ and zero on $\operatorname{span}\{\one\}$.  Define
\begin{equation}\label{eq:Ipi}
        I_\pi=v^\top\Sigma_\pi^+v.
\end{equation}

\begin{proposition}[Mixture covariance correction]\label[proposition]{prop:mixture-correction}
Let $f_\pi=(1-\pi)W_0+\pi W_1$ and $\Sigma_\pi^{\rm mix}=\diag(f_\pi)-f_\pi f_\pi^\top$.  Then
\begin{equation}\label{eq:rank-one-correction}
        \Sigma_\pi^{\rm mix}=\Sigma_\pi+\pi(1-\pi)vv^\top.
\end{equation}
Consequently $\Sigma_\pi^{\rm mix}\succeq \Sigma_\pi$ on $\mathcal T$.
\end{proposition}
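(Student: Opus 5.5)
The plan is a direct algebraic expansion, organized so the rank-one term appears cleanly. I would write $f_\pi=(1-\pi)W_0+\pi W_1$ and note two facts. First, the diagonal part is linear in the weights, so $\diag(f_\pi)=(1-\pi)\diag(W_0)+\pi\diag(W_1)$. Second, the outer product is quadratic, so the whole discrepancy between $\Sigma_\pi^{\rm mix}$ and $\Sigma_\pi$ is the gap between a convex combination of outer products and the outer product of the convex combination. Subtracting,
\[
\Sigma_\pi^{\rm mix}-\Sigma_\pi=(1-\pi)W_0W_0^\top+\pi W_1W_1^\top-f_\pi f_\pi^\top .
\]

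The key step is to recognize the right-hand side as the covariance matrix of a two-point random vector. Let $B\sim\mathrm{Bernoulli}(\pi)$ and consider the random vector $W_B$. Its mean is $f_\pi$, and its second moment is $(1-\pi)W_0W_0^\top+\pi W_1W_1^\top$. Its covariance is therefore exactly the displayed difference. Since $W_B-f_\pi=(B-\pi)v$, that covariance equals $\Var(B)\,vv^\top=\pi(1-\pi)vv^\top$. This proves \eqref{eq:rank-one-correction}.

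As a check by hand, I would expand $f_\pi f_\pi^\top$ and collect terms. The coefficient of $W_0W_0^\top$ is $(1-\pi)-(1-\pi)^2=\pi(1-\pi)$, the coefficient of $W_1W_1^\top$ is $\pi-\pi^2=\pi(1-\pi)$, and the cross terms carry coefficient $-\pi(1-\pi)$. Together these give $\pi(1-\pi)(W_1-W_0)(W_1-W_0)^\top$.

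For the consequence, take any $x\in\mathcal T$, or indeed any $x\in\R^d$. Then $x^\top(\Sigma_\pi^{\rm mix}-\Sigma_\pi)x=\pi(1-\pi)(v^\top x)^2\ge0$ for $\pi\in[0,1]$, which gives $\Sigma_\pi^{\rm mix}\succeq\Sigma_\pi$ on $\mathcal T$. I would also note that $v\in\mathcal T$ because $\one^\top v=0$, so the correction acts within the tangent space; this is the fact later comparisons of pseudo-inverses will need. There is no real obstacle here. The only point needing care is the sign and coefficient bookkeeping in the cross terms, and the variance-of-a-two-point-vector viewpoint avoids that entirely.
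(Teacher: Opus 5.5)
Your proof is correct and is essentially the paper's argument: the diagonal part is linear in $\pi$, and the rank-one correction comes from expanding the outer product $f_\pi f_\pi^\top$. Reading $(1-\pi)W_0W_0^\top+\pi W_1W_1^\top-f_\pi f_\pi^\top$ as the covariance $\pi(1-\pi)vv^\top$ of the two-point vector $W_B$ is the law of total covariance, and it keeps the signs more transparent than the paper's one-line sketch; the ordering $\Sigma_\pi^{\rm mix}\succeq\Sigma_\pi$ then follows exactly as you say from $x^\top(\Sigma_\pi^{\rm mix}-\Sigma_\pi)x=\pi(1-\pi)(v^\top x)^2\ge0$.
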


\begin{proof}
Expand $\diag(f_\pi)-f_\pi f_\pi^\top$ using $f_\pi=W_0+\pi v$.  The diagonal terms give $(1-\pi)\diag(W_0)+\pi\diag(W_1)$.  The rank-one terms simplify to $(1-\pi)W_0W_0^\top+\pi W_1W_1^\top+\pi(1-\pi)vv^\top$.  This gives \eqref{eq:rank-one-correction}.
\end{proof}

\begin{proposition}[Relation to the multinomial Fisher proxy]\label[proposition]{prop:fisher-proxy}
Let $\pi\in[0,1]$ and define
\[
        I_f=\sum_{y\in\Y}\frac{v(y)^2}{f_\pi(y)}.
\]
Then
\begin{equation}\label{eq:Ipi-If}
        I_\pi=\frac{I_f}{1-\pi(1-\pi)I_f}.
\end{equation}
In particular, $I_\pi\ge I_f$, with equality if and only if $v=0$ or $\pi\in\{0,1\}$.
\end{proposition}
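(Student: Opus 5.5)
The plan is to work entirely on the tangent space $\mathcal T$. There all the relevant matrices are invertible, so \eqref{eq:Ipi-If} reduces to a Sherman--Morrison computation built on the rank-one identity \eqref{eq:rank-one-correction} of \cref{prop:mixture-correction}. Throughout I assume full support ($\delta_{\rm full}>0$, as in the definition of $I_\pi$), so that $f_\pi(y)>0$ for all $y$ and $\pi\in[0,1]$, and $\Sigma_\pi$ is positive definite on $\mathcal T$ as shown before \eqref{eq:Ipi}.

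\textbf{Step 1: the mixture quadratic form.} The first step is to identify $v^\top(\Sigma_\pi^{\rm mix})^+v$ with $I_f$. Restricted to $\mathcal T$, the matrix $\Sigma_\pi^{\rm mix}=\diag(f_\pi)-f_\pi f_\pi^\top$ has inverse given by the quadratic form $x\mapsto\sum_y x(y)^2/f_\pi(y)$. Concretely, for $x\in\mathcal T$ set $u=\diag(f_\pi)^{-1}x-c\one$, with $c$ chosen so that $u\in\mathcal T$. Because $\one^\top x=0$, we have $f_\pi^\top\diag(f_\pi)^{-1}x=0$, and so
\[
\Sigma_\pi^{\rm mix}u=\diag(f_\pi)u-f_\pi(f_\pi^\top u)=x .
\]
Hence $u=(\Sigma_\pi^{\rm mix})^+x$, and $x^\top u=\sum_y x(y)^2/f_\pi(y)$. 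Applying this with $x=v\in\mathcal T$ gives $v^\top(\Sigma^{\rm mix}_\pi)^+v=I_f$.

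\textbf{Step 2: Sherman--Morrison.} Write $A=\Sigma_\pi^{\rm mix}|_{\mathcal T}$ and $c=\pi(1-\pi)$. By \eqref{eq:rank-one-correction}, $\Sigma_\pi|_{\mathcal T}=A-c\,vv^\top$. Since $\Sigma_\pi$ is positive definite on $\mathcal T$, the matrix determinant lemma gives
\[
0<\det(A-c\,vv^\top)/\det A=1-c\,v^\top A^{-1}v=1-cI_f .
\]
The Sherman--Morrison formula then applies on $\mathcal T$:
\[
(A-cvv^\top)^{-1}=A^{-1}+\frac{c\,A^{-1}vv^\top A^{-1}}{1-cI_f}.
\]
Sandwiching with $v$ yields
\[
I_\pi=I_f+\frac{cI_f^2}{1-cI_f}=\frac{I_f}{1-cI_f},
\]
which is \eqref{eq:Ipi-If}. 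The Moore--Penrose inverse agrees with the inverse on $\mathcal T$ and $v\in\mathcal T$, so no contribution from $\operatorname{span}\{\one\}$ arises.

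\textbf{Step 3: the inequality and the equality cases.} We have $I_f\ge0$ and $0<1-cI_f\le1$, so $I_\pi\ge I_f$. If $v=0$, both sides vanish. If $\pi\in\{0,1\}$, then $c=0$ and the two sides coincide. Conversely, if $v\neq0$ and $\pi\in(0,1)$, then $I_f>0$ and $c>0$, so $1-cI_f\in(0,1)$ and the inequality is strict.

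The only delicate point is the positivity $1-cI_f>0$, which is what makes the formula well defined. I expect to obtain it as above from the positive definiteness of $\Sigma_\pi$ on $\mathcal T$, rather than from any direct bound on $I_f$.
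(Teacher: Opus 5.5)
Your proof is correct and follows the paper's argument essentially step for step. You identify $v^\top(\Sigma_\pi^{\rm mix})^+v=I_f$ by the same explicit solve $\diag(f_\pi)^{-1}v$ modulo $\one$, and then apply Sherman--Morrison on $\mathcal T$ to $\Sigma_\pi=\Sigma_\pi^{\rm mix}-\pi(1-\pi)vv^\top$. The differences are minor: you justify $1-\pi(1-\pi)I_f>0$ through the matrix determinant lemma, using that both $\Sigma_\pi^{\rm mix}$ and $\Sigma_\pi$ are positive definite on $\mathcal T$, whereas the paper simply asserts it from positivity of $\Sigma_\pi$. You also handle $\pi\in\{0,1\}$ uniformly through $\pi(1-\pi)=0$ instead of as a separate case.
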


\begin{proof}
For $u\in\mathcal T$, put $x_y=u(y)/f_\pi(y)$.  Since $\sum_yu(y)=0$,
\[
        \{\diag(f_\pi)-f_\pi f_\pi^\top\}x
        =u-f_\pi\sum_yu(y)=u.
\]
The Moore--Penrose solution differs from $x$ only by a multiple of $\one$, which is orthogonal to $u$.  Therefore
\[
        u^\top(\Sigma_\pi^{\rm mix})^+u
        =u^\top x=\sum_y \frac{u(y)^2}{f_\pi(y)}.
\]
Thus $I_f=v^\top(\Sigma_\pi^{\rm mix})^+v$.  If $\pi\in\{0,1\}$, then \eqref{eq:rank-one-correction} has no rank-one correction and the formula is immediate.  For $\pi\in(0,1)$, \eqref{eq:rank-one-correction} gives, on $\mathcal T$,
\[
        \Sigma_\pi=\Sigma_\pi^{\rm mix}-\pi(1-\pi)vv^\top .
\]
Sherman--Morrison on $\mathcal T$ gives \eqref{eq:Ipi-If}.  Positivity of $\Sigma_\pi$ implies the denominator is positive.  If $v\ne0$ and $\pi\in(0,1)$, then $I_f>0$ and the denominator is strictly smaller than one, so $I_\pi>I_f$.
\end{proof}

\begin{remark}[Canonical case]
For $\pi=0$, $\Sigma_0^+$ acts on $\mathcal T$ as $\diag(1/W_0)$ in quadratic forms.  Hence
\[
        I_0=v^\top\Sigma_0^+v=\sum_y\frac{(W_1(y)-W_0(y))^2}{W_0(y)}=\chi^2(W_1\Vert W_0).
\]
\end{remark}

\begin{proposition}[Endpoint maximum for the tangent Fisher profile]\label[proposition]{prop:boundary-worstcase}
Under the full-support assumption, the function
\[
        \pi\longmapsto I_\pi=v^\top\Sigma_\pi^+v
\]
is twice continuously differentiable and convex on $[0,1]$.  Consequently
\[
        \max_{\pi\in[0,1]} I_\pi
        =\max(I_0,I_1)
        =\max\bigl(\chi^2(W_1\Vert W_0),\,\chi^2(W_0\Vert W_1)\bigr).
\]
Thus the continuum tangent profile has its largest value at one or both endpoints.  The endpoint $\pi=0$ is the canonical $W_0$-baseline orientation, while $\pi=1$ is the corresponding endpoint after interchanging $W_0$ and $W_1$.  This proposition concerns the leading profile $I_\pi$; it does not assert that the exact finite-$n$ maximizer in \eqref{eq:worst-case-dp} is $k=0$ or $k=n-1$.
\end{proposition}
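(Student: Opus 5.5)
The plan is to show that $\pi\mapsto I_\pi$ is the restriction to a segment of a convex matrix functional. Work on the tangent space $\mathcal T$, where $\Sigma_\pi$ acts as a positive definite operator.

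Fix a basis of $\mathcal T$, for instance via an isometry $B\in\R^{d\times(d-1)}$ with orthonormal columns spanning $\mathcal T$. Then $\Sigma_\pi^+=B(B^\top\Sigma_\pi B)^{-1}B^\top$, and with $w=B^\top v$ we get
\[
I_\pi=w^\top A(\pi)^{-1}w,\qquad A(\pi)=(1-\pi)A_0+\pi A_1,\quad A_b=B^\top\Sigma_bB .
\]
The justification for the pseudoinverse identity is that $\Sigma_\pi$ maps $\mathcal T$ to itself: $\Sigma_b\one=W_b-W_b=0$ and symmetry give this. Also $v\in\mathcal T$, so $BB^\top v=v$. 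By the full-support discussion preceding \eqref{eq:Ipi}, $A(\pi)\succ0$ for every $\pi\in[0,1]$. Since $A(\pi)$ is affine in $\pi$ and invertible on a neighborhood of $[0,1]$, the map $\pi\mapsto A(\pi)^{-1}$ is real-analytic there, so $I_\pi$ is $C^\infty$, in particular $C^2$.

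For convexity, differentiate using $\frac{d}{d\pi}A^{-1}=-A^{-1}\dot AA^{-1}$ with $\dot A=A_1-A_0$ constant. This gives
\[
\frac{d^2}{d\pi^2}I_\pi=2\,w^\top A^{-1}\dot AA^{-1}\dot AA^{-1}w=2\,u^\top A(\pi)^{-1}u\ge0,\qquad u=\dot AA(\pi)^{-1}w,
\]
because $A(\pi)^{-1}\succ0$. Alternatively, one can cite joint convexity of $(x,A)\mapsto x^\top A^{-1}x$ on $\R^{d-1}\times\{A\succ0\}$ (a Schur-complement argument) and restrict to the segment. A convex continuous function on $[0,1]$ attains its maximum at an endpoint, so $\max_\pi I_\pi=\max(I_0,I_1)$.

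It remains to identify the endpoints. The Canonical case remark gives $I_0=\chi^2(W_1\Vert W_0)$. By symmetry under interchanging $W_0$ and $W_1$, the same computation gives $I_1=v^\top\Sigma_1^+v=\sum_y v(y)^2/W_1(y)=\chi^2(W_0\Vert W_1)$, since $v\mapsto-v$ leaves the quadratic form unchanged. Equivalently, this is \cref{prop:fisher-proxy} at $\pi=1$, where $f_1=W_1$ and there is no rank-one correction.

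The only real obstacle is bookkeeping: the pseudoinverse must be handled on $\mathcal T$ rather than on $\R^d$, where each $\Sigma_\pi$ is singular. The reduction to $A(\pi)$ through $B$ settles this. The remaining steps are routine.
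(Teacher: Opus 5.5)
Your proposal is correct and follows essentially the same route as the paper. You restrict to $\mathcal T$ (your orthonormal basis $B$ just makes the paper's restriction $\Sigma_\pi|_{\mathcal T}=A_0+\pi D$ concrete), differentiate the inverse twice to get $I_\pi''=2u^\top A_\pi^{-1}u\ge0$ with $u=DA_\pi^{-1}v$, and identify the endpoints via the canonical-case remark and the interchange of $W_0$ and $W_1$; your extra remarks (real-analyticity, and joint convexity of $(x,A)\mapsto x^\top A^{-1}x$) are valid but not needed.
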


\begin{proof}
Restrict all matrices to the tangent space $\mathcal T=\{x:\one^\top x=0\}$.  Under full support the restrictions
\[
        A_\pi:=\Sigma_\pi|_{\mathcal T}
        =A_0+\pi D,
        \qquad D:=(\Sigma_1-\Sigma_0)|_{\mathcal T},
\]
are positive definite for every $\pi\in[0,1]$.  Since $v\in\mathcal T$,
\[
        I_\pi=v^\top A_\pi^{-1}v.
\]
The inverse differentiation identity
\[
        \frac{\mathrm d}{\mathrm d\pi}A_\pi^{-1}
        =-A_\pi^{-1}DA_\pi^{-1}
\]
gives, after a second differentiation,
\[
\begin{aligned}
        I_\pi''
        &=2v^\top A_\pi^{-1}DA_\pi^{-1}DA_\pi^{-1}v\\
        &=2(DA_\pi^{-1}v)^\top A_\pi^{-1}(DA_\pi^{-1}v)\ge0.
\end{aligned}
\]
Hence $I_\pi$ is convex.  Every convex real-valued function on $[0,1]$ is bounded above by the larger endpoint value, so
$\max_{\pi\in[0,1]}I_\pi=\max(I_0,I_1)$.  The identity for $I_0$ was proved in the preceding remark.  Interchanging $W_0$ and $W_1$ gives
\[
        I_1=\sum_y\frac{(W_0(y)-W_1(y))^2}{W_1(y)}
        =\chi^2(W_0\Vert W_1).
\]
\end{proof}

\begin{lemma}[Uniform inverse and Lipschitz control]\label[lemma]{lem:sigma-lipschitz}
Fix $0<\eta<1/2$ and assume full support.  On every compact interval $J_\eta=[\eta/2,1-\eta/2]$ there are constants $0<c_\eta<C_\eta<\infty$ such that, for all $\pi\in J_\eta$,
\[
        c_\eta I_{\mathcal T}\preceq \Sigma_\pi|_{\mathcal T}\preceq C_\eta I_{\mathcal T},
        \qquad
        \|\Sigma_\pi^+\|_{\mathcal T\to\mathcal T}\le c_\eta^{-1}.
\]
Moreover the pseudoinverse is Lipschitz on $J_\eta$ in the tangent-space operator norm:
\[
        \|\Sigma_\pi^+-\Sigma_{\pi'}^+\|_{\mathcal T\to\mathcal T}
        \le C_\eta |\pi-\pi'|,
        \qquad \pi,\pi'\in J_\eta .
\]
Consequently $\pi\mapsto s_\pi=\Sigma_\pi^+v$ and $\pi\mapsto I_\pi=v^\top\Sigma_\pi^+v$ are uniformly Lipschitz on $J_\eta$.
\end{lemma}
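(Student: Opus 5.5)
The plan is to work entirely with the restrictions $A_\pi:=\Sigma_\pi|_{\mathcal T}=A_0+\pi D$, $D=(\Sigma_1-\Sigma_0)|_{\mathcal T}$, as in the proof of \cref{prop:boundary-worstcase}. In fact the lower bound can be made uniform on all of $[0,1]$, which contains $J_\eta$. For $x\in\mathcal T$ with $\|x\|=1$ we have
\[
x^\top\Sigma_\pi x=(1-\pi)\Var_{W_0}\{x(Y)\}+\pi\Var_{W_1}\{x(Y)\}\ge \min_b\Var_{W_b}\{x(Y)\}.
\]
Each map $x\mapsto\Var_{W_b}\{x(Y)\}$ is continuous on the compact unit sphere of $\mathcal T$. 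It is strictly positive there by the full-support argument given before \eqref{eq:Ipi}. Hence its minimum $c_b>0$, and $c_\eta:=\min(c_0,c_1)$ works. I would also record the explicit bound $\Var_{W_b}\{x(Y)\}\ge \delta_{\rm full}\min_{a}\sum_y(x_y-a)^2=\delta_{\rm full}\|x\|^2$ on $\mathcal T$, since the minimizing $a$ is the coordinate mean, which is $0$ on $\mathcal T$. This gives $c_\eta=\delta_{\rm full}$. For the upper bound, $x^\top\Sigma_bx\le\sum_yW_b(y)x_y^2\le\|x\|^2$, so $C_\eta=1$ suffices; any constant can then be enlarged so that $c_\eta<C_\eta$.

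Since $\Sigma_\pi^+$ is by definition $A_\pi^{-1}$ on $\mathcal T$ and zero on $\operatorname{span}\{\one\}$, the bound $A_\pi\succeq c_\eta I_{\mathcal T}$ immediately gives $\|\Sigma_\pi^+\|_{\mathcal T\to\mathcal T}=\|A_\pi^{-1}\|\le c_\eta^{-1}$.

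For the Lipschitz statement I would use the resolvent identity
\[
A_\pi^{-1}-A_{\pi'}^{-1}=A_\pi^{-1}(A_{\pi'}-A_\pi)A_{\pi'}^{-1}=(\pi'-\pi)A_\pi^{-1}DA_{\pi'}^{-1},
\]
which yields $\|\Sigma_\pi^+-\Sigma_{\pi'}^+\|\le c_\eta^{-2}\|D\|\,|\pi-\pi'|$ with $\|D\|\le 2$. The constant $C_\eta$ is then enlarged to $\max(1,2c_\eta^{-2})$.

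The consequences follow directly because $v\in\mathcal T$:
\[
\|s_\pi-s_{\pi'}\|\le C_\eta\|v\|\,|\pi-\pi'|,\qquad |I_\pi-I_{\pi'}|=|v^\top(s_\pi-s_{\pi'})|\le C_\eta\|v\|^2|\pi-\pi'|.
\]
One small point deserves a line of justification: $\Sigma_\pi^+$ maps $\mathcal T$ into $\mathcal T$. This holds because $\Sigma_\pi\one=0$, so $\Sigma_\pi$ is symmetric with $\mathcal T$ invariant, and the pseudoinverse coincides with $A_\pi^{-1}$ on $\mathcal T$.

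The only real obstacle is the uniform lower eigenvalue bound, and it is handled by the explicit $\delta_{\rm full}$ estimate above. Everything else is routine linear algebra.
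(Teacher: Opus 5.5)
Your proof is correct and follows the paper's route: restrict to $\mathcal T$, write $A_\pi=A_0+\pi D$, bound the spectrum of $A_\pi$ uniformly, and obtain the Lipschitz constant $c^{-2}\|D\|$ for $A_\pi^{-1}$. The differences are minor. Your explicit bounds $\delta_{\rm full}\|x\|^2\le x^\top\Sigma_\pi x\le\|x\|^2$ on $\mathcal T$ replace the paper's continuity-of-eigenvalues/compactness step, and they show the constants hold on all of $[0,1]$ independently of $\eta$. Your resolvent identity replaces the paper's step of differentiating $A_\pi^{-1}$ and integrating.
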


\begin{proof}
On $\mathcal T$ write $A_\pi=\Sigma_\pi|_{\mathcal T}=A_0+\pi D$, where $D=(\Sigma_1-\Sigma_0)|_{\mathcal T}$.  Full support makes $A_\pi$ positive definite for every $\pi\in[0,1]$.  Continuity of the smallest and largest eigenvalues on the compact interval $J_\eta$ gives the displayed uniform spectral bounds.  The inverse differentiation identity
\[
        \frac{d}{d\pi}A_\pi^{-1}=-A_\pi^{-1}DA_\pi^{-1}
\]
therefore gives
\[
        \left\|\frac{d}{d\pi}A_\pi^{-1}\right\|
        \le c_\eta^{-2}\|D\|,
        \qquad \pi\in J_\eta.
\]
Integrating this derivative between $\pi$ and $\pi'$ proves the Lipschitz bound.  The statements for $s_\pi$ and $I_\pi$ follow by multiplying by the fixed vector $v\in\mathcal T$.
\end{proof}

\section{Exact likelihood-ratio identities}\label{sec:lr}

Define
\[
        w(y)=\frac{W_1(y)}{W_0(y)},\qquad r(y)=w(y)-1.
\]

\begin{lemma}[Canonical likelihood ratio]\label[lemma]{lem:canonical-lr}
Assume $\delta_\star>0$.  For every histogram $N$ with $\sum_yN_y=n$,
\[
        L_{n,0}(N):=\frac{\dd T_{n,1}}{\dd T_{n,0}}(N)
        =\frac1n\sum_{y\in\Y}N_y\frac{W_1(y)}{W_0(y)}.
\]
Equivalently, under $T_{n,0}$, if $Y_i\stackrel{\rm i.i.d.}{\sim}W_0$ and $U_n=n^{-1}\sum_{i=1}^nr(Y_i)$, then $L_{n,0}=1+U_n$.
\end{lemma}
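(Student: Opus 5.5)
The plan is to compute both histogram laws directly as multinomial-type probabilities and take their ratio.

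Under $T_{n,0}$ all $n$ messages are i.i.d.\ from $W_0$, so for a histogram $N$ with $\sum_y N_y=n$,
\[
        T_{n,0}(N)=\binom{n}{N}\prod_{y}W_0(y)^{N_y},
\]
where $\binom{n}{N}=n!/\prod_y N_y!$ is the multinomial coefficient.

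Under $T_{n,1}$, by exchangeability we may take user $1$ to hold the one. Then $Y_1\sim W_1$ and $Y_2,\dots,Y_n\stackrel{\rm i.i.d.}{\sim}W_0$. We condition on the value $y$ of $Y_1$; the remaining $n-1$ messages must then form the histogram $N-e_y$, where $e_y$ is the $y$-th unit vector. This gives
\[
        T_{n,1}(N)=\sum_{y:N_y\ge1}W_1(y)\binom{n-1}{N-e_y}\prod_{z}W_0(z)^{N_z-\one\{z=y\}}.
\]

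Two elementary identities turn each summand into a multiple of $T_{n,0}(N)$:
\[
        \binom{n-1}{N-e_y}=\frac{N_y}{n}\binom{n}{N},
        \qquad
        \prod_{z}W_0(z)^{N_z-\one\{z=y\}}=\frac{1}{W_0(y)}\prod_{z}W_0(z)^{N_z}.
\]
The second identity uses $\delta_\star>0$. Substituting both, each summand becomes $T_{n,0}(N)\,(N_y/n)\,W_1(y)/W_0(y)$.

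Terms with $N_y=0$ vanish anyway, so the sum may be extended over all $y\in\Y$. Since $\delta_\star>0$ gives $T_{n,0}(N)>0$ for every admissible $N$, we can divide and obtain
\[
        L_{n,0}(N)=\frac1n\sum_{y}N_y\,w(y).
\]
This division also shows $T_{n,1}\ll T_{n,0}$, so the ratio is the genuine Radon--Nikodym derivative on the whole support.

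For the equivalent form, write $N_y=\sum_i\one\{Y_i=y\}$. Then
\[
        \frac1n\sum_y N_y\,w(y)=\frac1n\sum_{i=1}^n w(Y_i).
\]
Since $w=1+r$, this equals $1+U_n$.

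The only delicate points are the bookkeeping: restricting the sum to $N_y\ge1$, then justifying its extension to all $y$, and keeping track of where positivity of $W_0$ is used. I expect no real obstacle beyond these.
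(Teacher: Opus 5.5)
Your proposal is correct and follows the paper's proof: write $T_{n,0}(N)$ as a multinomial mass, condition on the changed user's output $y$ under $T_{n,1}$ so that the remaining $n-1$ messages form the histogram $N-e_y$ drawn from $W_0$, and divide by $T_{n,0}(N)$. You simply make explicit the multinomial-coefficient identity $\binom{n-1}{N-e_y}=\frac{N_y}{n}\binom{n}{N}$ and the places where $\delta_\star>0$ is used, which the paper leaves implicit.
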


\begin{proof}
Under $T_{n,0}$ the histogram mass is
\[
        T_{n,0}(N)=\frac{n!}{\prod_y N_y!}\prod_y W_0(y)^{N_y}.
\]
Under $T_{n,1}$, condition on the changed user's output $y$ and draw the remaining $n-1$ messages from $W_0$.  Dividing the resulting convolution by $T_{n,0}(N)$ gives $\sum_y(N_y/n)W_1(y)/W_0(y)$.
\end{proof}

\begin{lemma}[General neighboring pair]\label[lemma]{lem:general-lr}
Fix $k\in\{0,\ldots,n-1\}$ and assume $\delta_\star>0$.  Let $P=T_{n,k}$ and $Q=T_{n,k+1}$.  Under $P$, let $Y_\star\sim W_0$ be the message of the unique user whose input changes from $0$ to $1$.  Then
\[
        L_{n,k}(N):=\frac{\dd Q}{\dd P}(N)
        =\E\left[\frac{W_1(Y_\star)}{W_0(Y_\star)}\mid N\right]
        =1+\E[r(Y_\star)\mid N].
\]
These identities hold $P$-almost surely.  More explicitly, for every
$N\in\operatorname{supp}P$,
\begin{equation}\label{eq:general-lr-sum}
        L_{n,k}(N)=
        \frac{\sum_y W_1(y)T_{n-1,k}(N-e_y)}{\sum_y W_0(y)T_{n-1,k}(N-e_y)},
\end{equation}
where terms with negative coordinates are interpreted as zero.
Off $\operatorname{supp}P$ we set $L_{n,k}(N)=0$; this is a valid version of
$\dd Q/\dd P$ because $Q\ll P$.
\end{lemma}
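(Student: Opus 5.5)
We couple the two experiments on a common probability space.  Let $Y_1,\ldots,Y_n$ be independent, with the $n-k$ users whose input is $0$ under $P$ sending messages $\sim W_0$, and the $k$ users with input $1$ sending messages $\sim W_1$.  Designate one $0$-user as the changed user $\star$.  Under $P$ its message $Y_\star$ is $\sim W_0$, and the histogram of the other $n-1$ users, $N^{-}=N-e_{Y_\star}$, has law $T_{n-1,k}$, independently of $Y_\star$.  Under $Q$ the same construction applies with $Y_\star\sim W_1$.  Since $N=N^-+e_{Y_\star}$, summing over the value $y$ of the changed message gives the two convolution formulas
\[
        P(N)=\sum_y W_0(y)\,T_{n-1,k}(N-e_y),
        \qquad
        Q(N)=\sum_y W_1(y)\,T_{n-1,k}(N-e_y).
\]
Here terms with negative coordinates vanish because $T_{n-1,k}$ gives them zero mass.

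The next step is absolute continuity.  If $P(N)=0$, then every term $W_0(y)T_{n-1,k}(N-e_y)$ vanishes.  Because $\delta_\star>0$, this forces $T_{n-1,k}(N-e_y)=0$ for all $y$, and hence $Q(N)=0$.  So $Q\ll P$.  Dividing the two convolution formulas on $\supp P$ then yields \eqref{eq:general-lr-sum}.  Off $\supp P$, any choice is a valid version of $\dd Q/\dd P$, and in particular the value $0$ is.

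For the conditional-expectation form, use the joint law of $(Y_\star,N)$ under $P$:
\[
        P(Y_\star=y,\,N)=W_0(y)\,T_{n-1,k}(N-e_y).
\]
Bayes' rule gives, on $\supp P$,
\[
        \Pp(Y_\star=y\mid N)=\frac{W_0(y)\,T_{n-1,k}(N-e_y)}{P(N)}.
\]
Since $W_0(y)>0$ for all $y$, the ratio $w(y)=W_1(y)/W_0(y)$ is finite, and we can compute
\[
        \E[w(Y_\star)\mid N]=\sum_y \frac{W_1(y)}{W_0(y)}\cdot\frac{W_0(y)\,T_{n-1,k}(N-e_y)}{P(N)}=\frac{Q(N)}{P(N)}.
\]
This is the claimed identity, and it holds $P$-almost surely because the conditional expectation is determined only on $\supp P$.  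The form $1+\E[r(Y_\star)\mid N]$ follows from $r=w-1$.

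There is no real obstacle in this argument.  The only delicate point is bookkeeping on supports: the identities must be asserted only on $\supp P$, and $\delta_\star>0$ is what guarantees both $Q\ll P$ and the finiteness of $w$.  Reverse domination $P\ll Q$ is not claimed.  As a consistency check, for $k=0$ we have $T_{n-1,0}(N-e_y)/T_{n,0}(N)=N_y/(nW_0(y))$, so the formula reduces to \cref{lem:canonical-lr}.
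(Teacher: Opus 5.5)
Your proof is correct and follows the paper's argument. Like the paper, you condition on the changed user's message to obtain the two convolution formulas with the common factor $T_{n-1,k}(N-e_y)$, use $\delta_\star>0$ to get $Q\ll P$, divide on $\supp P$, and identify the ratio as the Bayes posterior mean of $W_1(Y_\star)/W_0(Y_\star)$; you simply write out the Bayes computation that the paper states in one line, and your $k=0$ consistency check against \cref{lem:canonical-lr} is a correct extra.
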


\begin{proof}
Condition on $Y_\star=y$.  The remaining $n-1$ users consist of $n-k-1$ zeros and $k$ ones under both hypotheses.  Hence
\[
        P(N)=\sum_y W_0(y)T_{n-1,k}(N-e_y),\qquad
        Q(N)=\sum_y W_1(y)T_{n-1,k}(N-e_y).
\]
If $Q(N)>0$, then for some $y$,
$W_1(y)T_{n-1,k}(N-e_y)>0$.  Since $\delta_\star>0$ gives $W_0(y)>0$,
the corresponding term in $P(N)$ is positive.  Thus $Q\ll P$.  On
$\operatorname{supp}P$, division gives \eqref{eq:general-lr-sum}, which is
Bayes' formula for the conditional expectation.  Outside that support,
positivity of every $W_0(y)$ forces all terms $T_{n-1,k}(N-e_y)$ to vanish,
so the numerator vanishes as well.
\end{proof}

\section{Linearization and residual decomposition}\label{sec:linearization}

Throughout this section assume $\delta_{\rm full}>0$, fixed alphabet size $d$, and an interior composition condition $k/n\in[\eta,1-\eta]$ for some fixed $\eta\in(0,1/2)$.

\subsection{Pointwise conditional-expectation linearization}

The following pointwise statement is useful for intuition and for weak convergence.  The stronger moment decomposition in \cref{subsec:residual-decomp} is used for the JSD expansion and local privacy-curve results.

\begin{theorem}[Conditional-expectation linearization]\label{thm:linearization}
Let $P=T_{n,k}$ and $U_{n,k}=L_{n,k}-1$.  Define $\pi_n=k/(n-1)$ and $s_n=\Sigma_{\pi_n}^+v$.  Let
\[
        A_n=\left\{\|N-\E_PN\|_\infty\le n^{5/8}\right\}.
\]
There are constants $c,C>0$, depending only on $(d,\delta_{\rm full},\eta)$, such that $P(A_n^c)\le C\exp(-cn^{1/4})$ and on $A_n$,
\[
        U_{n,k}=\frac1n s_n^\top(N-\E_PN)+\widetilde R_n(N),
        \qquad \sup_{N\in A_n}|\widetilde R_n(N)|\le Cn^{-3/4}.
\]
\end{theorem}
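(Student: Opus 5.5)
The proof starts from the exact ratio form \eqref{eq:general-lr-sum} of \cref{lem:general-lr}. Write $G(m)=T_{n-1,k}(m)$ for the law of the histogram $M$ of the $n-1$ users other than the changed one. Then
\[
        U_{n,k}(N)=\frac{\sum_y v(y)\,G(N-e_y)}{\sum_y W_0(y)\,G(N-e_y)}
        =\frac{\sum_y v(y)\rho_y(N)}{\sum_y W_0(y)\rho_y(N)},
        \qquad \rho_y(N)=\frac{G(N-e_y)}{G(N-e_{y_0})},
\]
for a fixed reference symbol $y_0$. The numerator uses $\sum_y v(y)=0$, and $G(N-e_{y_0})>0$ on $A_n$ for large $n$. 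Everything therefore reduces to a uniform expansion of the lattice ratios $\rho_y$ over the region $A_n$.

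\emph{Tail bound.} The coordinates $N_y$ are sums of $n$ independent bounded indicators. Hoeffding's inequality together with a union bound over $d$ symbols gives
\[
        P(A_n^c)\le 2d\exp(-2n^{1/4}),
\]
which is the claimed bound.

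\emph{Ratio expansion.} The main step is a local limit theorem for $M$, the sum of $n-1$ independent non-identical one-hot vectors with fixed composition $k$ ones. I would invoke \cref{lem:ratio} from the appendix, or prove it via an Edgeworth/LCLT argument on the lattice $\{m\in\Z^d:\one^\top m=n-1\}$. It should state that, uniformly for $\|m-\E M\|_\infty\le 2n^{5/8}$,
\[
        \log\frac{G(m+e_y-e_{y'})}{G(m)}
        =-\frac{1}{n-1}(e_y-e_{y'})^\top\Sigma_{\pi_n}^+(m-\E M)+O(n^{-1})+O\!\left(\frac{\|m-\E M\|^2}{n^2}\right).
\]
Here $\E M=(n-1-k)W_0+kW_1$ and $\Cov M=(n-1)\Sigma_{\pi_n}$, which explains the choice $\pi_n=k/(n-1)$. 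This is the derivative of the Gaussian log-density $-\frac{1}{2(n-1)}x^\top\Sigma_{\pi_n}^+x$, plus a cubic Edgeworth correction whose gradient is $O(\|x\|^2/n^2+1/n)$.

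Non-degeneracy comes from two facts. \Cref{lem:sigma-lipschitz} gives uniform bounds on $\Sigma_{\pi_n}$, and full support gives aperiodicity of each summand on the lattice. For $x=O(n^{5/8})$, the error term is $O(n^{-3/4})$ and the linear term is $O(n^{-3/8})$. This uniform relative LCLT in a moderate-deviation window is the step I expect to be the \textbf{main obstacle}. It requires tilting (cf.\ \cref{lem:tilt-control}) to handle $\|x\|\gg\sqrt n$, and uniformity over $k/n\in[\eta,1-\eta]$.

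\emph{Assembling.} Apply the expansion with $m=N-e_{y_0}$; note $N-\E_PN=(m-\E M)+(e_{y_0}-W_0)$, a bounded shift. This yields
\[
        \rho_y=\exp(-a_y+\epsilon_y),
        \qquad a_y=\frac1n(e_y-e_{y_0})^\top s\text{-type linear form},
\]
more precisely $a_y=-\frac1n(e_y-e_{y_0})^\top\Sigma_{\pi_n}^+(N-\E_PN)$ up to sign conventions, with $a_y=O(n^{-3/8})$ and $\epsilon_y=O(n^{-3/4})$.

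Expand $\rho_y=1+b_y+O(n^{-3/4})$, where $b_y=\frac1n e_y^\top\Sigma_{\pi_n}^+(N-\E_PN)-c$ for a constant $c$ independent of $y$, and quadratic terms are $O(n^{-3/4})$. Then:
\begin{itemize}
\item the numerator is $\sum_y v(y)b_y+O(n^{-3/4})=\frac1n v^\top\Sigma_{\pi_n}^+(N-\E_PN)+O(n^{-3/4})$, since $\one^\top v=0$ kills the constant $c$;
\item the denominator is $1+O(n^{-3/8})$, and because the numerator is already $O(n^{-3/8})$, dividing costs only $O(n^{-3/4})$.
\end{itemize}
With $s_n=\Sigma_{\pi_n}^+v$, this gives $U_{n,k}=\frac1n s_n^\top(N-\E_PN)+\widetilde R_n$ with $\sup_{A_n}|\widetilde R_n|\le Cn^{-3/4}$, and all constants are uniform in $(d,\delta_{\rm full},\eta)$.
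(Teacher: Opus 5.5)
Your proposal is correct and follows essentially the same route as the paper. Both arguments apply Bayes' formula to the changed user's message, use the appendix ratio lemma for the shifts $e_y-e_{y_0}$ on the $n^{5/8}$ window (the remaining $n-1$ users having covariance $(n-1)\Sigma_{\pi_n}$), cancel the common reference term through $\one^\top v=0$ and the normalization, and absorb the bounded shift $e_{y_0}-W_0$ and the replacement $1/(n-1)\to1/n$ into the $O(n^{-3/4})$ remainder; your use of Hoeffding instead of Bernstein, and of the numerator/denominator ratio instead of the posterior softmax, is only cosmetic.
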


\begin{proof}
Put $m=n-1$ and write $N=e_{Y_\star}+S$, where $S$ is the histogram of the remaining $m$ users.  Each coordinate of $N-\E_PN$ is a sum of independent centered random variables bounded by one.  Bernstein's inequality and a union bound over the fixed alphabet give
\[
        P(A_n^c)\le 2d\exp\{-c n^{1/4}\}.
\]
Fix the baseline symbol $d$, set $s=N-e_d$, and put $t_y=e_y-e_d$.  On $A_n$, for a fixed constant $A=A(d)$ and all sufficiently large $n$,
\[
        \|(s-\E S)^-\|_2\le A m^{5/8}.
\]
Moreover, every $N-e_y$ has nonnegative coordinates because every coordinate of $\E S$ is at least $m\delta_{\rm full}$.  Hence all lattice probabilities below are positive.  Bayes' formula gives
\[
        \Pp(Y_\star=y\mid N)
        =\frac{W_0(y)\Pp(S=s-t_y)}
        {\sum_z W_0(z)\Pp(S=s-t_z)}.
\]
By \cref{lem:ratio}, uniformly on $A_n$,
\[
        \log\frac{\Pp(S=s-t_y)}{\Pp(S=s)}
        =t_y^\top\theta_n+c_y(N),
        \qquad
        \theta_n=\frac1m\Sigma_{\pi_n}^+(s-\E S),
\]
where $\max_y|c_y(N)|\le Cn^{-3/4}$; the deterministic quadratic term of order $m^{-1}$ is included in $c_y$.  Since $t_y=e_y-e_d$, the common term $-e_d^\top\theta_n$ cancels in the posterior normalization.  Moreover $\|\theta_n\|=O(n^{-3/8})$ on $A_n$.  Taylor expansion of the finite-dimensional softmax, with a remainder bounded uniformly for $\max_y|e_y^\top\theta_n+c_y|$ small, therefore yields
\[
        \Pp(Y_\star=y\mid N)
        =W_0(y)\{1+(e_y-W_0)^\top\theta_n\}+O(n^{-3/4}).
\]
Multiplying by $r(y)$ and summing uses $W_0(y)r(y)=v(y)$ and $\sum_yv(y)=0$, so
\[
        \E[r(Y_\star)\mid N]=v^\top\theta_n+O(n^{-3/4}).
\]
Finally,
\[
        s-\E S=N-\E_PN+(W_0-e_d).
\]
The bounded last term contributes $O(n^{-1})$, and replacing $m^{-1}$ by $n^{-1}$ contributes $O(n^{-11/8})$ on $A_n$.  Thus
\[
        v^\top\theta_n
        =\frac1n s_n^\top(N-\E_PN)+O(n^{-1}),
\]
which proves the assertion.
\end{proof}

\subsection{Regression and residual bounds}\label{subsec:residual-decomp}

The next results reorganize the linearization in $L^p$ rather than pointwise form.  This modular form is the one used later.

\begin{lemma}[Uniform histogram moments]\label[lemma]{lem:hist-moments}
For every integer $p\ge1$ there is $C_p=C_p(d,\delta_{\rm full},\eta)$ such that, uniformly over $k/n\in[\eta,1-\eta]$,
\[
        \E\left\|\frac{N-\E N}{\sqrt n}\right\|_2^p\le C_p.
\]
The same bound holds for histograms of $m$ observations with $m$ in place of $n$ and interior compositions.
\end{lemma}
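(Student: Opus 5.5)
The plan is to reduce the vector moment bound to scalar moment bounds for sums of bounded independent centered variables, coordinate by coordinate. Since the alphabet size $d$ is fixed, norm equivalence gives
\[
        \|x\|_2^p\le d^{p/2}\max_y|x_y|^p\le d^{p/2}\sum_{y}|x_y|^p .
\]
It therefore suffices to bound $\E|N_y-\E N_y|^p\le C_p n^{p/2}$ for each fixed $y\in\Y$, uniformly in $k$.

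For fixed $y$, write $N_y-\E N_y=\sum_{i=1}^n\xi_i$ with $\xi_i=\one\{Y_i=y\}-W_{x_i}(y)$. The $\xi_i$ are independent and centered, since the users' messages are independent given the dataset. They also satisfy $|\xi_i|\le1$, whatever the composition. Two routes give the scalar bound, and I would use the first.

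\emph{Hoeffding route.} Hoeffding's inequality gives $\Pp(|\sum\xi_i|\ge u\sqrt n)\le 2e^{-u^2/2}$. Integrating the tail, $\E|\sum\xi_i/\sqrt n|^p=\int_0^\infty pu^{p-1}\Pp(|\sum\xi_i|\ge u\sqrt n)\,\dd u$, yields a constant depending only on $p$.

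\emph{Rosenthal route.} Rosenthal's (or Marcinkiewicz--Zygmund's) inequality gives $\E|\sum\xi_i|^p\le C_p\{(\sum\E\xi_i^2)^{p/2}+\sum\E|\xi_i|^p\}\le C_p(n^{p/2}+n)$.

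Either way, the constant is independent of $k$, $d$ beyond the factor $d^{p/2}$, and even of $\delta_{\rm full}$ and $\eta$. So the claimed uniformity over interior compositions holds trivially, and in fact over all $k$. For non-even or fractional use one could pass through Lyapunov/Jensen, but integer $p$ is all that is asked.

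The statement for $m$ observations is identical with $n$ replaced by $m$. The decomposition into $m$ independent bounded centered summands does not depend on which composition is used, so the same argument applies verbatim.

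There is no genuine obstacle here. The only point to check is that the constant does not depend on $k/n$, and this is immediate because each summand is bounded by one irrespective of whether the corresponding user holds a $0$ or a $1$.
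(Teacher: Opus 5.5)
Your proposal is correct and follows the paper's proof. Both arguments use the coordinatewise decomposition into independent centered summands bounded by one, integrate an exponential tail bound via $\E|X|^p=p\int_0^\infty t^{p-1}\Pp(|X|>t)\,\dd t$, and finish with norm equivalence over the fixed alphabet. The only difference is that you use Hoeffding's sub-Gaussian tail where the paper uses Bernstein's $\exp\{-c\min(t^2/n,t)\}$ tail; this is an inessential variation, and as you note it makes plain that $C_p$ need not depend on $\delta_{\rm full}$ or $\eta$.
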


\begin{proof}
For a fixed symbol $y$, write $N_y-\E N_y=\sum_{i=1}^n V_{i,y}$, where the $V_{i,y}$ are independent, centered, and satisfy $|V_{i,y}|\le1$.  Bernstein's inequality gives, with constants uniform over the allowed compositions,
\[
        \Pp(|N_y-\E N_y|>t)
        \le 2\exp\left\{-c\min\left(\frac{t^2}{n},t\right)\right\},
        \qquad t\ge0.
\]
Using $\E|X|^p=p\int_0^\infty t^{p-1}\Pp(|X|>t)\,dt$ and the change of variables $t=\sqrt n\,u$ gives
\[
        \E|N_y-\E N_y|^p\le C_p n^{p/2}.
\]
For fixed $d$, norm equivalence gives
\[
        \|x\|_2^p\le C_{p,d}\sum_{y=1}^d|x_y|^p,
\]
which proves the claim.  The argument is unchanged with $m$ in place of $n$.
\end{proof}

\begin{lemma}[Exact regression identities]\label[lemma]{lem:regression}
Let $P=T_{n,k}$, $Q=T_{n,k+1}$, $\pi=k/n$, $\Delta=N-\E_PN$, and $U=L_{n,k}-1$.  Define
\[
        s_\pi=\Sigma_\pi^+v,
        \qquad Z=s_\pi^\top\Delta,
        \qquad R=U-\frac{Z}{n}.
\]
Then
\[
        \E_P U=0,
        \qquad \E_P[\Delta U]=v,
        \qquad \Cov_P(N)=n\Sigma_\pi,
\]
and consequently
\[
        \E_P[RZ]=0,
        \qquad \E_P[Z^2]=nI_\pi,
        \qquad \E_P[U^2]=\frac{I_\pi}{n}+\E_P[R^2].
\]
\end{lemma}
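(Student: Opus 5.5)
The argument is exact and uses only the change-of-measure identity $\E_P[gL_{n,k}]=\E_Q g$, valid for every function $g$ of the histogram because $Q\ll P$ by \cref{lem:general-lr}. Finite support makes all expectations finite.

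\textbf{First moments.} Taking $g\equiv1$ gives $\E_PL_{n,k}=Q(\supp P)=1$, so $\E_PU=0$. Taking $g=\Delta=N-\E_PN$ gives
\[
\E_P[\Delta U]=\E_P[\Delta L_{n,k}]-\E_P\Delta=\E_QN-\E_PN.
\]
The datasets underlying $Q$ and $P$ differ in one user switching from $W_0$ to $W_1$. Hence, by the mean formula of \cref{sec:covariance},
\[
\E_QN-\E_PN=\{(n-k-1)W_0+(k+1)W_1\}-\{(n-k)W_0+kW_1\}=v .
\]
The identity $\Cov_P(N)=n\Sigma_\pi$ with $\pi=k/n$ is the exact independence computation already recorded in \cref{sec:covariance}.

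\textbf{Second moments.} Since $\E_PZ=0$, we get $\E_P[Z^2]=s_\pi^\top\Cov_P(N)s_\pi=n\,v^\top\Sigma_\pi^+\Sigma_\pi\Sigma_\pi^+v$. Here $v\in\mathcal T$ and $\Sigma_\pi^+$ is the inverse of $\Sigma_\pi$ on $\mathcal T$ (zero on $\operatorname{span}\{\one\}$). Therefore $\Sigma_\pi^+\Sigma_\pi\Sigma_\pi^+v=\Sigma_\pi^+v$, which gives $\E_P[Z^2]=nI_\pi$. Next, $\E_P[UZ]=s_\pi^\top\E_P[\Delta U]=s_\pi^\top v=I_\pi$. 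Combining,
\[
\E_P[RZ]=\E_P[UZ]-\tfrac1n\E_P[Z^2]=I_\pi-I_\pi=0 .
\]
Finally, expand $U=Z/n+R$. Orthogonality kills the cross term, so
\[
\E_PU^2=\tfrac{1}{n^2}\E_PZ^2+\E_PR^2=\tfrac{I_\pi}{n}+\E_PR^2 .
\]

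\textbf{Potential pitfalls.} There is no genuine obstacle. The main points needing care are these:
\begin{itemize}
\item the pseudoinverse algebra, which relies on $v\in\mathcal T$, on $\Delta\in\mathcal T$ (since $\one^\top N=n$), and on full support making $\Sigma_\pi$ invertible on $\mathcal T$;
\item the convention $L=0$ off $\supp P$, which is harmless because $Q\ll P$;
\item the fact that $\pi=k/n$ is exactly the composition of $P$, so the covariance identity holds with no error term. This contrasts with the $\pi_n=k/(n-1)$ used in \cref{thm:linearization}.
\end{itemize}
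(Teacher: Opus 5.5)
Your proof is correct. Its structure matches the paper's: the first-moment identities come first, and the second-moment identities then follow by algebra. The one substantive difference is the key step $\E_P[\Delta U]=v$.

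\textbf{Your route.} You use the change of measure $\E_P[\Delta L_{n,k}]=\E_Q[\Delta]=\E_QN-\E_PN$ and then read off $v$ from the exact mean formulas.

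\textbf{The paper's route.} It uses the Bayes representation $U=\E[r(Y_\star)\mid N]$ from \cref{lem:general-lr}. The tower property replaces $U$ by $r(Y_\star)$. The decomposition $N=e_{Y_\star}+S$, with $S$ independent of $Y_\star$, then leaves only the changed user's contribution, $\sum_yW_0(y)r(y)(e_y-W_0)=v$.

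\textbf{What each buys.} Your argument is shorter and needs nothing beyond $Q\ll P$ and the two mean vectors. Indeed, it shows $\E_P[\Delta U]=\E_QN-\E_PN$ for any dominated pair of laws. The paper's argument makes the latent-user mechanism behind the score explicit. It is the same conditional-expectation picture that drives \cref{thm:linearization,prop:residual-bounds}.

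\textbf{A smaller difference.} The paper proves the vector identity $\E_P[\Delta R]=v-\Sigma_\pi\Sigma_\pi^+v=0$. This says $R$ is orthogonal to every linear statistic of $N$, so $Z/n$ is the $L^2(P)$ projection of $U$ onto centered linear functions of $N$, which is the sense of ``regression.'' That step needs $v\in\operatorname{range}\Sigma_\pi$. Your scalar computation $\E_P[RZ]=s_\pi^\top v-s_\pi^\top\Sigma_\pi s_\pi=0$ gives orthogonality to $Z$ only, which is all the lemma claims. It uses just the Penrose identity $\Sigma_\pi^+\Sigma_\pi\Sigma_\pi^+=\Sigma_\pi^+$, so your appeal to $v\in\mathcal T$ and invertibility on $\mathcal T$ at that point is more than you need.
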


\begin{proof}
The identity $\E_PU=0$ follows from $\E_PL=1$.  Since $U=\E[r(Y_\star)\mid N]$ and $\Delta$ is $N$-measurable,
\[
        \E_P[\Delta U]=\E_P[\Delta r(Y_\star)].
\]
Write $N=e_{Y_\star}+S$, where $S$ is independent of $Y_\star$ and $\E r(Y_\star)=0$.  Then only the changed user's contribution remains:
\[
        \E_P[(e_{Y_\star}-W_0)r(Y_\star)]
        =\sum_y W_0(y)r(y)(e_y-W_0)=\sum_y v(y)e_y=v.
\]
The covariance formula is the fixed-composition identity.  Therefore
\[
        \E_P[\Delta R]=\E_P[\Delta U]-\frac1n\E_P[\Delta\Delta^\top]s_\pi
        =v-\Sigma_\pi\Sigma_\pi^+v=0,
\]
which gives $\E_P[RZ]=0$.  The formula for $\E_P[Z^2]$ is $s_\pi^\top(n\Sigma_\pi)s_\pi=nI_\pi$.
\end{proof}

\begin{proposition}[Residual moment bounds]\label[proposition]{prop:residual-bounds}
Under the standing interior and full-support assumptions,
\[
        \sup_{k/n\in[\eta,1-\eta]}\E_P[R^2]=O(n^{-2}),
        \qquad
        \sup_{k/n\in[\eta,1-\eta]}\E_P[R^4]=O(n^{-4}),
\]
and hence $\E_P|R|=O(n^{-1})$, uniformly in $k$.
\end{proposition}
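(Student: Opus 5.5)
The plan is to split $R$ into a contribution from the typical set $A_n$ of \cref{thm:linearization} and one from its complement, and to make the typical-set part fully quantitative using the refined ratio lemma \cref{lem:refined-ratio}. The sup-norm bound $n^{-3/4}$ from \cref{thm:linearization} is not enough, since it gives only $\E R^2=O(n^{-3/2})$. I will therefore redo the softmax expansion to second order and keep track of how the remainder depends on $\Delta=N-\E_PN$.

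First, on $A_n$ I will write $\log\{\Pp(S=s-t_y)/\Pp(S=s)\}=t_y^\top\theta_n+c_y(N)$ and use \cref{lem:refined-ratio} to get a remainder of the form
\[
|c_y(N)|\le C\,\frac{1+\|\Delta\|_2^2/n}{n}.
\]
That is, the Edgeworth correction contributes a deterministic $O(n^{-1})$ term plus terms of order $\|\Delta\|^2/n^2$. Next I will expand the softmax posterior $\Pp(Y_\star=y\mid N)$ to second order in the bounded quantities $e_y^\top\theta_n+c_y$. Since $\|\theta_n\|\lesssim\|\Delta\|/n$, this gives
\[
\bigl|U-n^{-1}s_{\pi_n}^\top(N-\E_PN)\bigr|\le C\Bigl(\frac1n+\frac{\|\Delta\|_2^2}{n^2}+\frac{\|\Delta\|_2^3}{n^3}\Bigr)\quad\text{on }A_n.
\]
The first-order term uses $s_{\pi_n}$ with $\pi_n=k/(n-1)$, whereas $R$ is defined with $s_\pi$, $\pi=k/n$. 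By \cref{lem:sigma-lipschitz}, $\|s_{\pi_n}-s_\pi\|\le C/n$, so replacing one by the other adds at most $C\|\Delta\|/n^2$. Altogether, on $A_n$,
\[
|R|\le C\Bigl(n^{-1}+\frac{\|\Delta\|_2^2}{n^2}+\frac{\|\Delta\|_2}{n^2}\Bigr).
\]
Raising this to the powers $2$ and $4$ and applying \cref{lem:hist-moments} with $\E\|\Delta\|^{p}\le C_pn^{p/2}$ gives $\E[R^2\one_{A_n}]=O(n^{-2})$ and $\E[R^4\one_{A_n}]=O(n^{-4})$. All constants are uniform over $k/n\in[\eta,1-\eta]$ because every input lemma is uniform there.

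Off $A_n$ I only need crude bounds. Full support gives $|U|\le\max_y|r(y)|$, since $U$ is a conditional expectation of the bounded variable $r(Y_\star)$. Also $|Z|/n\le C\|\Delta\|/n\le C$ deterministically, because $\|\Delta\|_\infty\le n$. Hence $|R|\le C$ everywhere, and
\[
\E[R^4\one_{A_n^c}]\le C\,P(A_n^c)\le Ce^{-cn^{1/4}},
\]
which is negligible, and likewise for $R^2$. The $L^1$ claim then follows from Cauchy--Schwarz.

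The main obstacle is the first step: obtaining from \cref{lem:refined-ratio} a ratio remainder whose error is quadratic in $\Delta/n$ plus $O(1/n)$, uniformly over shifts up to $n^{5/8}$. The sup bound $n^{-3/4}$ in \cref{thm:linearization} is too weak on its own. Concretely, I need a local expansion
\[
\log p(s-t)-\log p(s)=t^\top\Sigma^+(s-\E S)/m+O\bigl(m^{-1}(1+\|s-\E S\|^2/m)\bigr),
\]
which is what a $C^1$ control of the relative Edgeworth remainder should give. I will take that as the content of \cref{lem:refined-ratio}.
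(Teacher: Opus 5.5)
Your argument is essentially the paper's proof. It restricts to the moderate window, feeds \cref{lem:refined-ratio} into a second-order softmax expansion of the posterior, and bounds $|R|$ there by polynomials in $z\approx\Delta/\sqrt n$ divided by powers of $n$. It then takes moments via \cref{lem:hist-moments} and uses $|R|\le C$ with the exponential tail bound off the window.

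One attribution needs fixing. \cref{lem:refined-ratio} as stated does not give the remainder $O\bigl(m^{-1}(1+\|z\|_2^2)\bigr)$ you ascribe to it. Its term $\rho_{m,t}=O\bigl(m^{-3/2}(1+\|z\|_2^9)\bigr)$ is of order $m^{-3/8}\gg m^{-3/4}$ near $\|z\|_2\sim m^{1/8}$. You can repair this in either of two ways:
\begin{enumerate}
\item Carry $\rho_{m,t}$ through, as the paper does. It contributes only $O(n^{-3})$ to $\E R^2$ and $O(n^{-6})$ to $\E R^4$, using moments up to order $36$.
\item Derive your sharper pointwise bound from the $C^1$ estimates (A.2)--(A.3) of \cref{lem:edgeworth}, using the mean-value argument in the proof of \cref{lem:ratio}. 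On the window this gives $|\log\mathcal E_m(z'^{-})-\log\mathcal E_m(z^-)|\le Cm^{-1}(1+\|z^-\|_2^2)$, and then moments up to order $8$ suffice.
\end{enumerate}
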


\begin{proof}
We give the proof because this is the key technical estimate.  Let $m=n-1$, $\pi_m=k/m$, and $p=\delta_{\rm full}$.  The remaining-sum array $S$ satisfies the uniform full-support condition (A.0) in \cref{app:edgeworth} with this value of $p$, uniformly over $k/n\in[\eta,1-\eta]$.  Write $N=e_{Y_\star}+S$, where $S$ is the histogram of the remaining $m$ users.  Fix the baseline symbol $d$ and set
\[
        s=N-e_d,
        \qquad p_y(N)=\Pp(S=N-e_y).
\]
For $y\in\Y$, put
\[
        t_y=e_y-e_d.
\]
Then $\one^\top t_y=0$, $\|t_y\|_1\le2$, and
\[
        s-t_y=N-e_d-(e_y-e_d)=N-e_y.
\]
This is the sign convention used below.  With $p_d(N)=\Pp(S=N-e_d)=\Pp(S=s)$, Bayes' formula gives
\[
        \Pp(Y_\star=y\mid N)
        =\frac{W_0(y)p_y(N)}{\sum_z W_0(z)p_z(N)}.
\]
Let
\[
        A_m=\{\|z\|_2\le m^{1/8}\},
        \qquad z=\frac{s^--(\E S)^-}{\sqrt m}.
\]
The refined ratio lemma is used only on $A_m$.  Since $s=S+e_{Y_\star}-e_d$, the complement $A_m^c$ implies
\[
        \|(S-\E S)^-\|_2 \ge m^{5/8}-C_d
\]
for a constant $C_d$.  Bernstein's inequality for the fixed-composition histogram therefore gives
\[
        \Pp(A_m^c)\le C\exp(-c m^{1/4}),
\]
uniformly in $k/n\in[\eta,1-\eta]$.  On $A_m^c$ we use only the crude boundedness $|U|\le C$ and $|Z/n|\le C$, hence $|R|\le C$.  Consequently this tail contributes exponentially little to every fixed moment of $R$.

On $A_m$, every coordinate of $s$ differs from its mean by at most $O(m^{5/8})$.  Since each mean coordinate is at least $mp$, all coordinates of $N-e_y$ are nonnegative for every $y$ once $m$ is large; hence the probabilities in the following ratios are strictly positive.  \Cref{lem:refined-ratio}, applied with $A=1$, gives, for every $y$,
\[
\log\frac{p_y(N)}{p_d(N)}
=\frac1m t_y^\top\Sigma_{\pi_m}^+(s-\E S)
 -\frac{1}{2m}t_y^\top\Sigma_{\pi_m}^+t_y
 +\frac1m Q_{m,t_y}(z)+\rho_{m,t_y}(s),
\]
where $Q_{m,t_y}$ is quadratic with uniformly bounded coefficients and
\[
        |\rho_{m,t_y}(s)|\le C m^{-3/2}(1+\|z\|_2^9).
\]
Set
\[
        \theta=\frac1m\Sigma_{\pi_m}^+(s-\E S)\in\mathcal T,
        \qquad
        a_y=-\frac{1}{2m}t_y^\top\Sigma_{\pi_m}^+t_y+\frac1m Q_{m,t_y}(z)+\rho_{m,t_y}(s).
\]
The bounds above imply
\[
        |a_y|\le C\left(\frac{1+\|z\|_2^2}{m}+\frac{1+\|z\|_2^9}{m^{3/2}}\right).
\]
Since $t_y=e_y-e_d$, the term $-e_d^\top\theta$ is common to all $y$ and cancels in the posterior normalization.  Thus, with $b_y=e_y^\top\theta+a_y$,
\[
        \Pp(Y_\star=y\mid N)
        =\frac{W_0(y)e^{b_y}}{\sum_{\ell} W_0(\ell)e^{b_\ell}}.
\]
On $A_m$, $\max_y|b_y|=O(m^{-3/8})$, so it is smaller than a fixed constant for all sufficiently large $m$.  Taylor's formula, uniformly over the finite alphabet, gives
\[
        e^{b_y}=1+b_y+O(b_y^2),
        \qquad
        \left(\sum_{\ell}W_0(\ell)e^{b_\ell}\right)^{-1}
        =1-\bar b+O(\max_\ell|b_\ell|^2),
\]
where $\bar b=\sum_\ell W_0(\ell)b_\ell$.  Therefore
\[
\begin{aligned}
        \Pp(Y_\star=y\mid N)
        &=W_0(y)\{1+b_y-\bar b\}
          +O(\max_\ell|b_\ell|^2)\\
        &=W_0(y)\{1+(e_y-W_0)^\top\theta\}
          +\varepsilon_y(N).
\end{aligned}
\]
The linear contribution of $a_y-\sum_\ell W_0(\ell)a_\ell$ is placed in $\varepsilon_y$.  Since
$\|\theta\|=O(m^{-1/2}\|z\|_2)+O(m^{-1})$ and the displayed bound on $a_y$ holds,
\[
        |\varepsilon_y(N)|\le
        C\left(\frac{1+\|z\|_2^2}{m}
        +\frac{1+\|z\|_2^9}{m^{3/2}}\right).
\]
The finitely many smaller values of $m$ are absorbed by enlarging $C$.

Multiplying by $r(y)$ and summing over $y$ cancels the constant term.  Since $W_0(y)r(y)=v(y)$ and $\sum_y v(y)=0$,
\[
        U=\E[r(Y_\star)\mid N]
        =v^\top\theta+\mathcal E_n(N),
\]
with
\[
        |\mathcal E_n(N)|\le
        C\left(\frac{1+\|z\|_2^2}{n}+\frac{1+\|z\|_2^9}{n^{3/2}}\right).
\]
Finally,
\[
        s-\E S=N-\E_PN+(W_0-e_d).
\]
Thus
\[
        v^\top\theta
        =\frac1m v^\top\Sigma_{\pi_m}^+(N-\E_PN)+O(n^{-1}).
\]
Replacing $m^{-1}$ by $n^{-1}$ and $\Sigma_{\pi_m}^+$ by $\Sigma_\pi^+$ changes the leading term by
\[
        O\left(\frac{1+\|z\|_2}{n^{3/2}}\right)+O(n^{-1}),
\]
using \cref{lem:sigma-lipschitz} on $[\eta/2,1-\eta/2]$.  Since $|\pi_m-\pi|=O(n^{-1})$, the stated replacement follows.  Therefore, on $A_m$,
\[
        |R|\le C\left(\frac{1+\|z\|_2^2}{n}+\frac{1+\|z\|_2^9}{n^{3/2}}\right).
\]
This estimate is not intended to be a pointwise small-$o$ bound uniformly across the whole moderate window.  The required smallness is obtained after taking moments: the $n^{-1}(1+\|z\|_2^2)$ term has exactly the $L^2$ and $L^4$ sizes needed below, while the $n^{-3/2}(1+\|z\|_2^9)$ term is of still smaller moment order.  Combining this on-window bound with the exponential tail estimate on $A_m^c$ yields
\[
\begin{aligned}
        \E_P R^2
        &\le \frac{C}{n^2}\E(1+\|z\|_2^4)
            +\frac{C}{n^3}\E(1+\|z\|_2^{18})
            +Ce^{-cn^{1/4}}=O(n^{-2}),\\
        \E_P R^4
        &\le \frac{C}{n^4}\E(1+\|z\|_2^8)
            +\frac{C}{n^6}\E(1+\|z\|_2^{36})
            +Ce^{-cn^{1/4}}=O(n^{-4}).
\end{aligned}
\]
Here
\[
        z=\frac{(S-\E S)^-}{\sqrt m}+O(m^{-1/2}),
\]
so \cref{lem:hist-moments} supplies all displayed moments uniformly; in particular, the fourth-moment bound uses fixed standardized moments up to order $36$.  Cauchy--Schwarz gives $\E|R|=O(n^{-1})$.
\end{proof}

\section{Jensen--Shannon and smooth divergence asymptotics}\label{sec:jsd}

The Jensen--Shannon divergence is not itself an $(\varepsilon,\delta)$ guarantee, and we do not use it as one.  It is included because it is the most direct divergence-level window onto the fixed-composition tangent constant $I_\pi$.  The same $I_\pi$ governs the leading term of every smooth $f$-divergence between the neighboring laws (\cref{thm:f-divergence}) and the leading term of the local privacy curve (\cref{thm:local-delta}); identifying it through JSD therefore identifies the constant that controls the whole family.  Two features make JSD the convenient entry point: it is symmetric and bounded, so the score expansion is numerically stable and the leading coefficient is pinned down without one-sided tail effects, and the canonical pair admits an explicit third-order expansion (\cref{thm:canonical-jsd}) that fixes the constant unambiguously before the proportional-composition case.  The privacy-relevant consequences are then read off in \cref{sec:local-privacy}; this section establishes the constant. The divergence also has a direct operational reading.  Let $B$ be uniform on $\{0,1\}$ and let the conditional law of $N$ given $B=0$ and $B=1$ be $T_{n,k}$ and $T_{n,k+1}$, respectively.  The marginal law of $N$ is their midpoint $M$, and therefore
\[
        I(B;N)=\frac12\KL(T_{n,k}\Vert M)
        +\frac12\KL(T_{n,k+1}\Vert M)
        =\JSD(T_{n,k}\Vert T_{n,k+1}).
\]
Thus the leading term $I_\pi/(8n)$ is the conditional mutual-information
leakage rate for a uniform neighbor bit against an observer who knows the
fixed background composition.  This is a specialization of the target-input
mutual information studied by Su--Cheng--Wang
\cite{SuChengWang2025MI}.  For this fixed-composition specialization, the
result here identifies the exact channel- and composition-specific coefficient
and shows that the same $I_\pi$ governs both directed local hockey-stick curves.

\begin{lemma}[Pointwise JSD functional]\label[lemma]{lem:jsd-functional}
Let $P,Q$ be finite distributions with $Q\ll P$ and $L=\dd Q/\dd P$.  Then
\[
        \JSD(P\Vert Q)=\E_P[D(L)],
\]
where
\[
        D(t)=\frac12\log\frac{2}{1+t}+\frac{t}{2}\log\frac{2t}{1+t},\qquad t\ge0,
\]
with $D(0)=\frac12\log2$.
\end{lemma}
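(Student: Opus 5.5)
The plan is a direct computation.  Since $P$ and $Q$ live on a finite space with $Q\ll P$, we have $Q(\omega)=L(\omega)P(\omega)$ for every $\omega$.  This holds on $\supp(P)$ by definition, and off $\supp(P)$ both sides vanish because $Q\ll P$ and we fixed $L=0$ there.  Consequently the midpoint satisfies $M=\tfrac12(P+Q)=\tfrac12(1+L)P$, and we get $M\ll P$ and $Q\ll M$.  Each $\KL$ term can therefore be written as an expectation under $P$.

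First, $\KL(P\Vert M)=\E_P[\log(\dd P/\dd M)]$, and on $\supp(P)$ we have $\dd P/\dd M=2/(1+L)$.  This gives
\[
\tfrac12\KL(P\Vert M)=\E_P\Bigl[\tfrac12\log\tfrac{2}{1+L}\Bigr].
\]
Second, $\KL(Q\Vert M)=\sum_{\omega:Q(\omega)>0}Q(\omega)\log\frac{Q(\omega)}{M(\omega)}$.  Every such $\omega$ lies in $\supp(P)$ and has $L(\omega)>0$.  For these points we substitute $Q(\omega)=L(\omega)P(\omega)$ and $Q/M=2L/(1+L)$, which gives
\[
\E_P\Bigl[L\log\tfrac{2L}{1+L}\,\one\{L>0\}\Bigr].
\]
We then extend this to the whole of $\supp(P)$ using the convention $0\log 0=0$.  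This convention is consistent because $t\log\frac{2t}{1+t}\to0$ as $t\downarrow0$.  With this convention, $D$ is continuous on $[0,\infty)$ and $D(0)=\tfrac12\log2$, which matches the stated boundary value.  Adding the two halves yields $\JSD(P\Vert Q)=\E_P[D(L)]$.

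The only point that needs care is the bookkeeping at points where $L=0$ inside $\supp(P)$, and at points outside $\supp(P)$.  At points outside $\supp(P)$, both $P$ and $Q$ vanish, so they contribute nothing to either side.  At points where $L=0$, $Q$ vanishes but $P$ does not.  These points contribute $\tfrac12\log2$ to the first term and $0$ to the second, which is exactly $D(0)$.  No asymptotics or earlier results are needed beyond the definition of $L$ and the identity $Q=LP$.
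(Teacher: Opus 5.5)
Your proposal is correct and takes the same route as the paper, whose proof simply substitutes $M=\tfrac12(P+Q)=\tfrac12(1+L)P$ into the definition of $\JSD$. Your extra bookkeeping makes explicit what the paper's one-line proof leaves implicit: the points with $L=0$ inside $\supp(P)$, the points off $\supp(P)$, and the convention $0\log0=0$ that yields $D(0)=\tfrac12\log2$.
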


\begin{proof}
Use $M=(P+Q)/2=(1+L)P/2$ in the definition of JSD.
\end{proof}

\begin{lemma}[Taylor expansion of $D$]\label[lemma]{lem:D-taylor}
For $|u|\le1/2$,
\[
        D(1+u)=\frac{u^2}{8}-\frac{u^3}{16}+\frac{7u^4}{192}-\frac{3u^5}{128}+O(u^6).
\]
The implicit constant is absolute.
\end{lemma}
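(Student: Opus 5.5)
The plan is to reduce $D(1+u)$ to a difference of two copies of the elementary function $g(x)=(1+x)\log(1+x)$ and then read off the coefficients from its power series. The absolute remainder constant will come from a geometric tail bound on the disk $|u|\le 1/2$.

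\emph{Step 1 (algebraic rewriting).} Starting from the definition in \cref{lem:jsd-functional}, I would collect terms to get
\[
        D(t)=\tfrac12\bigl[(1+t)\log2+t\log t-(1+t)\log(1+t)\bigr],\qquad t>0 .
\]
Substituting $t=1+u$ and writing $\log(2+u)=\log2+\log(1+u/2)$, the $\log 2$ terms cancel. This leaves the exact identity
\[
        D(1+u)=\tfrac12\bigl[(1+u)\log(1+u)-(2+u)\log(1+u/2)\bigr]
        =\tfrac12\bigl[g(u)-2g(u/2)\bigr].
\]

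\emph{Step 2 (coefficients).} For $|x|<1$, integrating $g'(x)=1+\log(1+x)$ gives
\[
        g(x)=x+\sum_{j\ge2}\frac{(-1)^j x^j}{j(j-1)} .
\]
In $g(u)-2g(u/2)$ the linear terms cancel, so the coefficient of $u^j$ in $D(1+u)$ is, for $j\ge2$,
\[
        c_j=\frac{(-1)^j}{2j(j-1)}\bigl(1-2^{1-j}\bigr).
\]
Evaluating at $j=2,3,4,5$ gives $1/8$, $-1/16$, $7/192$ and $-3/128$, which are exactly the stated coefficients.

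\emph{Step 3 (remainder).} For $|u|\le1/2$ both $u$ and $u/2$ lie in the closed disk of radius $1/2$, where both series converge absolutely. Since $|c_j|\le 1/(2j(j-1))\le 1/60$ for $j\ge6$, the tail is bounded by
\[
        \sum_{j\ge6}|c_j||u|^j\le \frac{|u|^6}{60}\sum_{i\ge0}2^{-i}=\frac{|u|^6}{30}.
\]
This gives the $O(u^6)$ term with an absolute constant.

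There is no real obstacle. The only points needing care are the bookkeeping in Step 1, so that the $\log2$ terms are seen to cancel exactly, and confirming that the radius $1/2$ keeps both series arguments inside the disk of convergence, which makes the remainder constant absolute.
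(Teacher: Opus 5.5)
Your proposal is correct and follows essentially the same route as the paper. Both use the identical rewriting $D(1+u)=\tfrac12\{(1+u)\log(1+u)-(2+u)\log(1+u/2)\}$ and then read off the coefficients from the logarithm series. The only difference is that you use the full convergent series with closed-form coefficients $c_j$ and a geometric tail bound, where the paper uses a truncated Taylor formula; your version makes the remainder constant explicit ($1/30$), and your formula gives $c_6=31/1920$, matching the sixth-order term the paper computes and then absorbs into the remainder.
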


\begin{proof}
Substituting $t=1+u$ gives
\[
        D(1+u)
        =\frac12(1+u)\log(1+u)
         -\frac12(2+u)\log(1+u/2).
\]
For $|u|\le1/2$, Taylor's formula for $\log(1+x)$ through degree six has a remainder bounded by $C|x|^7$.  Substitution and collection of coefficients give
\[
        D(1+u)=\frac{u^2}{8}-\frac{u^3}{16}
        +\frac{7u^4}{192}-\frac{3u^5}{128}
        +\frac{31u^6}{1920}+O(u^7).
\]
Dropping the displayed sixth-order term into the remainder proves the stated formula with an absolute constant.
\end{proof}

\begin{theorem}[Canonical JSD expansion]\label{thm:canonical-jsd}
Assume $\delta_\star>0$.  Let $Y\sim W_0$, $r(Y)=W_1(Y)/W_0(Y)-1$, $\chi^2=\E r(Y)^2$, and $\mu_3=\E r(Y)^3$.  Then
\[
        \JSD(T_{n,0}\Vert T_{n,1})
        =\frac{\chi^2}{8n}-\frac{\mu_3}{16n^2}+\frac{7(\chi^2)^2}{64n^2}+O(n^{-3}).
\]
\end{theorem}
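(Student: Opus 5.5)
By \cref{lem:canonical-lr}, under $P=T_{n,0}$ we have $L_{n,0}=1+U_n$, where $U_n=n^{-1}\sum_{i=1}^n r(Y_i)$ with $Y_i$ i.i.d.\ $W_0$. By \cref{lem:jsd-functional},
\[
\JSD(T_{n,0}\Vert T_{n,1})=\E[D(1+U_n)].
\]
The plan is to insert the Taylor expansion of \cref{lem:D-taylor} and compute the moments of the sample mean $U_n$ explicitly. Since $\delta_\star>0$ and the alphabet is finite, $r$ is bounded: $|r(y)|\le C_r:=\max_y W_1(y)/W_0(y)$. Hence $U_n$ is an average of bounded, centered i.i.d.\ variables, $\E r(Y)=0$.

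\textbf{Split on a good event.} Let $G=\{|U_n|\le 1/2\}$. Hoeffding's inequality gives $\Pp(G^c)\le 2e^{-cn}$. On $G^c$, $U_n\ge-1$, so $L=1+U_n\in[0,1+C_r]$, and $D$ is continuous (hence bounded) on this interval. The contribution of $G^c$ is therefore $O(e^{-cn})$. On $G$, \cref{lem:D-taylor} gives
\[
D(1+U_n)=\frac{U_n^2}{8}-\frac{U_n^3}{16}+\frac{7U_n^4}{192}-\frac{3U_n^5}{128}+O(U_n^6).
\]
Each polynomial term is bounded, so restricting its expectation to $G$ rather than the whole space changes it by only $O(e^{-cn})$. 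It thus suffices to compute the full expectations $\E U_n^j$ for $j=2,\dots,6$.

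\textbf{Moment computation.} This uses the standard cumulant formulas for i.i.d.\ sums:
\[
\E U_n^2=\frac{\chi^2}{n},\qquad \E U_n^3=\frac{\mu_3}{n^2},\qquad \E U_n^4=\frac{3(\chi^2)^2}{n^2}+\frac{\mu_4-3(\chi^2)^2}{n^3}.
\]
For odd and higher orders, $\E U_n^5=O(n^{-3})$ because every contributing index pattern needs at most two distinct blocks. Likewise $\E U_n^6=O(n^{-3})$ via the bound $\E|U_n|^{p}\le C_p n^{-p/2}$ (Marcinkiewicz--Zygmund, or the Bernstein argument of \cref{lem:hist-moments}).

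\textbf{Assembly.} Collecting terms,
\[
\E D(1+U_n)=\frac{\chi^2}{8n}-\frac{\mu_3}{16n^2}+\frac{7}{192}\cdot\frac{3(\chi^2)^2}{n^2}+O(n^{-3}).
\]
Since $\frac{7\cdot3}{192}=\frac{7}{64}$, this is the claimed expansion.

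\textbf{Main obstacle.} The only delicate points are bookkeeping ones. First, the Taylor remainder holds only for $|u|\le1/2$, which is handled by the exponential tail estimate above. Second, one must verify that the fifth moment really is $O(n^{-3})$ rather than $O(n^{-5/2})$. This follows from the partition count: each of the five indices must be paired (mean zero), so the possible block structures are $\{5\}$ and $\{2,3\}$. These give at most $n^2$ index patterns out of the $n^5$ normalization, hence order $n^{-3}$.
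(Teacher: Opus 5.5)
Your proposal is correct and follows essentially the same route as the paper. Both arguments use the linear canonical likelihood ratio and a Hoeffding-controlled Taylor event $\{|U_n|\le 1/2\}$, with boundedness of $D$ on the range of $L$. Both use the exact i.i.d.\ moment identities for $\E U_n^j$, $j\le5$, including $\E U_n^5=O(n^{-3})$ from the $\{5\}$ and $\{2,3\}$ block structures, and a sixth-moment bound $\E|U_n|^6=O(n^{-3})$ (the paper uses Rosenthal where you use Marcinkiewicz--Zygmund or Bernstein), so that $\tfrac{7}{192}\cdot 3(\chi^2)^2/n^2$ yields the stated $7(\chi^2)^2/(64n^2)$.
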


\begin{proof}
By \cref{lem:canonical-lr}, $L=1+U_n$ with $U_n=n^{-1}\sum_{i=1}^nr(Y_i)$ under $T_{n,0}$.  Let $X=r(Y)$ under $Y\sim W_0$.  Then $\E X=0$, $\E X^2=\chi^2$, and $X$ is bounded under $\delta_\star>0$.  Writing $\mu_j=\E X^j$, independence and centering give the exact identities
\[
\begin{aligned}
        \E U_n^2&=\frac{\chi^2}{n},\\
        \E U_n^3&=\frac{\mu_3}{n^2},\\
        \E U_n^4&=\frac{n\mu_4+3n(n-1)(\chi^2)^2}{n^4}
        =\frac{3(\chi^2)^2}{n^2}+O(n^{-3}),\\
        \E U_n^5&=\frac{n\mu_5+10n(n-1)\mu_3\chi^2}{n^5}
        =O(n^{-3}).
\end{aligned}
\]
Rosenthal's inequality at order six gives $\E|\sum_iX_i|^6=O(n^3)$, hence $\E|U_n|^6=O(n^{-3})$.

Let $B_n=\{|U_n|\le1/2\}$.  Hoeffding's inequality gives $\Pp(B_n^c)\le Ce^{-cn}$.  On $B_n$ apply \cref{lem:D-taylor}; outside $B_n$ the contribution is exponentially small because $D(1+U_n)$ is bounded on the compact range of $1+U_n$.  Hence
\[
\begin{aligned}
\JSD(T_{n,0}\Vert T_{n,1})
&=\frac18\E U_n^2-\frac1{16}\E U_n^3+\frac7{192}\E U_n^4
  -\frac3{128}\E U_n^5+O(\E|U_n|^6)+O(e^{-cn})\\
&=\frac{\chi^2}{8n}-\frac{\mu_3}{16n^2}+\frac{7(\chi^2)^2}{64n^2}+O(n^{-3}).
\end{aligned}
\]
\end{proof}

\begin{theorem}[Interior fixed-composition JSD expansion]\label{thm:interior-jsd}
Assume $\delta_{\rm full}>0$ and fix $\eta\in(0,1/2)$.  Uniformly for all $k$ with $k/n\in[\eta,1-\eta]$,
\[
        \JSD(T_{n,k}\Vert T_{n,k+1})
        =\frac{I_{k/n}}{8n}+O(n^{-2}).
\]
\end{theorem}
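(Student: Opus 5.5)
We follow the template of \cref{thm:canonical-jsd}, with the exact linear representation replaced by the residual decomposition of \cref{prop:residual-bounds}.

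\emph{Reduction to moments of $U$.} By \cref{lem:jsd-functional}, $\JSD(T_{n,k}\Vert T_{n,k+1})=\E_P[D(1+U)]$, where $P=T_{n,k}$ and $U=L_{n,k}-1$. Full support makes $r$ bounded, so $|U|\le C$ and $D(1+U)$ is bounded. Let $B=\{|U|\le 1/2\}$. On $B$, \cref{lem:D-taylor} truncated after the cubic term gives
\[
D(1+U)=\frac{U^2}{8}-\frac{U^3}{16}+O(U^4).
\]
On $B^c$, $D(1+U)$, $U^2$ and $|U|^3$ are all bounded, so this part contributes $O(P(B^c))$. Hence
\[
\JSD=\frac18\E_PU^2-\frac1{16}\E_PU^3+O(\E_PU^4)+O(P(B^c)).
\]

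\emph{Second moment.} \cref{lem:regression} gives exactly $\E_PU^2=I_{k/n}/n+\E_PR^2$. \cref{prop:residual-bounds} gives $\E_PR^2=O(n^{-2})$ uniformly over $k/n\in[\eta,1-\eta]$. So the quadratic term equals $I_{k/n}/(8n)+O(n^{-2})$.

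\emph{Fourth moment and tail.} Write $U=Z/n+R$. By \cref{lem:hist-moments}, $\E Z^4=O(n^2)$, so $\E(Z/n)^4=O(n^{-2})$. Also $\E R^4=O(n^{-4})$, hence $\E U^4=O(n^{-2})$. Markov's inequality then gives $P(B^c)\le 16\,\E U^4=O(n^{-2})$, which suffices.

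\emph{Cubic term, the main obstacle.} We need $\E_PU^3=O(n^{-2})$. The crude bound is only $O(n^{-3/2})$, so this is the delicate step. Expand
\[
U^3=\frac{Z^3}{n^3}+\frac{3Z^2R}{n^2}+\frac{3ZR^2}{n}+R^3.
\]
\begin{itemize}
\item $Z=s_\pi^\top\Delta$ is a sum of $n$ independent centered bounded terms, so its third cumulant is $O(n)$. Hence $\E Z^3=O(n)$ and $\E Z^3/n^3=O(n^{-2})$.
\item By Cauchy--Schwarz, $|\E ZR^2|/n\le n^{-1}(\E Z^2)^{1/2}(\E R^4)^{1/2}=O(n^{-5/2})$.
\item By H\"older, $\E|R|^3\le(\E R^4)^{3/4}=O(n^{-3})$.
\item The cross term needs care: Cauchy--Schwarz gives only $n^{-2}(\E Z^4)^{1/2}(\E R^2)^{1/2}=O(n^{-2})$, and this is already enough.
\end{itemize}
So $\E U^3=O(n^{-2})$.

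Collecting the terms, $\JSD=I_{k/n}/(8n)+O(n^{-2})$. Uniformity in $k$ holds because every constant used comes from \cref{lem:hist-moments} and \cref{prop:residual-bounds}, both uniform on $[\eta,1-\eta]$.
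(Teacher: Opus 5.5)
Your proof is correct and follows the paper's argument. The steps match: the reduction $\JSD=\E_P[D(1+U)]$, the exact identity $\E_PU^2=I_{k/n}/n+\E_PR^2$ from \cref{lem:regression} with $\E_PR^2=O(n^{-2})$, and $\E_PU^3=O(n^{-2})$ via the signed moment $\E Z^3=O(n)$ plus Cauchy--Schwarz/H\"older on the mixed terms. Two differences are cosmetic and both are valid. You truncate $D$ at the cubic term with an $O(U^4)$ remainder, which makes the paper's fifth- and sixth-moment bookkeeping unnecessary. You also bound $P(|U|>1/2)$ by Markov on $\E U^4$, giving $O(n^{-2})$ rather than the paper's $O(n^{-4})$, which is still sufficient.
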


\begin{proof}
Let $P=T_{n,k}$, $Q=T_{n,k+1}$, $U=L-1$, and use the decomposition $U=Z/n+R$ from \cref{lem:regression,prop:residual-bounds}.  The exact regression identity gives
\[
        \E_P U^2=\frac{I_{k/n}}{n}+O(n^{-2}).
\]
We next record the moments needed for Taylor expansion.  Write $A=Z/n$.  The score $Z=\sum_{i=1}^nX_i$ is a sum of independent centered random variables whose absolute values are uniformly bounded.  Hence Rosenthal's inequality gives, for each fixed $p\ge2$,
\[
        \E_P|A|^p=O(n^{-p/2}).
\]
For the signed third moment no absolute-moment inequality is used: independence and centering give
\[
        \E_PZ^3=\sum_{i=1}^n\E_PX_i^3=O(n),
\]
so $\E_PA^3=O(n^{-2})$.  Direct expansion also gives $\E_PA^4=O(n^{-2})$.

The variables $U$, $A$, and $R=U-A$ are uniformly bounded under full support and interior composition.  Using $\E R^2=O(n^{-2})$ and $\E R^4=O(n^{-4})$, the mixed terms satisfy
\[
\begin{aligned}
 |\E A^2R|&\le(\E A^4)^{1/2}(\E R^2)^{1/2}=O(n^{-2}),\\
 |\E AR^2|&\le(\E A^2)^{1/2}(\E R^4)^{1/2}=O(n^{-5/2}),\\
 \E|R|^3&\le(\E R^2\,\E R^4)^{1/2}=O(n^{-3}).
\end{aligned}
\]
Therefore $\E U^3=O(n^{-2})$.  Similarly, $\E U^4=O(n^{-2})$.  Finally,
\[
        \E|U|^5\le16\{\E|A|^5+\E|R|^5\}=O(n^{-5/2}),
\]
because $R$ is uniformly bounded and $\E|R|^5\le C\E R^4=O(n^{-4})$.  The same boundedness, together with $\E U^4=O(n^{-2})$, also gives $\E|U|^6=O(n^{-2})$.

To justify the Taylor event, write $U=Z/n+R$.  Bernstein's inequality gives $\Pp(|Z|>n/4)\le Ce^{-cn}$, while \cref{prop:residual-bounds} gives $\Pp(|R|>1/4)\le C\E R^4=O(n^{-4})$.  Hence
\[
        \Pp(|U|>1/2)=O(n^{-4}).
\]
On $\{|U|\le1/2\}$ apply \cref{lem:D-taylor}; on the complement use boundedness of $D(1+U)$ under full support.  The tail contribution is $O(n^{-4})$, and therefore
\[
        \E_P[D(1+U)]=\frac18\E_PU^2+O(n^{-2})
        =\frac{I_{k/n}}{8n}+O(n^{-2}).
\]
\end{proof}

\begin{theorem}[Smooth $f$-divergence constant]\label{thm:f-divergence}
Let $J$ be a compact interval containing the full-support likelihood-ratio range, uniformly over the interior compositions under consideration.  Let $f$ be convex with $f(1)=0$, continuous on an open interval containing $J$, and three times continuously differentiable in a neighborhood of $1$.  We do not require $f'(1)=0$.  Under the assumptions of \cref{thm:interior-jsd},
\[
        D_f(T_{n,k+1}\Vert T_{n,k})
        =\frac{f''(1)}{2}\frac{I_{k/n}}{n}+o(n^{-1}),
\]
uniformly over $k/n\in[\eta,1-\eta]$ when $f$ and its derivatives are fixed.  In particular, for fixed R\'enyi order $\alpha>1$,
\[
        D_\alpha(T_{n,k+1}\Vert T_{n,k})=\frac{\alpha I_{k/n}}{2n}+o(n^{-1}).
\]
\end{theorem}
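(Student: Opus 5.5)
We argue as in \cref{thm:interior-jsd}, but with a second-order rather than fourth-order Taylor expansion. Write $P=T_{n,k}$, $Q=T_{n,k+1}$ and $U=L_{n,k}-1$, so that $D_f(Q\Vert P)=\E_P[f(1+U)]$. Since $\E_PU=0$, we may replace $f$ by $g(t)=f(t)-f'(1)(t-1)$ without changing the divergence. This removes any linear term, which is why $f'(1)=0$ is not needed. The function $g$ satisfies $g(1)=g'(1)=0$ and $g''(1)=f''(1)$, it is $C^3$ on some neighborhood $[1-\rho,1+\rho]$, and it is continuous and hence bounded on $J$.

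On $\{|U|\le\rho\}$, Taylor's formula gives
\[
        g(1+U)=\tfrac12 f''(1)U^2+O(|U|^3),
\]
with a constant depending only on $f$. For the moments, $\E_PU^2=I_{k/n}/n+\E_PR^2=I_{k/n}/n+O(n^{-2})$ by \cref{lem:regression} and \cref{prop:residual-bounds}. For the cubic term, $\E_P|U|^3\le 4(\E|Z/n|^3+\E|R|^3)$. Rosenthal's inequality applied to the bounded independent score summands gives $\E|Z/n|^3=O(n^{-3/2})$, and $\E|R|^3=O(n^{-3})$ as in the proof of \cref{thm:interior-jsd}. So the cubic remainder contributes $O(n^{-3/2})=o(n^{-1})$ uniformly in the interior range.

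The step requiring care is the tail event $\{|U|>\rho\}$, where only boundedness of $g$ on $J$ is available. Here we use Bernstein's inequality for $Z$, which gives $\Pp(|Z/n|>\rho/2)\le Ce^{-cn}$, together with Markov's inequality and the fourth residual moment, which give $\Pp(|R|>\rho/2)\le C\E R^4=O(n^{-4})$. Hence
\[
        \Pp(|U|>\rho)=O(n^{-4}),
\]
and the tail, together with the quadratic term $\tfrac12 f''(1)\E[U^2;|U|>\rho]$ that was added back, contributes $O(n^{-4})$. All constants depend only on $(d,\delta_{\rm full},\eta)$ and $f$, which gives uniformity. Collecting terms yields $\tfrac12 f''(1)I_{k/n}/n+O(n^{-3/2})$, which is stronger than the claimed $o(n^{-1})$.

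For the R\'enyi statement, we take $f(t)=t^\alpha-1$. Then $D_\alpha=(\alpha-1)^{-1}\log\bigl(1+\E_P[L^\alpha-1]\bigr)$, and $f''(1)=\alpha(\alpha-1)$. The main part gives $\E_P[L^\alpha-1]=\tfrac12\alpha(\alpha-1)I_{k/n}/n+o(n^{-1})$. Finally, $\log(1+x)=x+O(x^2)$ with $x=O(n^{-1})$, so $D_\alpha=\alpha I_{k/n}/(2n)+o(n^{-1})$. The likelihood ratio is bounded in $J\subset(0,\infty)$ under full support, so $t^\alpha$ is smooth there and the assumptions of the main part hold.
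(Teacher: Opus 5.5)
Your proposal is correct and follows the paper's proof essentially step for step. The shared steps are: removing the linear term via $f(t)-f'(1)(t-1)$; Taylor-expanding on $\{|U|\le\rho\}$ with the cubic moment controlled through $U=Z/n+R$; bounding the tail by $O(n^{-4})$ via Bernstein for $Z$ and Markov with $\E R^4$; and reducing the R\'enyi case through $\log(1+x)=x+O(x^2)$. Your $f(t)=t^\alpha-1$ differs from the paper's $f_\alpha$ only by the factor $(\alpha-1)^{-1}$. You are slightly more explicit than the paper about adding back the quadratic term on the tail event. You also record the rate $O(n^{-3/2})$, which the paper's argument yields as well but states only as $o(n^{-1})$.
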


\begin{proof}
Let $L=1+U=dT_{n,k+1}/dT_{n,k}$.  A linear term in $f$ does not change the divergence because
\[
        \E_{T_{n,k}}[L-1]=0.
\]
Thus, with
\[
        \widetilde f(x):=f(x)-f'(1)(x-1),
\]
we have $D_f=D_{\widetilde f}$, $\widetilde f(1)=\widetilde f'(1)=0$, and $\widetilde f''(1)=f''(1)$.
Choose $a>0$ so that $\widetilde f$ is $C^3$ on $[1-a,1+a]$.  On the event $G_a=\{|U|\le a\}$,
\[
        \widetilde f(1+U)=\frac12 f''(1)U^2+O(|U|^3).
\]
On $G_a^c$, use $U=Z/n+R$: Bernstein's inequality gives $\Pp(|Z|>an/2)\le Ce^{-cn}$ and \cref{prop:residual-bounds} gives $\Pp(|R|>a/2)=O(n^{-4})$.  Since $L$ is confined to the fixed compact interval $J$ under full support and $f$ is continuous, hence bounded, on $J$, the contribution of $G_a^c$ is $O(n^{-4})$.  Hence
\[
        D_f=\frac12 f''(1)\E U^2+O(\E|U|^3)+O(n^{-4}),
\]
and the claim follows from $\E U^2=I_{k/n}/n+o(n^{-1})$ and $\E|U|^3=o(n^{-1})$.  For R\'enyi divergence, apply this to
\[
        f_\alpha(x)=\frac{x^\alpha-1}{\alpha-1},
\]
for which $f_\alpha''(1)=\alpha$, and then use
\[
        D_\alpha=(\alpha-1)^{-1}\log\{1+(\alpha-1)D_{f_\alpha}\}
        =D_{f_\alpha}+O(D_{f_\alpha}^2).
\]
\end{proof}

\section{Gaussian tangent privacy curves}\label{sec:local-privacy}

This section replaces a single global GDP claim by two separate statements.  First, the privacy loss has a Gaussian tangent expansion after normalization.  Second, the one-sided DP curve has a local-scale asymptotic with an $O(n^{-1})$ error when $\eps=t\mu_n$.  The latter is the privacy-curve statement with an error below the leading privacy signal.

\subsection{Normalized privacy loss}

Let $P=T_{n,k}$, $Q=T_{n,k+1}$, $\pi=k/n$, and define
\[
        \mu_n=\sqrt{I_\pi/n},
        \qquad
        \xi_n=\frac{Z}{\sqrt{nI_\pi}},
        \qquad Z=s_\pi^\top(N-\E_PN).
\]
Then $Z/n=\mu_n\xi_n$.

The following normalized Gaussian tangent statement is included for
interpretation.  The local hockey-stick theorem below does not use it as a
black-box trade-off theorem; it uses directly the stronger residual
decomposition $L-1=Z/n+R$ together with the call-payoff approximation for the
score.

\begin{theorem}[Interior Gaussian tangent expansion with normalized remainder]\label{thm:LAN-normalized}
Assume $\delta_{\rm full}>0$, $I_\pi>0$, and $k/n\in[\eta,1-\eta]$.  Under $P=T_{n,k}$,
\begin{equation}\label{eq:LAN-strong}
        \Lambda_{n,k}
        =\mu_n\xi_n-\frac12\mu_n^2+\rho_n,
        \qquad
        \E_P|\rho_n|=O(n^{-1}),
        \qquad
        \E_P\rho_n=O(n^{-2}).
\end{equation}
In particular,
\[
        \frac{\Lambda_{n,k}}{\mu_n}
        =\xi_n-\frac12\mu_n+o_{L^1(P)}(1),
\]
uniformly over the stated interior compositions.
Moreover $\xi_n$ satisfies a Berry--Esseen bound
\[
        \sup_x\left|P(\xi_n\le x)-\PhiG(x)\right|\le Cn^{-1/2},
\]
with $C$ depending only on $(d,\delta_{\rm full},\eta)$ and the channel.
\end{theorem}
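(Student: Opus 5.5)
The plan is to work directly with the residual decomposition $U=L_{n,k}-1=Z/n+R=\mu_n\xi_n+R$ from \cref{lem:regression} and \cref{prop:residual-bounds}, and to Taylor-expand the logarithm. Under full support and interior composition, $L$ lies in a fixed compact subinterval of $(0,\infty)$, so $\Lambda=\log(1+U)$ is bounded and $|\log(1+u)-u+u^2/2|\le C|u|^3$ holds uniformly on the attainable range of $U$. This needs no truncation event, because the range of $1+U$ is bounded away from zero. We then define
\[
        \rho_n=\Lambda_{n,k}-\mu_n\xi_n+\tfrac12\mu_n^2
        =R-\tfrac12\bigl(U^2-\mu_n^2\bigr)+O(|U|^3).
\]

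\textbf{Bound on $\E_P|\rho_n|$.} \Cref{prop:residual-bounds} gives $\E|R|=O(n^{-1})$. For the cubic remainder, the moment bounds from the proof of \cref{thm:interior-jsd} give $\E|U|^3=O(n^{-3/2})$. For the middle term we write $U^2-\mu_n^2=\mu_n^2(\xi_n^2-1)+2\mu_n\xi_nR+R^2$. Since $\E\xi_n^2=1$ and the fourth moments are bounded (Rosenthal), $\mu_n^2\E|\xi_n^2-1|=O(n^{-1})$. Cauchy--Schwarz bounds the cross term by $O(n^{-1/2}\cdot n^{-1})$, and $\E R^2=O(n^{-2})$. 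Together these give $\E|\rho_n|=O(n^{-1})$. Dividing by $\mu_n\asymp n^{-1/2}$ then gives the normalized $o_{L^1}(1)$ statement. Uniformity in $k$ follows because every constant used is uniform on $[\eta,1-\eta]$, and \cref{lem:sigma-lipschitz} shows that $I_\pi$ is bounded above and below there.

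\textbf{Signed bound on $\E_P\rho_n$.} This is the main obstacle, since absolute bounds only give $O(n^{-1})$ and we need cancellation. The route is to use exact identities and never bound $\rho_n$ in absolute value. First, $\E R=\E U-\E Z/n=0$. Second, \cref{lem:regression} gives $\E U^2=\mu_n^2+\E R^2$, so $\E(U^2-\mu_n^2)=\E R^2=O(n^{-2})$. It then remains to expand the cubic term one order further, $\log(1+u)=u-u^2/2+u^3/3+O(u^4)$, and control $\E U^3$ and $\E U^4$. The estimates in the proof of \cref{thm:interior-jsd} give $\E U^3=O(n^{-2})$, using the signed bound $\E Z^3=O(n)$ together with the mixed-moment bounds, and $\E U^4=O(n^{-2})$. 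Hence $\E\rho_n=O(n^{-2})$. A cross-check is the exact identity $\E_P\Lambda=-\KL(P\Vert Q)=-\tfrac12\mu_n^2+O(n^{-2})$, which could also be established through \cref{thm:f-divergence}-type expansions with a sharper third-order term.

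\textbf{Berry--Esseen bound.} The statistic $Z=s_\pi^\top(N-\E_PN)=\sum_i s_\pi^\top(e_{Y_i}-\E e_{Y_i})$ is a sum of $n$ independent, non-identically distributed, bounded, centered summands. Its variance is exactly $nI_\pi$, which is bounded below by a multiple of $n$ uniformly in the interior regime. The classical Berry--Esseen theorem for independent non-identical summands therefore gives $\sup_x|P(\xi_n\le x)-\PhiG(x)|\le C\sum_i\E|X_i|^3/(nI_\pi)^{3/2}=O(n^{-1/2})$, with $C$ depending only on $(d,\delta_{\rm full},\eta)$ and the channel.
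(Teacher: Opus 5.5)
Your proposal is correct and follows essentially the same route as the paper. It uses the decomposition $U=\mu_n\xi_n+R$ with a global cubic Taylor bound for the $L^1$ estimate, the bounds $\E U^3=O(n^{-2})$ and $\E U^4=O(n^{-2})$ from the proof of \cref{thm:interior-jsd} for the signed mean, and a Lyapunov-ratio Berry--Esseen bound for $\xi_n$. The differences are minor: you use a global $O(u^4)$ remainder together with $\E R=0$ and $\E U^2=\mu_n^2+\E R^2$ instead of the paper's fourth-order expansion on a Taylor event, and the classical Esseen inequality instead of the non-uniform Chen--Shao bound.
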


\begin{proof}
Write $U=L-1=Z/n+R=\mu_n\xi_n+R$.  By \cref{prop:residual-bounds}, $\E|R|=O(n^{-1})$ and $\E R^2=O(n^{-2})$.  Under full support, $L=1+U=\E[W_1(Y_\star)/W_0(Y_\star)\mid N]$ lies in a fixed compact subinterval of $(0,\infty)$, uniformly in $n$ and $k$.  Therefore the Taylor remainder for $\log(1+u)$ is globally controlled on the range of $U$:
\[
        \left|\log(1+U)-U+\frac12U^2\right|\le C|U|^3.
\]
Put $A=Z/n=\mu_n\xi_n$.  Rosenthal's inequality gives $\E|A|^3=O(n^{-3/2})$, while
\[
        \E|R|^3\le(\E R^2\,\E R^4)^{1/2}=O(n^{-3}).
\]
Hence $\E|U|^3=O(n^{-3/2})$ and
\[
        \log(1+U)=U-\frac12U^2+O_{L^1}(n^{-3/2}).
\]
Subtracting $A-\mu_n^2/2$ from the quadratic approximation gives the exact decomposition
\[
        R-\frac12(A^2-\mu_n^2)-AR-\frac12R^2.
\]
Its $L^1$ norm is $O(n^{-1})$: the first term is controlled by $\E|R|=O(n^{-1})$; for the second,
\[
        \E|A^2-\mu_n^2|
        \le \Var(A^2)^{1/2}=O(n^{-1}),
\]
because $\E Z^4=O(n^2)$; and
$\E|AR|\le(\E A^2\,\E R^2)^{1/2}=O(n^{-3/2})$, while $\E R^2=O(n^{-2})$.  This proves the $L^1$ part of \eqref{eq:LAN-strong}.  To identify the mean of the remainder, use the same Taylor-event argument as in \cref{thm:interior-jsd}.  The moment estimates there give
\[
        \E U=0,\qquad
        \E U^2=\mu_n^2+O(n^{-2}),\qquad
        \E U^3=O(n^{-2}),\qquad
        \E U^4=O(n^{-2}),
\]
while $\E|U|^5=O(n^{-5/2})$ and the complement of $\{|U|\le1/2\}$ contributes $O(n^{-4})$.  Expanding $\log(1+U)$ through fourth order therefore yields
\[
        \E_P\Lambda_{n,k}=-\frac12\mu_n^2+O(n^{-2}).
\]
Since $\E\xi_n=0$, this is exactly $\E_P\rho_n=O(n^{-2})$.

Finally write $Z=\sum_iX_{n,i}$ with independent centered uniformly bounded summands.  The total variance is $nI_{k/n}$ and
\[
        \frac{\sum_i\E|X_{n,i}|^3}{(nI_{k/n})^{3/2}}=O(n^{-1/2})
\]
uniformly on the compact composition interval.  The non-uniform Berry--Esseen inequality \cite[Theorem~2.2]{ChenShao2001}, used explicitly in \cref{lem:call-payoff}, implies the stated Kolmogorov bound after taking the supremum over $x$.
\end{proof}

\subsection{Local-scale privacy curve}

For a Gaussian shift experiment $(N(0,1),N(\mu,1))$, the one-sided GDP curve is
\begin{equation}\label{eq:gdp-delta}
        \delta_{\rm GDP}(\eps;\mu)
        =\PhiG\left(-\frac{\eps}{\mu}+\frac{\mu}{2}\right)
        -e^\eps\PhiG\left(-\frac{\eps}{\mu}-\frac{\mu}{2}\right).
\end{equation}
On the local scale $\eps=t\mu$,
\begin{equation}\label{eq:gdp-local-expansion}
        \delta_{\rm GDP}(t\mu;\mu)
        =\mu\{\phiG(t)-t\PhiG(-t)\}+O(\mu^2),
\end{equation}
uniformly for $t$ in compact subsets of $[0,\infty)$.  Indeed, for $0\le t\le T$,
\[
\begin{aligned}
        \PhiG(-t+\mu/2)&=\PhiG(-t)+\frac\mu2\phiG(t)+O_T(\mu^2),\\
        \PhiG(-t-\mu/2)&=\PhiG(-t)-\frac\mu2\phiG(t)+O_T(\mu^2),\\
        e^{t\mu}&=1+t\mu+O_T(\mu^2),
\end{aligned}
\]
and substitution in \eqref{eq:gdp-delta} gives \eqref{eq:gdp-local-expansion}.

\begin{lemma}[Call-payoff approximation for the normalized score]\label[lemma]{lem:call-payoff}
Under the assumptions of \cref{thm:LAN-normalized}, for each fixed $T<\infty$,
\[
        \sup_{0\le t\le T}\left|
        \E_P[(\xi_n-t)_+]-\E[(G-t)_+]
        \right|\le C_T n^{-1/2},
        \qquad G\sim N(0,1).
\]
\end{lemma}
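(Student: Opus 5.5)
The plan is to write the call payoff as an integral of tail probabilities and then apply the non-uniform Berry--Esseen inequality of Chen--Shao \cite[Theorem~2.2]{ChenShao2001} to the normalized score. For any real random variable $X$ with $\E|X|<\infty$ and $t\ge0$,
\[
        \E[(X-t)_+]=\int_t^\infty \Pp(X>x)\,\dd x .
\]
Applying this to both $\xi_n$ and $G$ gives
\[
        \left|\E_P[(\xi_n-t)_+]-\E[(G-t)_+]\right|
        \le\int_0^\infty\left|P(\xi_n>x)-\Pp(G>x)\right|\dd x .
\]
The right-hand side does not depend on $t\in[0,T]$, so the uniformity in $t$ will be automatic. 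In fact the bound will hold for every $t\ge0$, with a constant independent of $T$.

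The key step is to control this integral. As in the proof of \cref{thm:LAN-normalized}, write $Z=\sum_{i=1}^n X_{n,i}$ with $X_{n,i}=s_\pi^\top(e_{Y_i}-W_{x_i})$. These are independent, centered, and bounded by $2\|s_\pi\|_\infty$. By \cref{lem:sigma-lipschitz}, this bound is uniform over $\pi\in[\eta,1-\eta]$. The total variance is $nI_\pi$, and $I_\pi$ is bounded below on the compact interval: it is continuous and positive there because $v\ne0$ and $\Sigma_\pi$ is positive definite on $\mathcal T$. Hence the Lyapunov ratio satisfies
\[
        \beta_n=\frac{\sum_i\E|X_{n,i}|^3}{(nI_\pi)^{3/2}}\le Cn^{-1/2}
\]
uniformly. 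The non-uniform Berry--Esseen inequality then gives
\[
        |P(\xi_n\le x)-\PhiG(x)|\le \frac{C\beta_n}{(1+|x|)^3}
        \qquad\text{for all } x.
\]
Integrating $(1+x)^{-3}$ over $[0,\infty)$ gives a finite constant. Therefore the integral, and hence the supremum over $t$, is $O(n^{-1/2})$ with constants depending only on $(d,\delta_{\rm full},\eta)$ and the channel.

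The main obstacle is checking that the hypotheses of the cited non-uniform bound hold uniformly in $k$. Chen--Shao's Theorem~2.2 is stated for independent, not necessarily identically distributed, summands with finite third moments, normalized to unit total variance. That matches our fixed-composition setting: there are two summand laws, $n-k$ copies from $W_0$ and $k$ from $W_1$. So what remains is the uniform lower bound on $I_\pi$ and the uniform upper bound on $\|s_\pi\|$, both supplied by \cref{lem:sigma-lipschitz}.

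If one wanted to avoid the non-uniform version, the uniform Kolmogorov bound alone would not suffice, because the integral runs over an infinite range. In that case I would split at $x=\sqrt{\log n}$. On $[0,\sqrt{\log n}]$ the Kolmogorov bound gives $O(n^{-1/2}\sqrt{\log n})$, which loses a logarithm. On the tail, Bernstein's inequality for $\xi_n$ and Gaussian tails for $G$ give a negligible contribution. The non-uniform route removes this logarithm, so it is the one I would use.
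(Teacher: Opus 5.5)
Your proposal is correct and follows essentially the same route as the paper: you write the call payoff as a tail integral, apply the Chen--Shao non-uniform Berry--Esseen bound to the independent, uniformly bounded fixed-composition summands with Lyapunov ratio $O(n^{-1/2})$ (using the uniform lower bound on $I_\pi$), and integrate the $(1+|x|)^{-3}$ envelope. The only difference is cosmetic: the paper notes explicitly that the truncated second-moment term in Chen--Shao's Theorem~2.2 vanishes once the normalized summands are bounded by one, whereas you pass directly to the pure third-moment form, which is also valid.
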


\begin{proof}
Write $\xi_n=\sum_i\widetilde X_{n,i}$ with
\[
        \widetilde X_{n,i}=X_{n,i}/B_n,
        \qquad B_n^2=\sum_i\Var(X_{n,i})=nI_{k/n}.
\]
The variables $X_{n,i}$ are independent, centered, and uniformly bounded.  Because $\pi\mapsto I_\pi$ is continuous and positive for the fixed nontrivial channel, its infimum on $[\eta,1-\eta]$ is positive.  Hence
\[
        \max_i|\widetilde X_{n,i}|=O(n^{-1/2}),
        \qquad
        \sum_i\E|\widetilde X_{n,i}|^3=O(n^{-1/2}),
\]
uniformly over the stated compositions.

Chen--Shao \cite[Theorem~2.2]{ChenShao2001} applies to the independent normalized summands, whose total variance is one.  For all sufficiently large $n$, $|\widetilde X_{n,i}|\le1$, so the truncated second-moment term in their theorem is zero for every $x$, while the remaining term is bounded by
\[
        \frac{C}{(1+|x|)^3}
        \sum_i\E|\widetilde X_{n,i}|^3.
\]
After increasing $C$ to absorb the finitely many smaller values of $n$, we obtain
\[
        \left|\Pp(\xi_n\le x)-\PhiG(x)\right|
        \le \frac{C n^{-1/2}}{1+|x|^3},\qquad x\in\R,
\]
uniformly over $k/n\in[\eta,1-\eta]$.  For $t\ge0$,
\[
        \E[(\xi_n-t)_+]=\int_t^\infty \Pp(\xi_n>u)\,du,
        \qquad
        \E[(G-t)_+]=\int_t^\infty \Pp(G>u)\,du.
\]
Integrating the non-uniform bound gives
\[
        \left|\E[(\xi_n-t)_+]-\E[(G-t)_+]\right|
        \le Cn^{-1/2}\int_t^\infty \frac{du}{1+u^3}
        \le Cn^{-1/2},
\]
uniformly for $0\le t\le T$.
\end{proof}

\begin{theorem}[Local one-sided privacy curve]\label{thm:local-delta}
Assume $\delta_{\rm full}>0$, $I_{k/n}>0$, and $k/n\in[\eta,1-\eta]$.  Let $\mu_n=\sqrt{I_{k/n}/n}$.  For each fixed $T<\infty$,
\begin{equation}\label{eq:local-delta-main}
        \sup_{0\le t\le T}\left|
        \delta_{T_{n,k+1}\Vert T_{n,k}}(t\mu_n)
        -\mu_n\{\phiG(t)-t\PhiG(-t)\}
        \right|
        \le C_T n^{-1},
\end{equation}
where $C_T$ depends on $(T,d,\delta_{\rm full},\eta)$ and the channel, but not on $n$ or $k$.
\end{theorem}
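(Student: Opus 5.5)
The plan is to write the directed curve as $\delta_{Q\Vert P}(\eps)=\E_P[(L-e^\eps)_+]$ with $P=T_{n,k}$, $Q=T_{n,k+1}$ and $L=1+U$. We then substitute the residual decomposition $U=Z/n+R=\mu_n\xi_n+R$ from \cref{lem:regression,prop:residual-bounds}. With $\eps=t\mu_n$ and $0\le t\le T$, we have $e^{t\mu_n}=1+t\mu_n+\beta_n(t)$ with $0\le\beta_n(t)\le C_T\mu_n^2=O(n^{-1})$ uniformly. Hence
\[
        (L-e^{\eps})_+=\bigl(\mu_n(\xi_n-t)+R-\beta_n(t)\bigr)_+ .
\]
Since $x\mapsto x_+$ is $1$-Lipschitz, we get
\[
        \bigl|(L-e^\eps)_+-\mu_n(\xi_n-t)_+\bigr|\le |R|+\beta_n(t).
\]
Taking $P$-expectations gives
\[
        \bigl|\delta_{Q\Vert P}(t\mu_n)-\mu_n\E_P[(\xi_n-t)_+]\bigr|
        \le \E_P|R|+C_T\mu_n^2 .
\]
\Cref{prop:residual-bounds} gives $\E_P|R|=O(n^{-1})$ uniformly over $k/n\in[\eta,1-\eta]$, and $\mu_n^2=I_{k/n}/n=O(n^{-1})$ because $I_\pi$ is bounded on the compact interval by \cref{lem:sigma-lipschitz}. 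The right side is therefore $O(n^{-1})$, uniformly in $t\in[0,T]$ and in $k$.

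Next we replace $\E_P[(\xi_n-t)_+]$ by its Gaussian counterpart using \cref{lem:call-payoff}:
\[
        \sup_{0\le t\le T}\bigl|\E_P[(\xi_n-t)_+]-\E[(G-t)_+]\bigr|\le C_Tn^{-1/2}.
\]
Multiplying by $\mu_n=O(n^{-1/2})$ turns this into an $O(n^{-1})$ error. It remains to check the standard identity $\E[(G-t)_+]=\phiG(t)-t\PhiG(-t)$, which follows by integrating $\int_t^\infty(u-t)\phiG(u)\,du$ and using $\phiG'(u)=-u\phiG(u)$. Combining the three errors gives \eqref{eq:local-delta-main}, with a constant depending only on $T$, $d$, $\delta_{\rm full}$, $\eta$ and the channel, since every input bound is uniform over interior compositions.

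The main obstacle is making sure the normalization in \cref{lem:call-payoff} matches the $\xi_n$ appearing in the decomposition. The score $Z$ in \cref{lem:regression} uses $s_\pi$ with $\pi=k/n$, and $\xi_n=Z/\sqrt{nI_\pi}$ has variance exactly one by $\E_PZ^2=nI_\pi$, so the identification is exact and no Lipschitz correction in $\pi$ is needed at this stage. The other point to check is that the positive lower bound on $I_\pi$ over $[\eta,1-\eta]$, which is needed for the Berry--Esseen normalization, is available. It is, by continuity and positivity for a nontrivial channel, as already used in \cref{lem:call-payoff}. Everything else is a direct Lipschitz estimate. In particular no Taylor expansion of the logarithm and no use of \cref{thm:LAN-normalized} is required, which is why the remainder is $O(n^{-1})$ rather than the $o(1)$-relative error a trade-off-function argument would give.
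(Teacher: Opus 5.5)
Your proof is correct and follows the paper's argument essentially verbatim: the positive-part formula, a $1$-Lipschitz comparison that removes $R$ (via $\E_P|R|=O(n^{-1})$) and the $O_T(\mu_n^2)$ gap between $e^{t\mu_n}-1$ and $t\mu_n$, then \cref{lem:call-payoff} multiplied by $\mu_n$, and finally the Gaussian identity $\E[(G-t)_+]=\phiG(t)-t\PhiG(-t)$. The only cosmetic difference is that you merge the paper's two Lipschitz steps into a single inequality with error $|R|+\beta_n(t)$.
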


\begin{proof}
By the positive-part formula,
\[
        \delta_{Q\Vert P}(t\mu_n)=\E_P[(1+U-e^{t\mu_n})_+].
\]
Set $a_n(t)=e^{t\mu_n}-1=t\mu_n+O_T(\mu_n^2)$.  Since $x\mapsto (x-a)_+$ is $1$-Lipschitz,
\[
\left|\E[(U-a_n(t))_+]-\E[(\mu_n\xi_n-a_n(t))_+]\right|
\le \E|R|=O(n^{-1}).
\]
Replacing $a_n(t)$ by $t\mu_n$ changes the expectation by at most $O_T(\mu_n^2)=O_T(n^{-1})$.  Thus
\[
        \delta_{Q\Vert P}(t\mu_n)
        =\mu_n\E[(\xi_n-t)_+]+O_T(n^{-1}).
\]
By \cref{lem:call-payoff}, uniformly for $t\in[0,T]$,
\[
        \E[(\xi_n-t)_+]=\E[(G-t)_+]+O(n^{-1/2}),
        \qquad G\sim N(0,1).
\]
Multiplying by $\mu_n=O(n^{-1/2})$ gives an $O(n^{-1})$ contribution.  Finally,
\[
        \E[(G-t)_+]=\int_t^\infty (x-t)\phiG(x)\,dx=\phiG(t)-t\PhiG(-t).
\]
\end{proof}

\begin{corollary}[Local GDP agreement]\label[corollary]{cor:local-gdp-agreement}
Under the assumptions of \cref{thm:local-delta}, uniformly for $0\le t\le T$,
\[
        \delta_{T_{n,k+1}\Vert T_{n,k}}(t\mu_n)
        =\delta_{\rm GDP}(t\mu_n;\mu_n)+O_T(n^{-1}).
\]
\end{corollary}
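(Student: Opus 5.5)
This follows by combining \cref{thm:local-delta} with the local Gaussian expansion \eqref{eq:gdp-local-expansion}, both evaluated at $\mu=\mu_n$. By \cref{thm:local-delta}, uniformly for $0\le t\le T$,
\[
        \delta_{T_{n,k+1}\Vert T_{n,k}}(t\mu_n)=\mu_n\{\phiG(t)-t\PhiG(-t)\}+O_T(n^{-1}),
\]
with a constant independent of $k$ in the interior range. By \eqref{eq:gdp-local-expansion} with $\mu=\mu_n$,
\[
        \delta_{\rm GDP}(t\mu_n;\mu_n)=\mu_n\{\phiG(t)-t\PhiG(-t)\}+O_T(\mu_n^2).
\]
Subtracting these two displays and using the triangle inequality gives the claim, once the $O_T(\mu_n^2)$ remainder is shown to be $O_T(n^{-1})$ uniformly in $k$.

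Two points need checking. First, the remainder in \eqref{eq:gdp-local-expansion} must be uniform in $\mu$ over a bounded range of $\mu$, not merely as $\mu\to0$ along one sequence. The three Taylor expansions used to derive \eqref{eq:gdp-local-expansion} have second-order remainders controlled by $\sup|\phiG'|\le1$ and by $e^{T\mu}$. Both are bounded for $\mu\le\mu_{\max}$ and $t\le T$, so the error is at most $C_T\mu^2$ for every such $\mu$.

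Second, $\mu_n^2=I_{k/n}/n$, and $I_\pi$ is continuous on $[\eta,1-\eta]$ by \cref{lem:sigma-lipschitz}. Hence $\sup_{\pi\in[\eta,1-\eta]}I_\pi<\infty$, so $\mu_n\le\mu_{\max}$ for all $n\ge1$ and $\mu_n^2\le C/n$ uniformly in $k$. This converts the $O_T(\mu_n^2)$ remainder into $O_T(n^{-1})$.

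There is no real obstacle; the only care needed is this uniformity of the Gaussian-side remainder in both $t$ and the composition.
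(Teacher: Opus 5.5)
Your proposal is correct and is the paper's proof: combine \cref{thm:local-delta} with \eqref{eq:gdp-local-expansion} at $\mu=\mu_n$. Your two checks are the details the paper leaves implicit. The first is that the Gaussian-side remainder is $O_T(\mu^2)$ uniformly for bounded $\mu$. The second is that $\mu_n^2=I_{k/n}/n\le C/n$ uniformly over $k/n\in[\eta,1-\eta]$.
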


\begin{proof}
Combine \cref{thm:local-delta} with \eqref{eq:gdp-local-expansion}.
\end{proof}

\begin{corollary}[Two-sided local privacy curve]\label[corollary]{cor:two-sided}
Under the assumptions of \cref{thm:local-delta}, the reverse ordering obeys the same expansion: for each fixed $T<\infty$,
\[
        \sup_{0\le t\le T}\left|
        \delta_{T_{n,k}\Vert T_{n,k+1}}(t\mu_n)
        -\mu_n\{\phiG(t)-t\PhiG(-t)\}
        \right|\le C_T n^{-1}.
\]
Hence both orderings of the neighboring pair are governed by the same Gaussian tangent curve, and at $\eps=t\mu_n$ the pair is $(\eps,\delta)$-indistinguishable with $\delta=\mu_n\{\phiG(t)-t\PhiG(-t)\}+O_T(n^{-1})$.
\end{corollary}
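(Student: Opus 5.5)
We reduce the reverse curve to an expectation under $P=T_{n,k}$ and then reuse the ingredients of \cref{thm:local-delta}. Under full support $P$ and $Q$ are mutually absolutely continuous and $L=1+U$ lies in a fixed compact subinterval of $(0,\infty)$. Hence
\[
        \delta_{P\Vert Q}(\eps)=\E_Q[(L^{-1}-e^\eps)_+]=\E_P[L(L^{-1}-e^\eps)_+]=\E_P[(1-e^\eps L)_+].
\]
Writing $b_n(t)=1-e^{-t\mu_n}$, we factor $e^{\eps}$ out of the positive part to get
\[
        \delta_{P\Vert Q}(t\mu_n)=e^{t\mu_n}\,\E_P[(-U-b_n(t))_+].
\]
Since $e^{t\mu_n}=1+O_T(n^{-1/2})$ and the expectation is $O(\mu_n)$ (it is bounded by $\E|U|\le(\E U^2)^{1/2}=O(n^{-1/2})$), the prefactor costs only $O_T(n^{-1})$. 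Also $b_n(t)=t\mu_n+O_T(\mu_n^2)$.

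Next we apply the decomposition $U=\mu_n\xi_n+R$ from \cref{lem:regression,prop:residual-bounds}. Because $x\mapsto(x-b)_+$ is $1$-Lipschitz, replacing $-U$ by $-\mu_n\xi_n$ costs $\E|R|=O(n^{-1})$. Replacing $b_n(t)$ by $t\mu_n$ costs $O_T(n^{-1})$. This gives
\[
        \delta_{P\Vert Q}(t\mu_n)=\mu_n\E_P[(-\xi_n-t)_+]+O_T(n^{-1}),
\]
uniformly for $0\le t\le T$.

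It remains to establish the reflected call-payoff estimate
\[
        \sup_{0\le t\le T}\bigl|\E_P[(-\xi_n-t)_+]-\E[(-G-t)_+]\bigr|\le C_Tn^{-1/2}.
\]
This is the only point needing care. We handle it by applying \cref{lem:call-payoff}'s argument to $-\xi_n=\sum_i(-\widetilde X_{n,i})$. Those summands are again independent, centered, uniformly bounded and of unit total variance, so Chen--Shao gives the same non-uniform Berry--Esseen bound $Cn^{-1/2}/(1+|x|^3)$ for $-\xi_n$. Integrating the tail $\Pp(-\xi_n>u)$ over $u\in[t,\infty)$ then yields the estimate. Equivalently, one may apply the non-uniform bound for $\xi_n$ at $-u$ directly.

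By symmetry of $G$, $\E[(-G-t)_+]=\phiG(t)-t\PhiG(-t)$. Multiplying the $O(n^{-1/2})$ error by $\mu_n=O(n^{-1/2})$ gives the claimed $C_Tn^{-1}$ bound. The final $(\eps,\delta)$ statement follows by taking the maximum of this bound and \cref{thm:local-delta}.
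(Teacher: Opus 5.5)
Your proposal is correct and follows the paper's proof. Both use the change-of-measure identity $\delta_{P\Vert Q}(\eps)=\E_P[(1-e^\eps L)_+]$, remove the residual $R$ by a Lipschitz bound with $\E|R|=O(n^{-1})$, and apply Chen--Shao to the negated summands. The only cosmetic difference is the handling of the factor $e^{t\mu_n}$: you factor it out and control it with $\E|U|=O(n^{-1/2})$, whereas the paper compares the affine arguments $1-a-a\mu_n\xi_n$ and $-t\mu_n-\mu_n\xi_n$ directly using $\E|\xi_n|\le 1$.
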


\begin{proof}
Let $P=T_{n,k}$, $Q=T_{n,k+1}$, $L=\dd Q/\dd P=1+U$.  Since $L>0$, the elementary identity $L\,(L^{-1}-a)_+=(1-aL)_+$ and the change of measure $\E_Q[g]=\E_P[Lg]$ give, with $a=e^{t\mu_n}$,
\[
        \delta_{P\Vert Q}(t\mu_n)
        =\E_Q[(L^{-1}-a)_+]
        =\E_P[L\,(L^{-1}-a)_+]
        =\E_P[(1-e^{t\mu_n}(1+U))_+].
\]
Put $a=e^{t\mu_n}$.  First compare $U=\mu_n\xi_n+R$ with its score part.  Since $x\mapsto(1-a(1+x))_+$ is $a$-Lipschitz,
\[
\left|
\E_P[(1-a(1+U))_+]
-\E_P[(1-a(1+\mu_n\xi_n))_+]
\right|
\le a\E_P|R|=O_T(n^{-1}).
\]
For the remaining comparison, the two affine arguments differ by
\[
\begin{aligned}
&\{1-a-a\mu_n\xi_n\}-\{-t\mu_n-\mu_n\xi_n\}\\
&\qquad=1-a+t\mu_n+(1-a)\mu_n\xi_n.
\end{aligned}
\]
The positive-part map is $1$-Lipschitz, $|1-a+t\mu_n|=O_T(\mu_n^2)$, $|1-a|=O_T(\mu_n)$, and $\E_P|\xi_n|\le(\E_P\xi_n^2)^{1/2}=1$.  Therefore
\[
\left|
\E_P[(1-a(1+\mu_n\xi_n))_+]
-\E_P[(-t\mu_n-\mu_n\xi_n)_+]
\right|=O_T(\mu_n^2)=O_T(n^{-1}).
\]
Consequently
\[
        \delta_{P\Vert Q}(t\mu_n)
        =\mu_n\E_P[(-\xi_n-t)_+]+O_T(n^{-1}).
\]
Apply the Chen--Shao bound used in \cref{lem:call-payoff} directly to the independent summands $-X_{n,i}$.  Their variances and third absolute moments are unchanged, so, uniformly for $t\in[0,T]$,
\[
        \E_P[(-\xi_n-t)_+]=\E[(G-t)_+]+O(n^{-1/2}),\qquad G\sim N(0,1).
\]
Multiplying by $\mu_n=O(n^{-1/2})$ and using $\E[(G-t)_+]=\phiG(t)-t\PhiG(-t)$ proves the expansion; the indistinguishability statement combines it with \cref{thm:local-delta}.
\end{proof}

\subsection{Two-sided curves and the limits of global Gaussian approximation}

\Cref{thm:local-delta,cor:two-sided} together give the two-sided privacy curve.  \Cref{thm:local-delta} controls $\delta_{Q\Vert P}$ directly under $P=T_{n,k}$, and \cref{cor:two-sided} controls the reverse ordering $\delta_{P\Vert Q}$ through the exact change of measure $\E_Q[g]=\E_P[Lg]$ together with the identity $L\,(L^{-1}-a)_+=(1-aL)_+$, which expresses the reverse curve as an expectation under $P$ of a function of the same score whose Lipschitz constant is $e^\eps$ (and hence is uniformly bounded on the local scale $\eps=t\mu_n$ for $t$ in compact sets).

No quantitative transfer of Berry--Esseen rates to the alternative law is needed.  A direct normal approximation under $Q$ would require a correlation bound on $\E_P[R h]$ for the residual $R$ in $L-1=Z/n+R$; the change of measure avoids it, because it removes $R$ from the integrand at the level of the Lipschitz comparison rather than inside a transferred limit theorem.  The reduction is therefore algebraic, not analytic, and the resulting two-sided statement holds uniformly for interior compositions $k/n\in[\eta,1-\eta]$ and $t$ in compact sets, using only \cref{prop:residual-bounds} and \cref{lem:call-payoff}.

\begin{remark}[Certification warning]
A global GDP approximation with an absolute error of order $O(n^{-1/2})$, even when available, has the same order as $\mu_n$.  Therefore it does not identify the leading deviation of the trade-off curve from the trivial curve $1-\alpha$ at fixed $\alpha$, nor does it certify the fixed-$\eps$ privacy curve when the relevant $\delta$ is much smaller than $n^{-1/2}$.  Finite-$n$ certification must use \cref{sec:finite-accounting} or another explicit upper bound.
\end{remark}

\section{\texorpdfstring{Exact finite-$n$ accounting identities}{Exact finite-n accounting identities}}\label{sec:finite-accounting}

The positive-part identity follows directly from the Neyman--Pearson maximizing event.  The contribution here is its specialization to shuffled histogram experiments and its role as the finite-$n$ layer of the analysis.  The formulas in this section are exact finite sums; any numerical use therefore reduces to computing or bounding the histogram probabilities and the displayed likelihood ratios.

\begin{proposition}[Positive-part formulas]\label[proposition]{prop:positive-part}
For finite distributions $P,Q$ with $Q\ll P$ and $L=\dd Q/\dd P$,
\[
        \delta_{Q\Vert P}(\eps)=\E_P[(L-e^\eps)_+].
\]
Moreover,
\begin{equation}\label{eq:reverse-positive-part}
\begin{aligned}
        \delta_{P\Vert Q}(\eps)
        &=\E_P[(1-e^\eps L)_+]\\
        &=P\{L=0\}
          +\E_Q\!\left[(L^{-1}-e^\eps)_+{\bf 1}_{\{L>0\}}\right].
\end{aligned}
\end{equation}
The maximizing events may be taken to be $\{L>e^\eps\}$ and
$\{L<e^{-\eps}\}$, respectively, with arbitrary randomization on ties.
\end{proposition}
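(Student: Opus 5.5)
The plan is to derive both formulas from the pointwise representation of $L=\dd Q/\dd P$ on the finite space, using only $Q\ll P$. We never assume $P\ll Q$, so the set $\{L=0\}$, which may carry positive $P$-mass, has to be tracked explicitly. For the forward curve we take any event $A$ and write $Q(A)-e^\eps P(A)=\sum_{\omega\in A}P(\omega)\{L(\omega)-e^\eps\}$. This uses $Q(\omega)=L(\omega)P(\omega)$, which holds on $\supp(P)$ by definition and off $\supp(P)$ because $Q\ll P$ forces $Q(\omega)=0$. The sum is at most $\sum_\omega P(\omega)(L(\omega)-e^\eps)_+=\E_P[(L-e^\eps)_+]$, and equality holds exactly when $A$ contains $\{L>e^\eps\}$ and misses $\{L<e^\eps\}$ up to $P$-null sets. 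Ties $\{L=e^\eps\}$ contribute zero, so any randomization on them is allowed. Taking the supremum over $A$ gives $\delta_{Q\Vert P}(\eps)=\E_P[(L-e^\eps)_+]$, with maximizing event $\{L>e^\eps\}$.

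For the reverse curve we compute $P(A)-e^\eps Q(A)=\sum_{\omega\in A}P(\omega)\{1-e^\eps L(\omega)\}$ in the same way. Bounding by the positive part gives $\delta_{P\Vert Q}(\eps)=\E_P[(1-e^\eps L)_+]$, attained at $\{e^\eps L<1\}=\{L<e^{-\eps}\}$. This event automatically contains $\{L=0\}$, since $1-e^\eps\cdot 0=1>0$. Ties again contribute zero. This proves the first line of \eqref{eq:reverse-positive-part}.

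For the second line we split $\E_P[(1-e^\eps L)_+]$ over $\{L=0\}$ and $\{L>0\}$. On $\{L=0\}$ the integrand equals $1$, which produces the term $P\{L=0\}$. On $\{L>0\}$ we use the identity $(1-e^\eps L)_+=L\,(L^{-1}-e^\eps)_+$, valid because $L>0$, so that $\E_P[(1-e^\eps L)_+\one_{\{L>0\}}]=\E_P[L\,(L^{-1}-e^\eps)_+\one_{\{L>0\}}]$. We then apply the change of measure $\E_Q[g]=\E_P[Lg]$, valid for $Q\ll P$ and any $g$ defined on $\supp(P)$; since $Q$ puts no mass outside $\supp(P)$, the value of $g$ there is irrelevant. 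With $g=(L^{-1}-e^\eps)_+\one_{\{L>0\}}$ this gives the $\E_Q$ term, and the indicator keeps $L^{-1}$ well defined.

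The main point needing care is the handling of the singular set. One must check that $Q\{L=0\}=0$, which holds because $Q(\omega)=L(\omega)P(\omega)=0$ there. One must also check that the convention $L=0$ off $\supp(P)$ is harmless: those points have zero $P$- and $Q$-mass, so they contribute nothing. This ensures that no mass is lost or double-counted in the change of measure. Everything else is routine bookkeeping over finite sums.
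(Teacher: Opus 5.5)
Your proposal is correct and follows essentially the same route as the paper: you bound $Q(A)-e^\eps P(A)$ and $P(A)-e^\eps Q(A)$ by the corresponding positive parts, with equality at the threshold events, and then obtain the second line by splitting over $\{L=0\}$ and $\{L>0\}$ and changing measure. Your explicit checks of the convention $L=0$ off $\supp(P)$ and of $Q\{L=0\}=0$ spell out details the paper leaves implicit.
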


\begin{proof}
For any event $A$,
\[
        Q(A)-e^\eps P(A)=\E_P[(L-e^\eps){\bf 1}_A]
        \le \E_P[(L-e^\eps)_+].
\]
Equality is attained by $A=\{L>e^\eps\}$.
Similarly,
\[
        P(A)-e^\eps Q(A)
        =\E_P[(1-e^\eps L){\bf 1}_A]
        \le \E_P[(1-e^\eps L)_+].
\]
Splitting the last expectation over $\{L=0\}$ and $\{L>0\}$ and using
$\dd Q=L\,\dd P$ on $\{L>0\}$ gives the second line of
\eqref{eq:reverse-positive-part}.
\end{proof}

\begin{corollary}[General fixed-composition accounting formulas]\label[corollary]{cor:general-accounting}
Assume $\delta_\star>0$, and put $P=T_{n,k}$ and $Q=T_{n,k+1}$.  Then
\[
        \delta_{Q\Vert P}(\eps)
        =\sum_{\substack{N:\,\sum_yN_y=n\\T_{n,k}(N)>0}}
        T_{n,k}(N)\left(L_{n,k}(N)-e^\eps\right)_+,
\]
where $L_{n,k}(N)$ is given by \eqref{eq:general-lr-sum}.  The reverse
direction is evaluated on the same support without requiring $P\ll Q$:
\[
        \delta_{P\Vert Q}(\eps)
        =\sum_{\substack{N:\,\sum_yN_y=n\\T_{n,k}(N)>0}}
        T_{n,k}(N)\left(1-e^\eps L_{n,k}(N)\right)_+.
\]
The two-sided and worst-case shuffle curves are obtained by taking the maxima
in \eqref{eq:worst-case-dp}.
\end{corollary}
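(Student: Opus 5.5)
The plan is to specialize \cref{prop:positive-part} to $P=T_{n,k}$ and $Q=T_{n,k+1}$, so everything reduces to two inputs: the domination $Q\ll P$ and the explicit likelihood ratio $L_{n,k}$ on $\supp(T_{n,k})$. Both are supplied by \cref{lem:general-lr} under $\delta_\star>0$. That lemma gives $Q\ll P$, identifies $L_{n,k}(N)$ on $\supp P$ by the ratio \eqref{eq:general-lr-sum}, and sets $L_{n,k}=0$ off $\supp P$ as a valid version of $\dd Q/\dd P$. Because $P$ and $Q$ live on the finite set $\{N\in\Z_{\ge0}^d:\sum_yN_y=n\}$, each expectation under $P$ is a finite sum over this set weighted by $T_{n,k}(N)$.

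For the forward direction I would write
\[
\delta_{Q\Vert P}(\eps)=\E_P[(L_{n,k}-e^\eps)_+]=\sum_N T_{n,k}(N)\,(L_{n,k}(N)-e^\eps)_+ .
\]
Terms with $T_{n,k}(N)=0$ contribute nothing, so the sum may be restricted to $T_{n,k}(N)>0$. On that set we substitute \eqref{eq:general-lr-sum}.

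For the reverse direction I would use the first line of \eqref{eq:reverse-positive-part}, namely $\delta_{P\Vert Q}(\eps)=\E_P[(1-e^\eps L)_+]$. This line was derived in \cref{prop:positive-part} using only $Q\ll P$, so it does not need $P\ll Q$. The expectation is again a sum weighted by $T_{n,k}(N)$, and restricting to the support of $P$ loses nothing.

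The one point that needs care, and essentially the only obstacle, is the set of histograms with $T_{n,k}(N)>0$ but $T_{n,k+1}(N)=0$. There $L_{n,k}(N)=0$, and the summand equals $T_{n,k}(N)\cdot 1$. This is exactly the singular mass $P\{L=0\}$ from the second line of \eqref{eq:reverse-positive-part}, so the formula retains it automatically rather than dropping it.

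Finally, the two-sided curve and the worst-case shuffle curve follow by taking the maximum over the two directions and over $k$, as in the definition \eqref{eq:worst-case-dp}. This step is immediate.
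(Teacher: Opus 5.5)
Your proposal is correct and follows the paper's argument. The paper gives no separate proof, because the corollary is just \cref{prop:positive-part} specialized through \cref{lem:general-lr}: that lemma supplies $Q\ll P$ and the explicit ratio \eqref{eq:general-lr-sum} on $\supp(T_{n,k})$, and expectations under $P$ are then finite sums. Your check that histograms with $T_{n,k}(N)>0=T_{n,k+1}(N)$ contribute exactly the singular mass $P\{L=0\}$ to the reverse sum is the right point to verify, and it agrees with the second line of \eqref{eq:reverse-positive-part}.
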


\begin{algorithm}[H]
\caption{Exact finite-$n$ accounting for a fixed pair $(T_{n,k},T_{n,k+1})$}
\label{alg:accounting}
\begin{enumerate}[leftmargin=2.2em,label=\arabic*.,itemsep=0.25em]
\item \textbf{Input:} channel $(W_0,W_1)$, sample size $n$, composition
$k$, and privacy level $\eps$.
\item Compute the histogram probability tables for $T_{n,k}$ and
$T_{n-1,k}$ by dynamic programming over the alphabet.
\item Initialize $\Delta_{Q\Vert P}\gets0$ and
$\Delta_{P\Vert Q}\gets0$.
\item For each histogram $N$ with $|N|=n$ and $T_{n,k}(N)>0$:
  \begin{enumerate}[leftmargin=2em,label=(\alph*),itemsep=0.15em]
  \item compute $L_{n,k}(N)$ from \eqref{eq:general-lr-sum};
  \item add $T_{n,k}(N)(L_{n,k}(N)-e^\eps)_+$ to
  $\Delta_{Q\Vert P}$;
  \item add $T_{n,k}(N)(1-e^\eps L_{n,k}(N))_+$ to
  $\Delta_{P\Vert Q}$.
  \end{enumerate}
\item Return $(\Delta_{Q\Vert P},\Delta_{P\Vert Q})$ as the two directed
curves.
\end{enumerate}
\end{algorithm}

For fixed alphabet size, the dynamic program has polynomially many histogram states in $n$.  This is separate from the asymptotic expansion above: it is an exact accounting representation, not an appeal to the Gaussian tangent approximation.

\subsection{Binary-output specialization}

Assume $\Y=\{0,1\}$ with $p_0=W_0(1)\in(0,1)$ and
$p_1=W_1(1)\in[0,1]$.  Under $T_{n,0}$,
$K=N_1\sim\operatorname{Bin}(n,p_0)$ and
\[
        L_{n,0}(K)=\frac{n-K}{n}\frac{1-p_1}{1-p_0}+\frac{K}{n}\frac{p_1}{p_0}.
\]
Thus
\begin{equation}\label{eq:binary-delta}
        \delta_{T_{n,1}\Vert T_{n,0}}(\eps)
        =\sum_{j=0}^n \binom{n}{j}p_0^j(1-p_0)^{n-j}
        \left(L_{n,0}(j)-e^\eps\right)_+.
\end{equation}
For the reverse direction of the same canonical pair, one must use the same likelihood ratio rather than interchange $p_0$ and $p_1$.  Namely,
\begin{equation}\label{eq:binary-delta-reverse}
        \delta_{T_{n,0}\Vert T_{n,1}}(\eps)
        =\sum_{j=0}^n \binom{n}{j}p_0^j(1-p_0)^{n-j}
        \left(1-e^\eps L_{n,0}(j)\right)_+.
\end{equation}
When $T_{n,0}\ll T_{n,1}$ as well (in particular, under full support), the
reverse curve can equivalently be written as
\[
        \delta_{T_{n,0}\Vert T_{n,1}}(\eps)
        =\sum_{j=0}^n T_{n,1}(j)
        \left(L_{n,0}(j)^{-1}-e^\eps\right)_+.
\]
Without mutual absolute continuity the exact change-of-measure formula is
instead
\[
\begin{aligned}
        \delta_{T_{n,0}\Vert T_{n,1}}(\eps)
        &=T_{n,0}\{L_{n,0}=0\}\\
        &\quad+\sum_{j:\,T_{n,1}(j)>0}T_{n,1}(j)
        \left(L_{n,0}(j)^{-1}-e^\eps\right)_+.
\end{aligned}
\]
A formal interchange of $p_0$ and $p_1$ describes the opposite canonical experiment with baseline $W_1$, not the reverse privacy curve for the fixed pair $(T_{n,0},T_{n,1})$.

\subsection{Chernoff upper bound for the canonical pair}

The bound below is an elementary exponential relaxation of the positive-part
identity of \cref{prop:positive-part}, stated here only because it is
closed-form and requires no numerical transform.  It is not competitive with,
and is not offered as an alternative to, the decomposition-based computable
bounds of Su--Cheng--Wang \cite{SuChengWang2025Decomposition} or the
characteristic-function accountants of
\cite{KoskelaJalkoHonkela2020,ZhuWang2021,KoskelaHeikkilaHonkela2023}, all of
which are tighter at the cost of numerical evaluation.  No novelty is claimed
for it.

Let $Z=r(Y)$ for $Y\sim W_0$ and $M(\lambda)=\E e^{\lambda Z}$.  Let $M'(\lambda)=\E[Ze^{\lambda Z}]$ and $\tau=e^\eps-1$.

\begin{proposition}[Chernoff-type upper bound]\label[proposition]{prop:chernoff}
For the canonical pair and every $\lambda>0$,
\[
        \delta_{T_{n,1}\Vert T_{n,0}}(\eps)
        \le e^{-\lambda n\tau}M(\lambda)^{n-1}\{M(\lambda)+M'(\lambda)\}.
\]
\end{proposition}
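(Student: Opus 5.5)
We use the positive-part formula of \cref{prop:positive-part} together with the canonical linear likelihood ratio of \cref{lem:canonical-lr}. Under $P=T_{n,0}$ we have $L_{n,0}=1+U_n$ with $U_n=n^{-1}\sum_{i=1}^n Z_i$, where $Z_i=r(Y_i)$ and the $Y_i$ are i.i.d.\ $W_0$. Hence
\[
        \delta_{T_{n,1}\Vert T_{n,0}}(\eps)=\E_P[(U_n-\tau)_+]
        =\frac1n\E_P\Bigl[\Bigl(\textstyle\sum_i Z_i-n\tau\Bigr)_+\Bigr].
\]
Write $S=\sum_iZ_i$. The proof bounds this positive part by an exponential and then uses independence.

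First we need an elementary inequality for $\lambda>0$: $(x)_+\le x\,e^{\lambda x}$ for $x\ge0$, and $(x)_+=0$ for $x<0$. That alone does not give a nonnegative majorant. A cleaner route is to write
\[
        (S-n\tau)_+=(S-n\tau)\one\{S>n\tau\}\le S\,\one\{S>n\tau\}\le \ldots
\]
This is not quite right when $\tau$ is small, so instead we aim directly for the form $e^{-\lambda n\tau}\E[(\cdot)e^{\lambda S}]$.

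The key pointwise bound is
\[
        (S-n\tau)_+\le \bigl(n+S\bigr)e^{\lambda(S-n\tau)}\qquad\text{for } S\ge -n.
\]
It holds for the following reasons. If $S\le n\tau$, the left side is $0$ and the right side is $\ge0$, because $S=\sum_i r(Y_i)\ge -n$ (each $r\ge-1$). If $S>n\tau$, then $e^{\lambda(S-n\tau)}\ge1$ and $n+S\ge S-n\tau$, since $\tau\ge 0$ when $\eps\ge0$.

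Taking expectations and dividing by $n$ gives
\[
        \delta\le \frac{e^{-\lambda n\tau}}{n}\Bigl(n\,\E e^{\lambda S}+\E[Se^{\lambda S}]\Bigr).
\]
By independence, $\E e^{\lambda S}=M(\lambda)^n$. By symmetry, $\E[Se^{\lambda S}]=\sum_i\E[Z_ie^{\lambda Z_i}]\prod_{j\ne i}\E e^{\lambda Z_j}=nM'(\lambda)M(\lambda)^{n-1}$. Substituting yields $e^{-\lambda n\tau}M(\lambda)^{n-1}\{M(\lambda)+M'(\lambda)\}$, which is the claimed bound.

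The only delicate step is choosing a pointwise majorant of $(S-n\tau)_+$ that is nonnegative everywhere and factorizes into exactly the product $M^{n-1}(M+M')$. The weight $(n+S)=\sum_i(1+Z_i)=\sum_i W_1(Y_i)/W_0(Y_i)$ is the natural choice. It is nonnegative because it is a sum of likelihood ratios, and its expectation against $e^{\lambda S}$ produces the stated combination. Everything else is routine. Finiteness of $M$ and $M'$ follows because $Z$ takes finitely many values under $\delta_\star>0$.
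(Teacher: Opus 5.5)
Your proof is correct and is essentially the paper's argument. The paper bounds $(U_n-\tau)_+\le(1+U_n)\one\{U_n\ge\tau\}\le(1+U_n)e^{\lambda n(U_n-\tau)}$, which after multiplying by $n$ is exactly your majorant $(n+S)e^{\lambda(S-n\tau)}$, and it then factorizes the expectation by independence as you do; the abandoned exploratory attempts in your second paragraph can simply be deleted, since the final pointwise bound stands on its own.
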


\begin{proof}
By \cref{prop:positive-part,lem:canonical-lr},
\[
        \delta=\E[(U_n-\tau)_+]\le \E[(1+U_n){\bf 1}\{U_n\ge\tau\}].
\]
On $\{U_n\ge\tau\}$, $e^{\lambda n(U_n-\tau)}\ge1$.  With $S_n=nU_n=\sum_iZ_i$,
\[
        \E[(1+U_n)e^{\lambda n(U_n-\tau)}]
        =e^{-\lambda n\tau}\left(M(\lambda)^n+M'(\lambda)M(\lambda)^{n-1}\right).
\]
\end{proof}

\subsection{Numerical illustration and comparison}\label{subsec:numerics}

We illustrate the constant $I_\pi$ on a three-symbol channel, where the fixed-composition correction is genuinely present, and record the binary-output case separately as the variance-symmetric example.  We then compare the local privacy curve with two reference objects: the i.i.d.\ mixture heuristic and published mechanism-wide amplification bounds.

All numerical entries below are evaluations of formulas stated either in this manuscript or in the cited source.  To specify the finite-sum computation completely, order the $n$ categorical laws as $p_1,\ldots,p_n$, set $H_0(0)=1$ and $H_0(a)=0$ for $a\ne0$, and iterate
\[
        H_j(a)=\sum_{y\in\Y}p_j(y)H_{j-1}(a-e_y),
        \qquad j=1,\ldots,n.
\]
Here $H_{j-1}(b)=0$ whenever $b$ has a negative coordinate.  For $T_{n,k}$, the list contains $n-k$ copies of $W_0$ and $k$ copies of $W_1$, and $H_n$ is the exact histogram table.  Jensen--Shannon and hockey-stick divergences are then evaluated by their defining finite sums.  The reported boundary-pair $\eps$ is obtained by bisection on the maximum of the two exact directed curves, with final bracket width below $10^{-12}$.  For $I_\pi$, if $B$ has orthonormal columns spanning $\mathcal T$, we evaluate $(B^\top v)^\top(B^\top\Sigma_\pi B)^{-1}(B^\top v)$.  Values are rounded only after these evaluations.

\paragraph{A three-symbol channel.}
Take $W_0=(0.70,0.20,0.10)$ and $W_1=(0.15,0.55,0.30)$.  Here $I_\pi$ depends on the composition and is distinct both from the channel $\chi^2$ in either direction and from the i.i.d.\ mixture Fisher constant $I_\pi^{\mathrm{mix}}=v^\top(\diag f_\pi-f_\pi f_\pi^\top)^+v$ with $f_\pi=(1-\pi)W_0+\pi W_1$.

\begin{table}[H]
\centering
\begin{tabular}{rrrrr}
\toprule
$\pi$ & $I_\pi$ & $\chi^2(W_1\Vert W_0)$ & $I_\pi^{\mathrm{mix}}$ & mixture underestimate \\
\midrule
0.2 & 1.566 & 1.445 & 1.252 & $20.0\%$ \\
0.3 & 1.635 & 1.445 & 1.217 & $25.6\%$ \\
0.5 & 1.794 & 1.445 & 1.238 & $31.0\%$ \\
0.7 & 1.988 & 1.445 & 1.402 & $29.4\%$ \\
\bottomrule
\end{tabular}
\caption{Fixed-composition constant $I_\pi$ for $W_0=(.70,.20,.10)$, $W_1=(.15,.55,.30)$: composition-dependent, and distinct from $\chi^2$ and from the i.i.d.\ mixture constant.}
\label{tab:Ipi}
\end{table}

Across compositions $\pi\in[0.05,0.95]$, $I_\pi$ ranges over $[1.47,2.30]$, a factor of $1.56$, whereas the channel $\chi^2(W_1\Vert W_0)=1.445$ is a single number.  Over the same interval the relative mixture underestimate ranges from $6.5\%$ to $31.3\%$; the four rows in \cref{tab:Ipi} show values between $20.0\%$ and $31.0\%$.  The exact Jensen--Shannon divergence of the shuffled experiment converges to $I_\pi/(8n)$, not to the mixture value.

\begin{table}[H]
\centering
\begin{tabular}{rrrr}
\toprule
$n$ & $8n\,\JSD(T_{n,k}\Vert T_{n,k+1})$ & rel.\ to $I_\pi$ & rel.\ to $I_\pi^{\mathrm{mix}}$ \\
\midrule
200 & 1.6373 & $0.15\%$ & $34.5\%$ \\
400 & 1.6361 & $0.07\%$ & $34.4\%$ \\
800 & 1.6355 & $0.04\%$ & $34.4\%$ \\
\bottomrule
\end{tabular}
\caption{Exact JSD at $\pi=0.3$ converges to $I_\pi/(8n)$ with $I_\pi=1.635$, not to the mixture value $I_\pi^{\mathrm{mix}}=1.217$.}
\label{tab:jsd-num}
\end{table}

On the local privacy scale the one-sided curve with the fixed-composition $\mu_n=\sqrt{I_\pi/n}$ tracks the exact curve, with absolute error $O(n^{-1})$ for fixed $t$ (uniformly for $t$ in compact sets), while the curve computed from the mixture covariance has a different, optimistic first-order coefficient, equivalently a different $\sqrt n$-scaled limit (it reports too small a $\delta$).

\begin{table}[H]
\centering
\begin{tabular}{rrrrr}
\toprule
$t$ & $\eps=t\mu_n$ & $\delta_{\mathrm{exact}}$ & fixed-comp.\ GDP & mixture-cov.\ GDP \\
\midrule
0.5 & 0.0226 & $8.96\cdot10^{-3}$ & $9.04\cdot10^{-3}\ (+0.9\%)$ & $6.88\cdot10^{-3}\ (-23\%)$ \\
1.0 & 0.0452 & $3.73\cdot10^{-3}$ & $3.85\cdot10^{-3}\ (+3.2\%)$ & $2.43\cdot10^{-3}\ (-35\%)$ \\
1.5 & 0.0678 & $1.27\cdot10^{-3}$ & $1.37\cdot10^{-3}\ (+7.7\%)$ & $6.71\cdot10^{-4}\ (-47\%)$ \\
2.0 & 0.0904 & $3.47\cdot10^{-4}$ & $4.02\cdot10^{-4}\ (+16\%)$ & $1.42\cdot10^{-4}\ (-59\%)$ \\
\bottomrule
\end{tabular}
\caption{One-sided privacy curve at $\pi=0.3$, $n=800$, with $\mu_n=\sqrt{I_\pi/n}=0.0452$.  The fixed-composition GDP curve tracks the exact curve, with absolute error $O(n^{-1})$ at fixed $t$ by \cref{cor:local-gdp-agreement}; the mixture-covariance GDP curve uses the wrong limiting parameter and is privacy-optimistic.}
\label{tab:curve-num}
\end{table}

\paragraph{Binary output: the composition-independent case.}
For a variance-symmetric binary channel the two covariances $\Sigma_0$ and $\Sigma_1$ coincide, so $\Sigma_\pi=\Sigma_0$ is independent of $\pi$ and $I_\pi$ is constant in $\pi$.  For $\eps_0=1$ randomized response $I_\pi=\chi^2(W_1\Vert W_0)=1.086$ for all $\pi$.  This constant is still strictly larger than the mixture constant: $I_\pi^{\rm mix}=0.854$ at $\pi=0.5$ (a $21.4\%$ underestimate), and $8n$ times the exact JSD of the shuffled binary experiment converges to $I_\pi$, not to $I_\pi^{\rm mix}$ (at $\pi=0.3$, $8n\,\JSD\to1.086$, which is $22.8\%$ above $I_\pi^{\rm mix}=0.884$).  Thus the mixture substitution is privacy-optimistic for the binary case as well; what binary symmetric channels lack is composition dependence, which is why the three-symbol example is used to display the variation of $I_\pi$ with $\pi$.  For example, the variance-asymmetric binary channel $W_0=(0.3,0.7)$, $W_1=(0.6,0.4)$ has
\[
        I_\pi=\frac{0.09}{0.21+0.03\pi},
\]
so composition dependence already occurs with two output symbols.  The local approximation is accurate here too: at $\pi=0.3$ and $t=1$, the directed curve $\delta_{T_{n,k+1}\Vert T_{n,k}}(t\mu_n)$ equals $3.83\cdot10^{-3}$ at $n=200$ against the GDP value $3.86\cdot10^{-3}$, and $1.70\cdot10^{-3}$ at $n=1000$ against $1.71\cdot10^{-3}$.

\paragraph{Comparison with worst-case amplification bounds.}
The amplification bounds of Feldman--McMillan--Talwar \cite{FeldmanMcMillanTalwar2021,FeldmanMcMillanTalwar2023} and Balle et al.\ \cite{BalleBellGasconNissim2019} answer a different question: they upper-bound $\eps$ \emph{uniformly over all neighboring datasets} for a target $\delta$.  For binary $\eps_0=1$ randomized response, \cref{tab:wc-compare} compares representative mechanism-wide upper bounds at $\delta=10^{-5}$ with the exact two-sided curve of the boundary pair $(T_{n,0},T_{n,1})$.  The FMT values are direct evaluations of the displayed pure-LDP randomized-response specializations of Theorem~3.1 in \cite{FeldmanMcMillanTalwar2021} and Theorem~3.2 in \cite{FeldmanMcMillanTalwar2023}.  The latter theorem is subject to the erratum included in the cited version; binary randomized response belongs to the corrected restricted class, so the specialization used here remains valid.  The Balle et al.\ values are obtained by numerical inversion of Theorem~5.3 in \cite{BalleBellGasconNissim2019}.  The exact column is a pair-specific benchmark, not a claim that the boundary pair is the finite-$n$ global maximizer, and none of the paper's theorems depends on the external numerical columns.  This table is diagnostic only; it should not be read as a lower bound on the true worst-case privacy profile.

\begin{table}[H]
\centering
\begin{tabular}{rrrrr}
\toprule
$n$ & exact boundary-pair $\eps$ & FMT (2021) & FMT (2023) & Balle et al. \\
\midrule
1000  & 0.105 & $0.532\ (5.0\times)$ & $0.457\ (4.3\times)$ & $0.442\ (4.2\times)$ \\
2000  & 0.071 & $0.402\ (5.6\times)$ & $0.342\ (4.8\times)$ & $0.301\ (4.2\times)$ \\
5000  & 0.043 & $0.271\ (6.4\times)$ & $0.229\ (5.4\times)$ & $0.183\ (4.3\times)$ \\
10000 & 0.029 & $0.199\ (6.9\times)$ & $0.167\ (5.8\times)$ & $0.126\ (4.4\times)$ \\
\bottomrule
\end{tabular}
\caption{Binary $\eps_0=1$ randomized response, $\delta=10^{-5}$: representative published mechanism-wide amplification upper bounds and the exact two-sided boundary-pair curve.  Ratios are relative to this pair-specific benchmark; the table is diagnostic only, does not identify the exact finite-$n$ worst-case composition, and is not used in any proof.}
\label{tab:wc-compare}
\end{table}

Relative to the exactly evaluated boundary pair, the displayed mechanism-wide bounds are larger by factors of $4.2$--$6.9$.  This comparison illustrates the difference between a pair-specific exact curve and a mechanism-wide upper bound; it is not a numerical claim about the unknown exact global maximizer.  The quantity $I_\pi$ is likewise a per-composition local constant rather than a worst-case certificate, while exact certification continues to use the accounting layer.

\section{Unbundled multi-message shuffling}\label{sec:unbundled}

In the unbundled model, each user sends $m$ independent messages and the shuffler permutes all $nm$ messages individually.  Here $m$ is fixed as $n\to\infty$.

The multi-message shuffle model has been studied extensively for protocol-level
accuracy, communication, and separation results; representative works include
Balle et al. \cite{BalleEtAlMultiMessage2020}, Ghazi et al.
\cite{GhaziEtAlMultiMessage2021}, and Girgis--Diggavi
\cite{GirgisDiggavi2024}.  The narrower purpose of this section is different:
for independent repetition of a fixed channel, it records the exact canonical
histogram likelihood ratio, its Hoeffding decomposition, and the leading
Gaussian parameter.  No protocol-level optimality or utility claim is made.

\subsection{Exact canonical likelihood ratio}

Let $N^{(m)}=(N_y)$ be the message-level histogram of all $nm$ messages.  For the canonical pair, define $w(y)=W_1(y)/W_0(y)$.

\begin{theorem}[Degree-$m$ likelihood ratio]\label{thm:unbundled-lr}
Assume $W_1\ll W_0$, so that $w=W_1/W_0$ is finite on $\supp(W_0)$; the
standing full-support assumption $\delta_{\rm full}>0$ is sufficient.  Then,
under $T^{(m)}_{n,0}$,
\[
        L^{(m)}_{n,0}(N)
        =\frac{1}{\binom{nm}{m}}
        \sum_{(m_y):\,0\le m_y\le N_y,\,\sum_y m_y=m}
        \prod_y \binom{N_y}{m_y} w(y)^{m_y}.
\]
Equivalently,
\[
        L^{(m)}_{n,0}(N)
        =\frac{[t^m]\prod_y(1+w(y)t)^{N_y}}{\binom{nm}{m}}.
\]
\end{theorem}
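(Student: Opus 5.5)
The plan is to compute the two histogram probabilities directly and divide; this is the degree-$m$ analogue of \cref{lem:canonical-lr}. Under $T^{(m)}_{n,0}$ all $nm$ messages are i.i.d.\ $W_0$, so
\[
        T^{(m)}_{n,0}(N)=\frac{(nm)!}{\prod_y N_y!}\prod_y W_0(y)^{N_y}.
\]
Under $T^{(m)}_{n,1}$ one user sends $m$ messages i.i.d.\ from $W_1$, and the other $(n-1)m$ messages are i.i.d.\ from $W_0$. I will condition on the histogram $(m_y)$ of the distinguished user's $m$ messages, where $\sum_y m_y=m$ and $0\le m_y\le N_y$. This gives
\[
        T^{(m)}_{n,1}(N)=\sum_{(m_y)}\frac{m!}{\prod_y m_y!}\prod_y W_1(y)^{m_y}\cdot
        \frac{((n-1)m)!}{\prod_y (N_y-m_y)!}\prod_y W_0(y)^{N_y-m_y}.
\]

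Next I divide term by term. The powers of $W_0$ combine to give $\prod_y w(y)^{m_y}$; this step uses $W_1\ll W_0$, so that $w$ is finite wherever $N_y>0$ is possible. Full support of $W_0$ also ensures that $T^{(m)}_{n,0}(N)>0$ for every histogram with total $nm$. The factorials then combine as
\[
        \frac{m!\,((n-1)m)!}{(nm)!}\prod_y\frac{N_y!}{m_y!(N_y-m_y)!}
        =\binom{nm}{m}^{-1}\prod_y\binom{N_y}{m_y},
\]
which gives the first displayed formula. The only step that needs care is this bookkeeping. Terms with $m_y>N_y$ must vanish, which is consistent with the convention $\binom{N_y}{m_y}=0$. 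The multinomial coefficient for the user's messages must be the ordered-sequence count $m!/\prod_y m_y!$, since the messages are exchangeable and only the histogram is released.

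For the generating-function form, I expand $\prod_y(1+w(y)t)^{N_y}$ by the binomial theorem in each factor. This gives $\sum_{(m_y)}\prod_y\binom{N_y}{m_y}w(y)^{m_y}t^{\sum m_y}$, and extracting the coefficient of $t^m$ recovers exactly the sum above.

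As a sanity check, I will verify that $m=1$ reduces to $\frac{1}{n}\sum_y N_y w(y)$, matching \cref{lem:canonical-lr}. I will also check that $\E L=1$. This holds because the formula equals $\E[\prod_{j=1}^m w(Y_{(j)})\mid N]$ for a uniformly random $m$-subset of the messages, which gives an alternative Bayes-formula derivation I could present instead.
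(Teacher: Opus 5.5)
Your proof is correct, but it takes a different route from the paper. You compute both histogram laws explicitly as multinomial masses, with a convolution over the changed user's counts $(m_y)$ under the alternative, and divide term by term. The formula then comes out of the factorial identity $m!\,((n-1)m)!/(nm)!=\binom{nm}{m}^{-1}$. This is the mechanism of the paper's proof of \cref{lem:canonical-lr}, lifted to degree $m$.

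The paper instead works at the level of the message sequence. The unshuffled likelihood ratio is $\prod_{j\in S}w(Y_j)$ for the changed user's position set $S$. Under the null, conditional on the histogram, $S$ is a uniformly random $m$-subset of the $nm$ messages. So the histogram likelihood ratio is the average of this product over all $m$-subsets, and grouping subsets by symbol counts yields the factor $\prod_y\binom{N_y}{m_y}$. This is the Bayes-formula route you mention only as an alternative. Your version is fully elementary and makes the bookkeeping explicit. The paper's version is shorter and directly gives the $U$-statistic representation $\binom{nm}{m}^{-1}\sum_{|B|=m}\prod_{j\in B}w(Y_j)$, on which the Hoeffding decomposition in \cref{prop:unbundled-canonical} is built.

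One small correction: under the theorem's weaker hypothesis $W_1\ll W_0$, you cannot assert $T^{(m)}_{n,0}(N)>0$ for every $N$. Restrict instead to $N\in\supp T^{(m)}_{n,0}$, where $N_y>0$ forces $W_0(y)>0$. Then note that $W_1\ll W_0$ implies $T^{(m)}_{n,1}\ll T^{(m)}_{n,0}$, so your quotient is a valid version of the likelihood ratio $T^{(m)}_{n,0}$-almost surely.
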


\begin{proof}
Under the null all $nm$ messages are drawn from $W_0$.  Under the alternative, the $m$ messages of the changed user are drawn from $W_1$ and the remaining $(n-1)m$ from $W_0$.  Conditional on the unordered histogram under the null, the changed user's $m$ positions are a uniformly random $m$-subset of the $nm$ message positions.  Averaging the product likelihood ratio over this subset gives the formula.
\end{proof}

\subsection{Leading Gaussian parameter}

\begin{proposition}[Canonical unbundled leading parameter]\label[proposition]{prop:unbundled-canonical}
Assume full support and fixed $m$.  Put $\chi^2=\chi^2(W_1\Vert W_0)$.  For the canonical unbundled pair,
\[
        L^{(m)}_{n,0}-1=\frac1n\sum_{i=1}^{nm} r(Y_i)+R^{(m)}_n,
\]
and the orthogonal decomposition below gives the exact identity
\[
        \E[(R^{(m)}_n)^2]
        =\sum_{q=2}^m\frac{\binom{m}{q}^2(\chi^2)^q}{\binom{nm}{q}}
        =O(n^{-2}).
\]
Consequently
\[
        \Var(L^{(m)}_{n,0}-1)
        =\frac{m\chi^2}{n}+O(n^{-2}),
        \qquad
        \mu_{n,\rm unb}^2(m)=\frac{m\chi^2}{n}.
\]
If $W_1\ne W_0$, then
\[
        \frac{L^{(m)}_{n,0}-1}{\sqrt{m\chi^2/n}}
        \Longrightarrow N(0,1).
\]
\end{proposition}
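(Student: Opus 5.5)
The plan is to rewrite the exact likelihood ratio of \cref{thm:unbundled-lr} as a complete U-statistic in the centered variables $r(Y_i)=w(Y_i)-1$. Then I would read off the residual from its Hoeffding (orthogonal) decomposition. Under $T^{(m)}_{n,0}$ the $nm$ messages $Y_1,\ldots,Y_{nm}$ are i.i.d.\ $W_0$. The proof of \cref{thm:unbundled-lr} shows that
\[
        L^{(m)}_{n,0}=\binom{nm}{m}^{-1}\sum_{|S|=m}\prod_{i\in S}w(Y_i),
\]
where $S$ ranges over $m$-subsets of the message positions. This is just the coefficient-extraction formula written over positions rather than symbol counts. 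Substituting $w=1+r$ and expanding each product gives $\prod_{i\in S}(1+r(Y_i))=\sum_{A\subseteq S}\prod_{i\in A}r(Y_i)$.

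Next I would exchange the order of summation. A fixed $q$-subset $A$ lies in exactly $\binom{nm-q}{m-q}$ of the $m$-subsets $S$, and
\[
        \binom{nm-q}{m-q}\Big/\binom{nm}{m}=\binom{m}{q}\Big/\binom{nm}{q}.
\]
Hence
\[
        L^{(m)}_{n,0}=\sum_{q=0}^m\frac{\binom{m}{q}}{\binom{nm}{q}}\,H_q,
        \qquad H_q=\sum_{|A|=q}\prod_{i\in A}r(Y_i).
\]
The $q=0$ term is $1$. The $q=1$ term is $\frac{m}{nm}\sum_i r(Y_i)=\frac1n\sum_{i=1}^{nm}r(Y_i)$. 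So $R^{(m)}_n$ is exactly the sum of the terms with $q\ge2$.

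Orthogonality comes from $\E r(Y)=0$ and independence. For distinct subsets $A\ne A'$, the product $\prod_A r\prod_{A'}r$ contains some index to the first power, so its expectation vanishes. For $A=A'$ the expectation is $(\E r^2)^{q}=(\chi^2)^q$. Therefore $\E H_q^2=\binom{nm}{q}(\chi^2)^q$ and the $H_q$ are mutually orthogonal. This yields
\[
        \E[(R^{(m)}_n)^2]=\sum_{q=2}^m\frac{\binom{m}{q}^2}{\binom{nm}{q}^2}\binom{nm}{q}(\chi^2)^q
        =\sum_{q=2}^m\frac{\binom{m}{q}^2(\chi^2)^q}{\binom{nm}{q}},
\]
which is the claimed identity. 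Since $m$ is fixed, $\binom{nm}{q}\ge c_m n^q$, so every term with $q\ge2$ is $O(n^{-2})$.

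The same orthogonality gives the variance statement. The linear term has variance $nm\chi^2/n^2=m\chi^2/n$ and is orthogonal to $R^{(m)}_n$, so $\Var(L^{(m)}_{n,0}-1)=m\chi^2/n+O(n^{-2})$.

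For the CLT, assume $W_1\ne W_0$, so that $\chi^2>0$. Divide by $\sqrt{m\chi^2/n}$. The linear part becomes $(nm\chi^2)^{-1/2}\sum_{i=1}^{nm}r(Y_i)$. This is a normalized sum of i.i.d.\ bounded centered variables (boundedness uses full support), and it converges to $N(0,1)$ by the classical CLT. The normalized residual has second moment $O(n^{-2})\cdot n/(m\chi^2)=O(n^{-1})\to0$, so it tends to $0$ in probability. Slutsky's lemma then concludes.

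The only step needing care is the combinatorial coefficient $\binom{m}{q}/\binom{nm}{q}$. Beyond that, one must check that centering by $r$, rather than $w$, is what makes the Hoeffding components orthogonal. The rest is routine.
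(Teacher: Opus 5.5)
Your proposal is correct and follows essentially the same route as the paper. Both arguments write $L^{(m)}_{n,0}$ as the $U$-statistic with kernel $\prod_j w(y_j)$, decompose it into the mutually orthogonal components $\binom{m}{q}\binom{nm}{q}^{-1}\sum_{|A|=q}\prod_{i\in A}r(Y_i)$, sum the order-$q\ge2$ variances, and finish with the CLT for the linear term plus Slutsky. The only difference is cosmetic: you derive the coefficient $\binom{m}{q}/\binom{nm}{q}$ by expanding $\prod(1+r)$ and counting supersets, whereas the paper identifies the canonical Hoeffding projections $h_q=\prod_j r(y_j)$ by inclusion--exclusion and uses the standard Hoeffding coefficients.
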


\begin{proof}
Put $N_*=nm$ and
\[
        h(y_1,\ldots,y_m)=\prod_{j=1}^m w(y_j).
\]
Under the null the $N_*$ messages are i.i.d.\ with law $W_0$, $\E h=1$, and \cref{thm:unbundled-lr} is the normalized $U$-statistic
\[
        U_{N_*}=\binom{N_*}{m}^{-1}
        \sum_{1\le i_1<\cdots<i_m\le N_*}
        h(Y_{i_1},\ldots,Y_{i_m}).
\]
Because $\E_{W_0}w(Y)=1$, the canonical Hoeffding projection of order $q$ is explicitly
\[
\begin{aligned}
        h_q(y_1,\ldots,y_q)
        &=\sum_{A\subseteq\{1,\ldots,q\}}(-1)^{q-|A|}
          \prod_{j\in A}w(y_j)\\
        &=\prod_{j=1}^q\{w(y_j)-1\}
          =\prod_{j=1}^q r(y_j).
\end{aligned}
\]
Thus the exact Hoeffding decomposition is
\begin{equation}\label{eq:hoeffding-unbundled}
        U_{N_*}-1
        =\sum_{q=1}^m\binom{m}{q}
          \binom{N_*}{q}^{-1}
          \sum_{1\le i_1<\cdots<i_q\le N_*}
          \prod_{j=1}^q r(Y_{i_j}).
\end{equation}
The $q=1$ term is
\[
        \frac{m}{N_*}\sum_{i=1}^{N_*}r(Y_i)
        =\frac1n\sum_{i=1}^{nm}r(Y_i).
\]
For two distinct index sets $A$ and $B$, possibly of different sizes, the product of the corresponding canonical monomials has expectation zero: an index in the symmetric difference contributes an independent factor with mean $\E r(Y)=0$.  Hence all distinct summands in \eqref{eq:hoeffding-unbundled}, including summands of different orders, are orthogonal.  Since $\E r(Y)^2=\chi^2$, the variance of the order-$q$ component is exactly
\[
        \binom{m}{q}^2\binom{N_*}{q}^{-2}
        \binom{N_*}{q}(\chi^2)^q
        =\frac{\binom{m}{q}^2(\chi^2)^q}{\binom{N_*}{q}}.
\]
Summing over $q\ge2$ proves the displayed formula for $\E[(R_n^{(m)})^2]$; because $m$ is fixed, its leading term is of order $N_*^{-2}=O(n^{-2})$.  The $q=1$ variance is $m\chi^2/n$.  Finally, after division by $\sqrt{m\chi^2/n}$, the first-order term is the standardized sum $(nm\chi^2)^{-1/2}\sum_{i=1}^{nm}r(Y_i)$, which converges to $N(0,1)$ by the central limit theorem, while the normalized remainder tends to zero in $L^2$ because its second moment is $O(n^{-1})$.  Slutsky's theorem completes the proof.
\end{proof}

We do not claim a proportional-composition multi-message extension here; proving one would require a separate posterior residual analysis analogous to \cref{prop:residual-bounds}.

\subsection{Bundled versus unbundled: leading parameter only}

In the bundled model the $m$ messages of a user are treated as a single message in alphabet $\Y^m$, and only the $n$ bundled messages are shuffled.  With $w=dW_1/dW_0$,
\[
\begin{aligned}
        1+\chi^2(W_1^{\otimes m}\Vert W_0^{\otimes m})
        &=\E_{W_0^{\otimes m}}\left[\prod_{j=1}^m w(Y_j)^2\right]\\
        &=\left(\E_{W_0}w(Y)^2\right)^m
        =(1+\chi^2(W_1\Vert W_0))^m.
\end{aligned}
\]
Hence the bundled canonical leading parameter is
\[
        \mu_{n,\rm bund}^2(m)=\frac{(1+\chi^2)^m-1}{n},
\]
whereas the unbundled parameter is $m\chi^2/n$.  Thus for $m\ge2$ and $\chi^2>0$,
\[
        \mu_{n,\rm unb}^2(m)<\mu_{n,\rm bund}^2(m).
\]
This is a comparison of leading Gaussian parameters.  It does not by itself imply a finite-$n$ dominance ordering of the exact $(\eps,\delta)$ privacy curves.  A finite-$n$ dominance claim would require either a direct positive-part comparison or explicit error bounds smaller than the parameter gap in the desired regime.

\section{Randomized-response boundary}\label{sec:boundary}

This section records how Gaussian accuracy deteriorates when the local randomized-response parameter grows with $n$.  It is included to mark the boundary of the fixed full-support regime, not as a replacement for a full non-Gaussian theory.

For binary randomized response with local parameter $\eps_0=\eps_0(n)\ge0$,
\[
        W_0(1)=q_n=(1+e^{\eps_0})^{-1},
        \qquad W_1(1)=1-q_n.
\]
Let
\[
        a_n=\frac{e^{\eps_0}}{n}.
\]
The subcritical Gaussian regime corresponds to $a_n\to0$.

\begin{proposition}[Boundary Berry--Esseen scale]\label[proposition]{prop:boundary-BE}
For the canonical shuffled randomized-response pair, assume $\eps_0(n)>0$ for all sufficiently large $n$ and $a_n\to0$.  Then
\[
        \sup_x\left|\Pp\left(\frac{\Lambda_n+\mu_n^2/2}{\mu_n}\le x\right)-\PhiG(x)\right|
        \le C\sqrt{a_n},
\]
where $\mu_n^2=\chi^2(W_1\Vert W_0)/n$.
\end{proposition}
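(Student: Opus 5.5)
The plan is to reduce everything to the canonical linear likelihood ratio of \cref{lem:canonical-lr}, which is exact and does not need the residual machinery of \cref{sec:linearization}. Under $T_{n,0}$, write $L=1+U_n$ with $U_n=n^{-1}\sum_{i=1}^n X_i$ and $X_i=r(Y_i)$, $Y_i\stackrel{\rm i.i.d.}{\sim}W_0$. For randomized response $X_i$ takes only two values: $r(1)=e^{\eps_0}-1$ with probability $q_n$, and $r(0)=e^{-\eps_0}-1$ with probability $1-q_n$. Set $\xi_n=\sum_iX_i/\sqrt{n\chi^2}$, so that $U_n=\mu_n\xi_n$ with $\mu_n^2=\chi^2/n$. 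The key observation is that $\Lambda_n=\log(1+\mu_n\xi_n)$ is a strictly increasing function of $\xi_n$; it is well defined because $1+U_n\ge e^{-\eps_0}>0$. Hence the event in the statement can be rewritten \emph{exactly} as
\[
  \Bigl\{\tfrac{\Lambda_n+\mu_n^2/2}{\mu_n}\le x\Bigr\}=\{\xi_n\le h_n(x)\},\qquad
  h_n(x)=\frac{e^{\mu_n x-\mu_n^2/2}-1}{\mu_n}.
\]
No Taylor remainder in probability is then needed, and the proof splits into two estimates:
\[
\bigl|P(\xi_n\le h_n(x))-\PhiG(h_n(x))\bigr|
\quad\text{and}\quad
\bigl|\PhiG(h_n(x))-\PhiG(x)\bigr|,
\]
each bounded uniformly in $x$.

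\textbf{First term.} I would use the classical uniform Berry--Esseen inequality for i.i.d.\ summands, which bounds it by $C\,\E|X|^3/(\chi^3\sqrt n)$. The work is a uniform-in-$\eps_0$ bound on the Lyapunov ratio, $\E|X|^3/\chi^3\le C e^{\eps_0/2}$.
\begin{itemize}[leftmargin=2em]
\item For large $\eps_0$: $\chi^2\asymp q_n e^{2\eps_0}\asymp e^{\eps_0}$ and $\E|X|^3\asymp q_ne^{3\eps_0}\asymp e^{2\eps_0}$, so the ratio is $\asymp e^{\eps_0/2}$.
\item For small $\eps_0$: $X$ is a two-point variable with atoms of order $\eps_0$ and weights bounded away from $0$ and $1$, so $\E|X|^3/\chi^3$ stays bounded.
\end{itemize}
A single explicit computation with the two-point law covers all $\eps_0>0$, for example by bounding $\E|X|^3\le \max|X|\cdot\chi^2$ and $\max|X|/\chi\le C e^{\eps_0/2}$. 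This gives the bound $C\sqrt{e^{\eps_0}/n}=C\sqrt{a_n}$. The constant is independent of $\eps_0$, which matters because $\eps_0=\eps_0(n)$.

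\textbf{Second term.} Here the goal is $\sup_x|\PhiG(h_n(x))-\PhiG(x)|\le C\mu_n$. Since $\mu_n^2=\chi^2/n\le Ca_n$, this is at most $C\sqrt{a_n}$. I would use the elementary inequalities $e^y-1\ge y$ for all $y$, and $e^y-1\le y+y^2/2$ for $y\le0$ together with $e^y-1\le y+y^2$ for $0\le y\le 1$.
\begin{itemize}[leftmargin=2em]
\item Setting $y=\mu_nx-\mu_n^2/2$, these give $h_n(x)-x+\mu_n/2\in[0,\,C\mu_n(x-\mu_n/2)^2]$ on the range $|x|\le 1/\mu_n$.
\item The mean value theorem then gives $|\PhiG(h_n(x))-\PhiG(x)|\le C\mu_n(1+x^2)\sup_{u}\phiG(u)$, where $u$ runs between $x$ and $h_n(x)$.
\item On $|x|\le 1/\mu_n$ the quadratic correction is at most $|x|/2$, so $u$ stays in a region where $\phiG(u)\le \phiG(x/2)$ up to constants. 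Hence $(1+x^2)\phiG(x/2)$ is bounded.
\end{itemize}
For $|x|>1/\mu_n$ I would use monotonicity of $h_n$ and the lower bound $h_n(x)\ge x-\mu_n/2$ for the upper tail. For the lower tail, the bound on $h_n$ at $x=-1/\mu_n$ shows that both $\PhiG(h_n(x))$ and $\PhiG(x)$ are below $e^{-c/\mu_n^2}\le C\mu_n$.

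The main obstacle is the first step: ensuring the Berry--Esseen constant is genuinely uniform as $\eps_0$ varies with $n$, so that the Lyapunov ratio produces exactly $\sqrt{a_n}$ rather than an $\eps_0$-dependent constant. The deterministic comparison of $h_n$ with the identity is routine but requires the tail split above, because $h_n$ is not uniformly close to $x$ on all of $\R$.
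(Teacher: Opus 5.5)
Your proposal is correct and follows the paper's proof essentially step for step. Both use the exact monotone rewrite $\{(\Lambda_n+\mu_n^2/2)/\mu_n\le x\}=\{\xi_n\le h_{\mu_n}(x)\}$, a uniform Berry--Esseen bound with Lyapunov ratio $O(e^{\eps_0/2})$, a local/tail split giving $\sup_x|\PhiG(h_{\mu_n}(x))-\PhiG(x)|\le C\mu_n$, and finally $\mu_n^2\le a_n$.

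The differences are cosmetic. Your inequality $\E|X|^3\le\max|X|\,\chi^2$ gives the Lyapunov ratio $\le e^{\eps_0/2}$ more directly than the paper's exact third-moment computation. Your remark that the quadratic correction is at most $|x|/2$ is slightly loose: it holds up to constants for $x<0$, and for $x>0$ it is not needed because the correction moves $h_{\mu_n}(x)$ away from the origin.
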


\begin{proof}
Put $b=e^{\eps_0}$ and $q=(1+b)^{-1}$.  Under $W_0$, the canonical score $r(Y)=W_1(Y)/W_0(Y)-1$ takes the values
\[
        b-1 \quad\text{with probability }q,
        \qquad
        b^{-1}-1 \quad\text{with probability }1-q .
\]
Thus $\E r(Y)=0$ and
\[
        \Var(r(Y))=\chi^2(W_1\Vert W_0)=\frac{(b-1)^2}{b},
\]
while
\[
        \E|r(Y)|^3
        =\frac{(b-1)^3}{1+b}\left(1+\frac1{b^2}\right).
\]
Consequently
\[
        \frac{\E|r(Y)|^3}{\Var(r(Y))^{3/2}}
        =\frac{b^{3/2}(1+b^{-2})}{1+b}
        \le C b^{1/2}=C e^{\eps_0/2}.
\]
The Berry--Esseen bound for the normalized average $U_n=n^{-1}\sum_i r(Y_i)=\mu_n\xi_n$, where $\mu_n^2=\chi^2(W_1\Vert W_0)/n$, is therefore
\[
        \sup_x|\Pp(\xi_n\le x)-\PhiG(x)|
        \le C\frac{e^{\eps_0/2}}{\sqrt n}
        =C\sqrt{a_n}.
\]
Finally,
\[
        \frac{\Lambda_n+\mu_n^2/2}{\mu_n}
        =g_{\mu_n}(\xi_n),
        \qquad
        g_\mu(x)=\frac{\log(1+\mu x)+\mu^2/2}{\mu},
\]
where $g_\mu$ is increasing on the support.  Equivalently,
\[
        \Pp(g_{\mu_n}(\xi_n)\le x)=\Pp(\xi_n\le h_{\mu_n}(x)),
        \qquad
        h_\mu(x)=\frac{e^{\mu x-\mu^2/2}-1}{\mu}.
\]
We now prove the required transformation bound.  It is enough to consider $0<\mu\le1/2$.  If $|\mu x|\le1/2$, put $y=\mu x-\mu^2/2$.  Taylor's formula gives
\begin{equation}\label{eq:h-transform-local}
        |h_\mu(x)-x|\le C\mu(1+x^2).
\end{equation}
Moreover, on the fixed interval containing all such $y$, the function
\[
        q(y)=\begin{cases}(e^y-1)/y,&y\ne0,\\1,&y=0,\end{cases}
\]
is bounded above and below by positive constants, and
\[
        h_\mu(x)=q(y)(x-\mu/2).
\]
If $|x|\le2\mu$, \eqref{eq:h-transform-local} and the mean-value theorem give $|\PhiG(h_\mu(x))-\PhiG(x)|\le C\mu$.  If $|x|>2\mu$, then $x$ and $h_\mu(x)$ have the same sign and $|h_\mu(x)|\ge c|x|$.  Every point between $x$ and $h_\mu(x)$ therefore has absolute value at least $c'|x|$, and the mean-value theorem together with \eqref{eq:h-transform-local} gives
\[
        |\PhiG(h_\mu(x))-\PhiG(x)|
        \le C\mu(1+x^2)e^{-c x^2}\le C\mu
        \qquad (|\mu x|\le1/2).
\]
If $x>1/(2\mu)$, then both $x$ and $h_\mu(x)$ are at least $c/\mu$; if $x<-1/(2\mu)$, both are at most $-c/\mu$.  Gaussian tail bounds then give
\[
        |\PhiG(h_\mu(x))-\PhiG(x)|\le Ce^{-c/\mu^2}\le C\mu.
\]
Thus
\[
        \sup_x|\PhiG(h_\mu(x))-\PhiG(x)|\le C\mu .
\]
Since $\mu_n^2=(b-1)^2/(bn)\le b/n=a_n$, this additional transformation error is $O(\sqrt{a_n})$.  Combining the two bounds proves the claim.
\end{proof}

If $\eps_0(n)=0$, then $W_0=W_1$, the likelihood ratio is identically one, and the zero-signal case is trivial rather than a normalized limit theorem.  The proposition makes no claim when $a_n$ does not tend to zero.  Critical and supercritical scalings are outside the scope of this manuscript; their Poisson, Skellam, compound-Poisson, and hybrid limits are treated in Parts~II--III \cite{ShvetsPartII2026,ShvetsPartIII2026}.

\section{Discussion}\label{sec:discussion}

The central object in the full-support fixed-composition shuffle regime is the Fisher constant
\[
        I_\pi=v^\top\Sigma_\pi^+v,
        \qquad \Sigma_\pi=(1-\pi)\Sigma_0+\pi\Sigma_1.
\]
It governs the leading JSD constant, smooth $f$-divergence constants, the normalized privacy-loss expansion, and both directed local privacy curves.  The fixed-composition covariance is essential; replacing it by the i.i.d. mixture covariance changes the Fisher constant in the optimistic direction.

The paper separates three layers.  The exact layer consists of likelihood-ratio and positive-part identities and can be used for finite-$n$ certification.  The asymptotic information-geometric layer identifies $I_\pi$ and the Gaussian tangent approximation.  The local privacy layer gives matching expansions for both orderings at $\eps=t\sqrt{I_\pi/n}$.  A global GDP approximation with an $O(n^{-1/2})$ error is only a coarse tangent statement and should not be interpreted as a finite-$n$ certificate. The contribution of the paper is the identification of this constant for fixed-composition neighboring shuffled histogram laws and of the local geometry it controls, not a tighter worst-case certificate: exact certification remains the task of the accounting layer, and the same exact curve is obtained by positive-part or privacy-loss-distribution methods.

Companion work already treats canonical growing-alphabet experiments
\cite{ShvetsGrowing2026} and the non-Gaussian critical and hybrid regimes
\cite{ShvetsPartII2026,ShvetsPartIII2026}.  Remaining open directions include
noncanonical fixed-composition theory with growing alphabets, second-order
Edgeworth corrections for the local privacy curve, and direct finite-$n$
dominance comparisons for unbundled versus bundled shuffling.

\appendix

\section{Lattice ratio lemmas}\label{app:edgeworth}

This appendix proves the local ratio estimates used in \cref{sec:linearization}.  The load-bearing point is the derivative control on the relative second-order remainder, because that derivative produces the bounded-shift estimate used in \cref{prop:residual-bounds}.  The argument below carries out the exponential tilt, the three-region Fourier decomposition, the cumulant expansion, and the differentiation of the remainder.  The classical references Bhattacharya--Rao \cite{BhattacharyaRao2010}, Petrov \cite[Ch.~VII]{Petrov1975}, and Kolassa \cite{Kolassa2006} are cited for background on local Edgeworth and saddlepoint methods; no unquoted theorem from those sources is used to supply the derivative bound stated below.

For orientation, the appendix has two outputs.  \Cref{lem:ratio} gives the first-order bounded-shift log-ratio used in the pointwise conditional-expectation linearization.  \Cref{lem:refined-ratio} isolates the order-$m^{-1}$ Edgeworth-gradient term and controls the remaining error by $m^{-3/2}(1+\|z\|^9)$; this is the estimate used in the $L^2/L^4$ residual theorem.

\paragraph{Uniformity convention.}
All constants in this appendix are uniform over all triangular arrays satisfying the full-support assumption \emph{(A.0)}, over all lattice points and real extensions in the displayed moderate-deviation windows, and over all bounded tangent shifts used in the ratio lemmas.  In the application to \cref{prop:residual-bounds}, the same constants are also uniform over all compositions $k/n\in[\eta,1-\eta]$.  Constants may depend on displayed fixed parameters such as $d,p,A_0,A,B$, and $\eta$, but not on $m$, $s$, the particular array, or the particular composition.
Whenever a statement below holds for all sufficiently large $m$, its
threshold $m_0$ is uniform over the arrays and depends only on the displayed
fixed parameters.

Let $X_1,\ldots,X_m$ be independent one-hot random vectors in $\{e_1,\ldots,e_d\}$, where $d$ is fixed.  Assume a uniform full-support condition: for some $p>0$,
\[
        \Pp(X_i=e_y)\ge p,
        \qquad 1\le i\le m,\quad 1\le y\le d .
        \tag{A.0}
\]
Let
\[
        S_m=\sum_{i=1}^m X_i,
        \qquad \mu=\E S_m,
        \qquad \Sigma_{\rm ps}=m^{-1}\Cov(S_m).
\]
Since $\one^\top S_m=m$, $\Sigma_{\rm ps}$ is singular with kernel $\operatorname{span}\{\one\}$ and is nonsingular on $\mathcal T=\{x:\one^\top x=0\}$.  We write $x^-=(x_1,\ldots,x_{d-1})$ for the vector obtained by deleting the last coordinate, and $\Sigma_{\rm ps}^-$ for the corresponding covariance minor.  Uniform full support implies that the eigenvalues of $\Sigma_{\rm ps}^-$ are bounded above and away from zero uniformly in $m$; this is proved in the next lemma in the tilted form needed below.

Put $r=d-1$ and $Y_i=X_i^-\in A:=\{0,e_1,\ldots,e_r\}\subset\mathbb Z^r$.  For $\theta\in\mathbb R^r$ define
\[
        K_i(\theta)=\log\E e^{\theta^\top Y_i},
        \qquad
        K_m(\theta)=\frac1m\sum_{i=1}^mK_i(\theta).
\]
Thus $\nabla K_m(0)=\mu^-/m$ and $\nabla^2K_m(0)=\Sigma_{\rm ps}^-$.  Expectations under the exponential tilt with parameter $\theta$ are denoted by $\E_\theta$.

\begin{lemma}[Uniform analytic control under tilting]\label[lemma]{lem:tilt-control}
There are constants $\theta_0,\lambda>0$ and $C_q<\infty$, depending only on $(d,p)$, such that for every $m$, every array satisfying \emph{(A.0)}, and every $\|\theta\|_2\le\theta_0$,
\[
        \lambda I_r\preceq\nabla^2K_m(\theta)\preceq\lambda^{-1}I_r,
        \qquad
        \|D^qK_m(\theta)\|\le C_q,
        \quad 3\le q\le8.
        \tag{A.1a}
\]
Moreover, for each fixed $A_0<\infty$ there is
$m_0=m_0(d,p,A_0)$ such that, for every $m\ge m_0$, every
$z$ with $\|z\|_2\le A_0m^{1/8}$ determines a unique
$\theta_m(z)$ in $\|\theta\|_2\le\theta_0$ through
\[
        \nabla K_m(\theta_m(z))=\nabla K_m(0)+m^{-1/2}z.
\]
This map is continuously differentiable and satisfies, uniformly on the stated window,
\[
\begin{aligned}
        \theta_m(z)
        &=m^{-1/2}(\Sigma_{\rm ps}^-)^{-1}z
          +O\!\left(m^{-1}\|z\|_2^2\right),\\
        \|\theta_m(z)\|_2&=O(m^{-3/8}),\\
        D_z\theta_m(z)
        &=m^{-1/2}\{\nabla^2K_m(\theta_m(z))\}^{-1},
        \qquad \|D_z\theta_m(z)\|=O(m^{-1/2}).
\end{aligned}
        \tag{A.1b}
\]
\end{lemma}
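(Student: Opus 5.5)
Since each $K_i$ is the log-moment generating function of a random vector supported on the fixed finite set $A=\{0,e_1,\ldots,e_r\}$, I would work with explicit tilted laws.  For $\|\theta\|_2\le\theta_0$ the tilted probabilities satisfy
\[
\Pp_\theta(Y_i=a)=\frac{\Pp(Y_i=a)e^{\theta^\top a}}{\E e^{\theta^\top Y_i}}\ge p\,e^{-2\theta_0}.
\]
This holds because $|\theta^\top a|\le\theta_0$ for $a\in A$.  So the tilted array still satisfies a full-support condition (A.0) with $p'=pe^{-2\theta_0}$, uniformly in $i$, $m$ and the array.

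The Hessian $\nabla^2K_i(\theta)$ is the tilted covariance of $Y_i$.  For $x\in\R^r$, $x^\top\nabla^2K_i(\theta)x=\Var_\theta(x^\top Y_i)$, where $x^\top Y_i$ takes the value $0$ on $\{Y_i=0\}$ and $x_j$ on $\{Y_i=e_j\}$.  A random variable with atoms of mass at least $p'$ at the points $0,x_1,\ldots,x_r$ has variance at least $c(p',r)\max_j x_j^2\ge c\|x\|_2^2/r$.  To see this, note that $\Var(V)\ge \min_c\sum_a p'(a-c)^2$, and the points $0$ and $x_j$ cannot both be close to any single $c$.  The upper bound $\lambda^{-1}$ is immediate from $\|Y_i\|\le1$.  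The derivatives $D^qK_i$ for $3\le q\le 8$ are joint tilted cumulants of a variable bounded by one, so they are bounded by constants $C_q$ depending only on $q$ and $r$.  Averaging over $i$ preserves all these bounds, which proves (A.1a).

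For (A.1b) I would use a quantitative inverse function theorem for $F_m(\theta)=\nabla K_m(\theta)-\nabla K_m(0)$ on the ball $B=\{\|\theta\|_2\le\theta_0\}$.  On $B$ the map $F_m$ is strongly monotone:
\[
(F_m(\theta)-F_m(\theta'))^\top(\theta-\theta')\ge\lambda\|\theta-\theta'\|^2,
\]
by (A.1a).  This gives injectivity, and also $\|F_m^{-1}(w)\|\le\lambda^{-1}\|w\|$ whenever the preimage lies in $B$.  For existence, apply the standard result that a $C^1$ map with $DF_m\succeq\lambda I$ on $B$ and $F_m(0)=0$ has image containing the ball of radius $\lambda\theta_0$.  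One can prove this by minimizing the strictly convex function $K_m(\theta)-K_m(0)-\theta^\top(\nabla K_m(0)+w)$ over $B$ and checking, via monotonicity, that the minimizer is interior when $\|w\|<\lambda\theta_0$.  With $w=m^{-1/2}z$ and $\|z\|\le A_0m^{1/8}$ we have $\|w\|\le A_0m^{-3/8}<\lambda\theta_0$ once $m\ge m_0(d,p,A_0)$.  This gives a unique $\theta_m(z)$ with $\|\theta_m(z)\|\le\lambda^{-1}A_0m^{-3/8}$.

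Differentiability and the formula $D_z\theta_m=m^{-1/2}\{\nabla^2K_m(\theta_m)\}^{-1}$ follow from the implicit function theorem, since the Hessian is invertible.  The operator-norm bound $O(m^{-1/2})$ comes from $\lambda I\preceq\nabla^2K_m$.  For the expansion, Taylor expand $\nabla K_m$ about $0$ with the third-derivative bound $C_3$:
\[
m^{-1/2}z=\Sigma_{\rm ps}^-\theta_m+O(\|\theta_m\|^2).
\]
Inverting with $\|(\Sigma_{\rm ps}^-)^{-1}\|\le\lambda^{-1}$ gives $\theta_m=m^{-1/2}(\Sigma_{\rm ps}^-)^{-1}z+O(\|\theta_m\|^2)$.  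Combined with the preliminary bound $\|\theta_m\|\le\lambda^{-1}m^{-1/2}\|z\|$, this yields the $O(m^{-1}\|z\|^2)$ remainder.

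I expect the only delicate point to be making the tilted variance lower bound and the existence radius genuinely uniform over arrays.  Both reduce to the single observation that tilting by $\|\theta\|\le\theta_0$ distorts each atom by at most a factor $e^{2\theta_0}$.  I would fix $\theta_0=1$, say, and derive $\lambda$ from $p'$.
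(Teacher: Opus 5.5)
Your proof is correct. For (A.1a) it essentially matches the paper, and for the existence part of (A.1b) it takes a genuinely different route.

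\textbf{(A.1a).} Both arguments use the tilted atom bound $pe^{-2\theta_0}$ and bound the higher derivatives as tilted cumulants of a vector of norm at most one. The only difference is the variance lower bound, and it is minor.
\begin{itemize}
\item You use $\Var(V)=\min_c\E(V-c)^2\ge\tfrac12p'x_j^2$, which gives $\lambda$ of order $p'/r$.
\item The paper uses the pairwise identity $\Var_\theta(V)=\tfrac12\sum_{a,b}q_{i,\theta}(a)q_{i,\theta}(b)\{u^\top(a-b)\}^2\ge p_0^2$ for unit $u$.
\end{itemize}
Either constant depends only on $(d,p)$.

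\textbf{Existence in (A.1b).}
\begin{itemize}
\item \emph{The paper's route.} It writes $\nabla K_m(\theta)-\nabla K_m(0)=H_m\theta+R_m(\theta)$. It then runs a Banach contraction for $\theta\mapsto H_m^{-1}\{\delta-R_m(\theta)\}$ on the small ball of radius $2\lambda^{-1}\|\delta\|$. This already needs the quadratic and Lipschitz bounds on $R_m$, and hence $D^3K_m$, to get existence. In return it delivers localization and the first-order expansion $\theta=H_m^{-1}\delta+O(\|\delta\|^2)$ in one step. Strong monotonicity is used only for uniqueness on the full $\theta_0$-ball.
\item \emph{Your route.} You exploit the gradient (convex) structure. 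On the boundary of the ball, $\theta^\top(F_m(\theta)-w)\ge\theta_0(\lambda\theta_0-\|w\|)>0$. So the minimizer of $K_m(\theta)-\theta^\top(\nabla K_m(0)+w)$ over the ball is interior; the variational inequality tested at $\theta=0$ already shows this. This gives existence for the explicit, $m$-independent range $\|w\|<\lambda\theta_0$, using only the Hessian lower bound. Monotonicity alone then gives uniqueness and the a priori bound $\|\theta_m\|\le\lambda^{-1}\|w\|$. Third derivatives enter only in the subsequent Taylor step for the expansion.
\end{itemize}

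\textbf{What each buys.} Your version is slightly more economical, since existence needs only ellipticity. The paper's version is a standard quantitative inverse-function argument that does not rely on convexity of $K_m$.
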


\begin{proof}
Let $q_i(a)=\Pp(Y_i=a)$.  Under \emph{(A.0)}, $q_i(a)\ge p$ for every $a\in A$.  The tilted probabilities are
\[
        q_{i,\theta}(a)
        =\frac{q_i(a)e^{\theta^\top a}}
        {\sum_{b\in A}q_i(b)e^{\theta^\top b}}.
\]
For $\|\theta\|_2\le\theta_0$, all $a\in A$ satisfy $|\theta^\top a|\le\theta_0$, hence
\[
        q_{i,\theta}(a)\ge p e^{-2\theta_0}=:p_0>0.
\]
For a unit vector $u\in\mathbb R^r$ and $V=u^\top Y_i$, the pairwise variance identity gives
\[
\begin{aligned}
        \Var_\theta(V)
        &=\frac12\sum_{a,b\in A}q_{i,\theta}(a)q_{i,\theta}(b)
              \{u^\top(a-b)\}^2\\
        &\ge \sum_{j=1}^r q_{i,\theta}(0)q_{i,\theta}(e_j)u_j^2
        \ge p_0^2.
\end{aligned}
\]
Also $\Var_\theta(V)\le\E_\theta V^2\le1$.  Thus every tilted covariance has eigenvalues in $[p_0^2,1]$, and the same is true of their average $\nabla^2K_m(\theta)$.  Derivatives of $K_i$ of order $q\ge2$ are joint cumulants of coordinates of the bounded vector $Y_i$ under the tilted law.  Since $\|Y_i\|_2\le1$, these cumulants are bounded by constants depending only on $q$ and $r$; averaging proves the derivative bounds.

Set $H_m=\nabla^2K_m(0)=\Sigma_{\rm ps}^-$.  Write
\[
        \nabla K_m(\theta)-\nabla K_m(0)=H_m\theta+R_m(\theta).
\]
The third-derivative bound gives, on $\|\theta\|\le\theta_0$,
\[
        \|R_m(\theta)\|\le C\|\theta\|^2,
        \qquad
        \|R_m(\theta)-R_m(\theta')\|
        \le C(\|\theta\|+\|\theta'\|)\|\theta-\theta'\|.
\]
For $\delta=m^{-1/2}z$, put $r_\delta=2\lambda^{-1}\|\delta\|$ and define
\[
        \mathcal F_\delta(\theta)=H_m^{-1}\{\delta-R_m(\theta)\}.
\]
Because $\|\delta\|=O(m^{-3/8})$, for all sufficiently large $m$ the ball $\{\|\theta\|\le r_\delta\}$ lies inside $\{\|\theta\|\le\theta_0\}$, $\mathcal F_\delta$ maps this ball into itself, and its Lipschitz constant is at most $1/2$.  Banach's fixed-point theorem therefore supplies a unique solution in this ball.  The lower Hessian bound makes $\nabla K_m$ strongly monotone on the whole $\theta_0$-ball, so no second solution can occur there.  The fixed-point equation gives
\[
        \theta_m(z)=H_m^{-1}\delta+O(\|\delta\|^2),
        \qquad \|\theta_m(z)\|=O(\|\delta\|),
\]
which proves the first two lines of \emph{(A.1b)}.  The implicit-function theorem applies because the Hessian is invertible; differentiating the defining equation gives the displayed formula for $D_z\theta_m(z)$, and the inverse-Hessian bound completes the proof.
\end{proof}

\begin{lemma}[Uniform tilted aperiodicity]\label[lemma]{lem:tilted-aperiodicity}
Let $A=\{0,e_1,\ldots,e_{d-1}\}\subset\Z^{d-1}$.  For every $q_0>0$ and every $\delta>0$ there is $\rho<1$, depending only on $(d,q_0,\delta)$, such that the following holds.  If $q$ is a probability vector on $A$ with $q(a)\ge q_0$ for all $a\in A$, then
\[
        \sup_{u\in[-\pi,\pi]^{d-1}:\ \|u\|_2\ge\delta}
        \left|\sum_{a\in A}q(a)e^{iu^\top a}\right|\le\rho .
\]
The same bound holds for centered variables, since centering multiplies the characteristic function by a phase.
\end{lemma}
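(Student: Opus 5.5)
The plan is a compactness argument on the characteristic function, taken together with a separate treatment of the lattice points where its modulus equals one. Write $\psi_q(u)=\sum_{a\in A}q(a)e^{iu^\top a}=q(0)+\sum_{j=1}^{r}q(e_j)e^{iu_j}$ with $r=d-1$. The triangle inequality gives $|\psi_q(u)|\le1$. Equality holds exactly when every term $q(a)e^{iu^\top a}$ with $q(a)>0$ has the same argument. Since every $q(a)\ge q_0>0$ and $0\in A$ contributes the phase $1$, equality forces $e^{iu_j}=1$ for every $j$, that is, $u\in2\pi\Z^{r}$. On $[-\pi,\pi]^{r}$ with $\|u\|_2\ge\delta$, the only such points are those whose coordinates all lie in $\{-\pi,0,\pi\}$ with some coordinate $\pm\pi$. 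But then $e^{iu_j}=-1$ for that $j$, which contradicts $e^{iu_j}=1$. So $|\psi_q(u)|<1$ at every point of the region.

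To make the bound uniform and quantitative, I would avoid compactness in $q$ altogether and use an explicit inequality. For the pair $(0,e_j)$,
\[
|\psi_q(u)|\le 1-\bigl(q(0)+q(e_j)\bigr)+\bigl|q(0)+q(e_j)e^{iu_j}\bigr|.
\]
Next I would use $|\alpha+\beta e^{ix}|^2=(\alpha+\beta)^2-2\alpha\beta(1-\cos x)$. With $\alpha,\beta\ge q_0$ and $\alpha+\beta\le1$, this gives
\[
\bigl|q(0)+q(e_j)e^{iu_j}\bigr|\le(\alpha+\beta)-\frac{q_0^2(1-\cos u_j)}{\alpha+\beta}\le(\alpha+\beta)-q_0^2(1-\cos u_j).
\]
Hence $|\psi_q(u)|\le1-q_0^2(1-\cos u_j)$ for every $j$. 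If $\|u\|_2\ge\delta$, some coordinate satisfies $|u_j|\ge\delta/\sqrt r$, and also $|u_j|\le\pi$. On $[\delta/\sqrt r,\pi]$ the function $1-\cos x$ is increasing in $|x|$, so $1-\cos u_j\ge1-\cos(\delta/\sqrt r)>0$ (if $\delta/\sqrt r>\pi$ the region is empty and there is nothing to prove). This gives
\[
\rho=1-q_0^2\bigl(1-\cos(\delta/\sqrt{d-1})\bigr)<1,
\]
which depends only on $(d,q_0,\delta)$.

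For the centered statement, $\E e^{iu^\top(Y-c)}=e^{-iu^\top c}\psi_q(u)$, which has the same modulus, so the same bound holds. The only delicate step is the elementary two-term modulus inequality above, in particular checking that the bound uses $\alpha\beta\ge q_0^2$ and $\alpha+\beta\le1$ in the right direction. That is a direct computation from $\sqrt{s^2-t}\le s-t/(2s)\le s-t/2$ for $0\le t\le s^2$ and $s\le1$, so I expect no real obstacle.
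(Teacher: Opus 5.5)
Your proof is correct, but it takes a different route from the paper's.

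The paper argues purely by compactness. The set of probability vectors with all entries at least $q_0$ is compact, and so is the region $\{u\in[-\pi,\pi]^{d-1}:\|u\|_2\ge\delta\}$. The modulus is jointly continuous in $(q,u)$. A value equal to one would force $u\in 2\pi\Z^{d-1}\cap[-\pi,\pi]^{d-1}=\{0\}$. Hence the maximum is some $\rho<1$, but no formula for it is given.

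Your first paragraph is essentially this pointwise step. It is stated in a muddled way: $2\pi\Z^{r}\cap[-\pi,\pi]^{r}$ is just $\{0\}$, and points with a coordinate $\pm\pi$ never belong to it. That paragraph is harmless but superfluous. You rightly note that a pointwise strict inequality alone gives no uniformity. Instead of invoking compactness, you prove the explicit bound $|\psi_q(u)|\le 1-q_0^2(1-\cos u_j)$ for every $j$. This checks out: the steps $\sqrt{s^2-t}\le s-t/(2s)$, $\alpha\beta\ge q_0^2$ and $\alpha+\beta\le 1$ all go in the right direction.

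What your route buys is a constructive $\rho=1-q_0^2\{1-\cos(\delta/\sqrt{d-1})\}$, so $1-\rho\gtrsim q_0^2\delta^2/(d-1)$. That would make the outer-region bound $O(\rho^m)$ in the Edgeworth lemma fully explicit. In fact, averaging your bound over $j$ gives
\[
|\psi_q(u)|\le 1-\frac{q_0^2}{d-1}\sum_j(1-\cos u_j)\le e^{-c\|u\|_2^2}
\]
on the whole cube. This is the same type of estimate the paper derives separately for the intermediate annulus $\|u\|\le\delta$, so one inequality would handle both Fourier regions. The paper's compactness argument is shorter and works verbatim for any finite support that generates the lattice, at the price of giving no control on $\rho$.

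One cosmetic fix concerns the case $\delta>\pi\sqrt{d-1}$. The region is then empty, but your formula can equal $1$, for instance at $\delta=2\pi\sqrt{d-1}$. Take $\rho=1-q_0^2\{1-\cos(\min\{\delta/\sqrt{d-1},\pi\})\}$ instead.
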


\begin{proof}
The set of probability vectors with all coordinates at least $q_0$ is compact.  The displayed modulus is continuous in $(q,u)$.  If its value were one, all phases $e^{iu^\top a}$ on the support $A$ would be equal.  Since $0,e_1,\ldots,e_{d-1}$ are all in the support, this forces each coordinate of $u$ to be an integer multiple of $2\pi$.  In $[-\pi,\pi]^{d-1}$ this gives only $u=0$, which is excluded by $\|u\|_2\ge\delta$.  Therefore the compact maximum is strictly smaller than one.
\end{proof}

\begin{lemma}[Second-order local Edgeworth expansion]\label[lemma]{lem:edgeworth}
Fix $A_0<\infty$.  There exist
$m_0=m_0(d,p,A_0)$ and $C<\infty$, depending only on
$(d,p,A_0)$, such that the following holds.  For every $m\ge m_0$
and every row $(X_1,\ldots,X_m)$ satisfying \emph{(A.0)}, there are
polynomials $Q_m^{(1)}$ and $Q_m^{(2)}$ on $\R^{d-1}$, of degrees at
most $3$ and $6$, respectively, whose coefficients are bounded uniformly
over all such rows.  The polynomials themselves may depend on the row.
For every $s\in\Z_{\ge0}^d$ with $\one^\top s=m$ and
\[
        z^-:=\frac{s^- -\mu^-}{\sqrt m},
        \qquad \|z^-\|_2\le A_0m^{1/8},
\]
we have
\begin{equation}
\begin{aligned}
\Pp(S_m=s)
&=\frac{1}{(2\pi m)^{(d-1)/2}\sqrt{\det\Sigma_{\rm ps}^-}}
\exp\left(-\frac12(z^-)^\top(\Sigma_{\rm ps}^-)^{-1}z^-\right)\\
&\quad\times
\left(1+m^{-1/2}Q_m^{(1)}(z^-)+m^{-1}Q_m^{(2)}(z^-)+R_m^{(2)}(z^-)\right).
\end{aligned}
\tag{A.1}
\end{equation}
where
\[
\begin{aligned}
        |Q_m^{(1)}(z)|&\le C(1+\|z\|_2^3),
        &\|\nabla Q_m^{(1)}(z)\|&\le C(1+\|z\|_2^2),\\
        |Q_m^{(2)}(z)|&\le C(1+\|z\|_2^6),
        &\|\nabla Q_m^{(2)}(z)\|&\le C(1+\|z\|_2^5),\\
        |R_m^{(2)}(z)|&\le C m^{-3/2}(1+\|z\|_2^9),
        &\|\nabla R_m^{(2)}(z)\|&\le C m^{-3/2}(1+\|z\|_2^8).
\end{aligned}
\tag{A.2}
\]
Moreover, for every fixed $B<\infty$, whenever $\|h\|_2\le Bm^{-1/2}$ and both $z$ and $z+h$ lie in the window $\|\cdot\|_2\le A_0m^{1/8}$,
\[
        |R_m^{(2)}(z+h)-R_m^{(2)}(z)|
        \le C_B m^{-2}(1+\|z\|_2^8).
        \tag{A.3}
\]
\end{lemma}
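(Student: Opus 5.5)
The plan is to prove the second-order local Edgeworth expansion by exponential tilting (a saddlepoint argument), using the uniform analytic control of \cref{lem:tilt-control} and the aperiodicity of \cref{lem:tilted-aperiodicity}.

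\emph{Step 1: tilt.} Fix a lattice point $s$ in the window. Let $\theta=\theta_m(z^-)$ be the tilt from \cref{lem:tilt-control}, which centres the tilted law at $s^-$. The exact identity
\[
\Pp(S_m=s)=\exp\{m K_m(\theta)-\theta^\top s^-\}\,\Pp_\theta(S_m^-=s^-)
\]
reduces the problem to a local probability at the \emph{mean} of the tilted array.

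\emph{Step 2: the tilted local probability.} By Fourier inversion on $[-\pi,\pi]^{r}$, split the frequencies into three regions:
\begin{itemize}
\item $\|u\|\le c\,m^{-1/2}\log m$, the Gaussian core;
\item $c\,m^{-1/2}\log m\le\|u\|\le\delta$, controlled by the Hessian lower bound in (A.1a), which gives $|\phi_\theta(u)|\le e^{-c m\|u\|^2}$;
\item $\|u\|\ge\delta$, controlled by \cref{lem:tilted-aperiodicity} applied with $q_0=p_0$, which gives a contribution $\rho^m$.
\end{itemize}
In the core, expand the tilted cumulant generating function to eighth order, using the bounds in (A.1a). Since the tilted mean equals $s^-$ exactly, the odd terms of order $m^{-1/2}$ vanish. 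This gives
\[
\Pp_\theta(S_m^-=s^-)=(2\pi m)^{-r/2}\det\nabla^2K_m(\theta)^{-1/2}\{1+m^{-1}c_2(\theta)+O(m^{-2})\},
\]
where $c_2$ is smooth in $\theta$ with bounded derivatives. The $O(m^{-2})$ remainder is also smooth in $\theta$, because differentiating under the integral in $\theta$ only multiplies by tilted cumulants, which are bounded.

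\emph{Step 3: re-expand in $z^-$.} Write $\theta=\theta_m(z)$ using (A.1b). Then Taylor-expand three quantities in powers of $\theta=O(m^{-1/2}\|z\|)$:
\begin{itemize}
\item the exponent $mK_m(\theta)-\theta^\top s^-$, which equals $-\tfrac12 z^\top(\Sigma_{\rm ps}^-)^{-1}z+m^{-1/2}(\text{cubic})+m^{-1}(\text{quartic})+\dots$;
\item $\det\nabla^2K_m(\theta)^{-1/2}$;
\item $c_2(\theta)$.
\end{itemize}
Collect terms by order. The $m^{-1/2}$ terms give $Q^{(1)}_m$, of degree $3$. The $m^{-1}$ terms give $Q^{(2)}_m$, of degree at most $6$; its top-degree part comes from squaring the cubic after exponentiation. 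Everything else goes into $R_m^{(2)}$. The Taylor remainders of the exponent carry $m\|\theta\|^5$ and the square of the quartic term, and on the window they are bounded by $m^{-3/2}(1+\|z\|^9)$. This uses the fact that on the window $m^{-1/2}\|z\|^3\le m^{-1/8}$, so exponentiating the non-Gaussian part is legitimate: $|e^{x}-1-x-x^2/2|\le C|x|^3$ holds with $x$ bounded. The bounds on the coefficients of $Q^{(1)}_m$ and $Q^{(2)}_m$ follow from (A.1a).

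\emph{Step 4: derivative bounds, and the main obstacle.} The hard part is the gradient control of $R^{(2)}_m$ in (A.2), and with it (A.3). My approach is to note that $R^{(2)}_m(z)$ is an explicit function, namely the full expression divided by the Gaussian factor, minus the polynomial terms. I would differentiate it in $z$ through $\theta_m(z)$, using $D_z\theta_m=O(m^{-1/2})$ from (A.1b). Each $z$-derivative of a Taylor remainder whose size is $m^{-3/2}(1+\|z\|^9)$ loses one power of $\|z\|$. This is because the remainder is an analytic function of $m^{-1/2}z$ and $\theta$, which turns $m\,\partial_\theta$ into $m^{1/2}\partial_z$. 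The result is $\|\nabla R^{(2)}_m\|\le Cm^{-3/2}(1+\|z\|^8)$. The Fourier-integral remainder from Step 2 needs its $\theta$-smoothness, which I would verify by differentiating under the integral; this is where the uniformity over arrays must be checked carefully.

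Finally, (A.3) follows from the mean value theorem applied along the segment from $z$ to $z+h$. This segment stays in a slightly enlarged window, since $\|h\|\le Bm^{-1/2}$, and there $\|\nabla R^{(2)}_m\|\le Cm^{-3/2}(1+\|z\|^8)$. Multiplying by $\|h\|$ gives the bound $C_Bm^{-2}(1+\|z\|^8)$.
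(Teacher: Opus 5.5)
Your proposal is correct and follows essentially the same route as the paper. Both tilt to the saddlepoint $\theta_m(z)$ and use three-region Fourier inversion, with parity cancellation of the odd terms, to obtain a tilted local factor $1+m^{-1}B_m(\theta)+O(m^{-2})$ whose remainder is $C^1$ in $\theta$. Both then re-expand the exponent and determinant in $m^{-1/2}z$, control gradients through $D_z\theta_m=O(m^{-1/2})$, and use the mean-value theorem for (A.3). The differences are cosmetic: a $\log m$ core radius instead of $m^{1/12}$, a higher-order cumulant expansion than is needed, and expanding the exponent in powers of $\theta$ rather than through the Legendre function $\Psi_m$.
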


\begin{proof}
We prove the relative form by an exponential-tilting Fourier argument.  The tilt is used to obtain a relative expansion in the moderate window; an absolute local Edgeworth expansion would not suffice after division by the Gaussian density.

Use the notation introduced before \cref{lem:tilt-control}.  Fix a lattice point $s$ in the stated window, set
\[
        z=\frac{s^- -\mu^-}{\sqrt m},
        \qquad
        \theta=\theta_m(z),
\]
and note from \cref{lem:tilt-control} that
\[
        \nabla K_m(\theta)=s^-/m,
        \qquad
        \|\theta\|=O(m^{-3/8}),
        \qquad
        D_z\theta=O(m^{-1/2}).
\]
For later differentiation we use the same construction for every real $z$ in the window, not only for lattice values.  Define
\[
\begin{aligned}
        \mathcal J_m(z)
        :=\frac{1}{(2\pi)^r}\int_{[-\pi,\pi]^r}
        \prod_{i=1}^m
        \E_{\theta_m(z)}
        e^{iu^\top(Y_i-\E_{\theta_m(z)}Y_i)}\,du .
\end{aligned}
        \tag{A.1c0}
\]
The integrand is continuously differentiable in $z$, and the domain is compact, so $\mathcal J_m$ is a $C^1$ function.  Conjugate symmetry under $u\mapsto-u$ makes $\mathcal J_m(z)$ real.  When $z=(s^- -\mu^-)/\sqrt m$ for a lattice point $s$, the tilted sum has mean $s^-$ and Fourier inversion gives
$\mathcal J_m(z)=\Pp_{\theta_m(z)}(S_m=s)$.  Thus \emph{(A.1c0)} supplies a definite smooth extension of the lattice quantity; the remainder $R_m^{(2)}$ below is defined through this extension.  This makes the gradient and mean-value arguments literal rather than formal.

Introduce the exponential tilt
\[
        \frac{d\mathbb P_\theta}{d\mathbb P}(Y_1,\ldots,Y_m)
        =\exp\left\{\theta^\top\sum_iY_i-\sum_iK_i(\theta)\right\}.
\]
Under $\mathbb P_\theta$, the sum $\sum_iY_i$ has mean $s^-$ and covariance $m\Sigma_\theta$, where $\Sigma_\theta=\nabla^2K_m(\theta)$.  Thus
\[
        \mathbb P(S_m=s)
        =\exp\left\{mK_m(\theta)-\theta^\top s^-\right\}
          \mathbb P_\theta(S_m=s).
        \tag{A.1c}
\]
Fourier inversion under $\mathbb P_\theta$ gives
\[
\mathbb P_\theta(S_m=s)
=\frac{1}{(2\pi)^r}
  \int_{[-\pi,\pi]^r}
  \prod_{i=1}^m
  \mathbb E_\theta e^{iu^\top(Y_i-\mathbb E_\theta Y_i)}
  \,du .
\]
After the change $u=w/\sqrt m$, split the domain into the central region $\|w\|\le m^{1/12}$, the intermediate annulus $m^{1/12}<\|w\|\le\delta\sqrt m$, and the outer region.  Choose $\delta>0$ small but fixed.  The tilted atom probabilities are bounded below by $p_0>0$.  If $\varphi_{i,\theta}(u)=\E_\theta e^{iu^\top Y_i}$ and $Y_i'$ is an independent copy, then
\[
\begin{aligned}
        1-|\varphi_{i,\theta}(u)|^2
        &=\E_\theta\{1-\cos(u^\top(Y_i-Y_i'))\}\\
        &\ge 2p_0^2\sum_{j=1}^r(1-\cos u_j)
        \ge c\|u\|^2
\end{aligned}
\]
for $\|u\|\le\delta$, after choosing a fixed $\delta\le1$.  Centering changes only the phase.  Hence, uniformly in $i,m,\theta$,
\[
        \left|\E_\theta e^{iu^\top(Y_i-\E_\theta Y_i)}\right|
        \le e^{-c\|u\|^2},
        \qquad \|u\|\le\delta.
\]
Consequently the product on the intermediate annulus is at most $e^{-c\|w\|^2}$ and its integral is $O(e^{-cm^{1/6}})$.  The derivative of one centered characteristic factor with respect to $\theta$ is $O(\|u\|)$ for $\|u\|\le\delta$: it vanishes at $u=0$, and its first $u$-derivative is uniformly bounded because the tilted variables are bounded and their first two tilted moments have uniformly bounded $\theta$-derivatives.  The product rule therefore gives
\[
        \left\|D_\theta\prod_{i=1}^m
        \E_\theta e^{iu^\top(Y_i-\E_\theta Y_i)}\right\|
        \le C m\|u\|e^{-cm\|u\|^2}.
\]
With $u=w/\sqrt m$ and $D_z\theta_m(z)=O(m^{-1/2})$, the differentiated intermediate integrand is bounded by $C\|w\|e^{-c\|w\|^2}$, whose integral over $\|w\|>m^{1/12}$ is $O(e^{-cm^{1/6}})$.  On the outer region, \cref{lem:tilted-aperiodicity} applies to each tilted summand: because $\theta_m(z)=O(m^{-3/8})$, all tilted atom probabilities are bounded below by a constant depending only on $(d,p)$.  The undifferentiated product is $O(\rho^m)$ and its $\theta$-derivative is $O(m\rho^{m-1})$ for some $\rho<1$; after multiplication by $D_z\theta_m(z)$ and integration over the bounded Fourier cube, both are $O(e^{-cm})$.

On the central region, write $\bar\kappa_{j,\theta}$ for the averaged centered cumulant tensor of order $j$ under $\mathbb P_\theta$.  For $t=w/\sqrt m$, the characteristic function is uniformly within a fixed neighborhood of one; we use the analytic logarithm that equals zero at $t=0$.  For each $i$ the centered tilted cumulant generating function has the exact representation
\[
        g_i(\theta,t):=\log\E_\theta e^{it^\top(Y_i-\E_\theta Y_i)}
        =K_i(\theta+it)-K_i(\theta)-it^\top\nabla K_i(\theta),
\]
where $K_i=\log M_i$ and $M_i(z)=\E e^{z^\top Y_i}$.  The moment function $M_i$ is entire (a finite sum of exponentials), but $K_i$ is analytic only where $M_i\ne0$, so we first record a zero-free bound on a complex ball about $\theta$.  For complex $h\in\mathbb C^r$, since $\|Y_i\|_2\le1$ gives $|h^\top Y_i|\le\|h\|_2$,
\[
        \left|\frac{M_i(\theta+h)}{M_i(\theta)}-1\right|
        =\bigl|\E_\theta e^{h^\top Y_i}-1\bigr|
        \le\E_\theta\bigl|e^{h^\top Y_i}-1\bigr|
        \le e^{\|h\|_2}-1,
\]
so for $\|h\|_2\le r_0:=\log(3/2)$ the right-hand side is at most $\tfrac12$ and $M_i(\theta+h)\ne0$.  Thus $M_i$ is zero-free on the complex ball $\{z:\|z-\theta\|_2\le r_0\}$, and the analytic branch of $\log M_i$ that agrees with the real-valued $K_i$ at $z=\theta$ is defined on it; the representation holds for $\|\theta\|\le\theta_0$ and $\|t\|_2\le\delta$ with any fixed $\delta<r_0$.  Expanding in $t$ gives $g_i(\theta,t)=\sum_{j\ge2}(i^j/j!)\,\kappa_{j,i,\theta}[t^{\otimes j}]$, where $\kappa_{j,i,\theta}=D^jK_i(\theta)$ is the order-$j$ tilted cumulant tensor of $Y_i$.  The proof of \cref{lem:tilt-control} bounds the individual cumulants of each $Y_i$, not only their average: since every tilted atom probability is at least $p_0>0$ and $\|Y_i\|_2\le1$, the entire $M_i$ satisfies $c_1\le|M_i(z)|\le c_2$ on $\{\|z-\theta\|_2\le r_0\}$ with $c_1,c_2$ depending only on $(\theta_0,r_0,r)$; since each evaluation point $\theta+it$ lies at distance at most $\delta<r_0$ from $\theta$, Cauchy estimates on balls of fixed radius $(r_0-\delta)/2$ give $\|D^qK_i(\theta+it)\|\le C_q$ for $q\le7$, uniformly in $i$, $m$, and the array.  One further $\theta$-derivative of $\kappa_{j,i,\theta}$ is a derivative of $K_i$ of order $j+1$; the order-$5$ Taylor remainder of $g_i$ in $t$ uses $\kappa_{6,i,\theta}$, and one $\theta$-derivative of it uses $D^7K_i$.  Hence, for $|\beta|\le1$,
\[
\left|D_\theta^\beta\left[
\log\E_\theta e^{it^\top(Y_i-\E_\theta Y_i)}
+\frac12\kappa_{2,i,\theta}[t,t]
-\sum_{j=3}^5\frac{i^j}{j!}\kappa_{j,i,\theta}[t^{\otimes j}]
\right]\right|
\le C\|t\|^6.
\]
Summing this inequality over $i$ yields the cumulant expansion
\[
\begin{aligned}
\sum_{i=1}^m\log\mathbb E_\theta e^{i(w/\sqrt m)^\top(Y_i-\mathbb E_\theta Y_i)}
&=-\tfrac12 w^\top\Sigma_\theta w
  +m^{-1/2}C_{3,m,\theta}(w)+m^{-1}C_{4,m,\theta}(w) \\
&\quad +m^{-3/2}C_{5,m,\theta}(w)+O\!\left(m^{-2}(1+\|w\|^6)\right),
\end{aligned}
\]
where $C_{j,m,\theta}$ is homogeneous of degree $j$ up to the factor $i^j/j!$, and all coefficients and one $\theta$-derivative are uniformly bounded.  To make the exponentiation remainder explicit, write
\[
\begin{aligned}
        G_{m,\theta}(w)
        &=m^{-1/2}C_{3,m,\theta}(w)
          +m^{-1}C_{4,m,\theta}(w)
          +m^{-3/2}C_{5,m,\theta}(w)
          +\Delta_{m,\theta}(w),
\end{aligned}
\]
where, for $|\beta|\le1$,
\[
        |D_\theta^\beta\Delta_{m,\theta}(w)|
        \le Cm^{-2}(1+\|w\|^6).
\]
On $\|w\|\le m^{1/12}$, $|G_{m,\theta}(w)|\le Cm^{-1/4}$, uniformly in the array and in $\theta$.  Taylor's formula with integral remainder for $e^x$, applied also after one $\theta$-derivative, therefore gives
\[
\begin{aligned}
 e^{G_{m,\theta}(w)}
 &=1+m^{-1/2}C_{3,m,\theta}(w)\\
 &\quad+m^{-1}\left\{C_{4,m,\theta}(w)
              +\frac12C_{3,m,\theta}(w)^2\right\}\\
 &\quad+m^{-3/2}\left\{C_{5,m,\theta}(w)
              +C_{3,m,\theta}(w)C_{4,m,\theta}(w)
              +\frac16C_{3,m,\theta}(w)^3\right\}
   +m^{-2}\widetilde{\mathcal B}_{m,\theta}(w),
\end{aligned}
\]
with
\[
        |D_\theta^\beta\widetilde{\mathcal B}_{m,\theta}(w)|
        \le C(1+\|w\|^{12}),
        \qquad |\beta|\le1.
\]
Multiplication by the Gaussian factor and differentiation of that factor add at most two powers of $\|w\|$.  Consequently
\[
\begin{aligned}
\prod_i \mathbb E_\theta e^{i(w/\sqrt m)^\top(Y_i-\mathbb E_\theta Y_i)}
&=e^{-\frac12w^\top\Sigma_\theta w}
\bigl(1+m^{-1/2}A_{1,m,\theta}(w)+m^{-1}A_{2,m,\theta}(w)\\
&\qquad\qquad +m^{-3/2}A_{3,m,\theta}(w)\bigr)
  +m^{-2}\mathcal B_{m,\theta}(w),
\end{aligned}
        \tag{A.1d}
\]
where $A_{1,m,\theta}$, $A_{2,m,\theta}$, and $A_{3,m,\theta}$ are polynomials of degrees at most $3$, $6$, and $9$, respectively, and
\[
        |D_\theta^\beta \mathcal B_{m,\theta}(w)|
        \le C(1+\|w\|^{16})e^{-c\|w\|^2},
        \qquad |\beta|\le1.
        \tag{A.1e}
\]
\paragraph{$C^1$ Fourier integration.}
We spell out the integration step, including the derivative of the Fourier
remainder.  Put
\[
        \mathcal C_m=\{w\in\R^r:\|w\|_2\le m^{1/12}\},
        \qquad
        \mathcal D_m=[-\pi\sqrt m,\pi\sqrt m]^r,
\]
and
\[
        \Phi_{m,\theta}(w)
        :=\prod_{i=1}^m
        \E_\theta
        e^{i(w/\sqrt m)^\top(Y_i-\E_\theta Y_i)}.
\]
After the change of variables used above,
\[
        \mathcal J_m(z)
        =\frac{m^{-r/2}}{(2\pi)^r}
          \int_{\mathcal D_m}\Phi_{m,\theta}(w)\,dw,
        \qquad \theta=\theta_m(z).
\]
The intermediate-region estimate before \emph{(A.1d)} and its
$\theta$-derivative are exponentially small after absorbing the polynomial
factor $\sqrt m$ into the exponential bound.  The outer-region contribution,
including its $\theta$-derivative, is $O(m^{r/2+1}\rho^m)$.  Consequently,
after decreasing $c$,
\[
\left|\int_{\mathcal D_m\setminus\mathcal C_m}
        \Phi_{m,\theta}(w)\,dw\right|
+\left\|D_\theta
        \int_{\mathcal D_m\setminus\mathcal C_m}
        \Phi_{m,\theta}(w)\,dw\right\|
\le Ce^{-c m^{1/6}}.
\]
Here and below differentiation under the integral is legitimate because
the integration domain is bounded and every tilted characteristic factor
is analytic in $\theta$.

Write
\[
\begin{aligned}
        \mathcal P_{m,\theta}(w)
        :=e^{-\frac12w^\top\Sigma_\theta w}
        \bigl(&1+m^{-1/2}A_{1,m,\theta}(w)
                 +m^{-1}A_{2,m,\theta}(w)\\
              &+m^{-3/2}A_{3,m,\theta}(w)\bigr).
\end{aligned}
\]
The coefficients of these polynomials and their first $\theta$-derivatives
are uniformly bounded by the cumulant estimates above.  Uniform
ellipticity of $\Sigma_\theta$, together with
$\|D_\theta\Sigma_\theta\|\le C$, therefore yields
\[
\left|\int_{\R^r\setminus\mathcal C_m}
        \mathcal P_{m,\theta}(w)\,dw\right|
+\left\|D_\theta
        \int_{\R^r\setminus\mathcal C_m}
        \mathcal P_{m,\theta}(w)\,dw\right\|
\le Ce^{-c m^{1/6}}.
\]
Moreover, \emph{(A.1e)} gives directly
\[
\left|\int_{\mathcal C_m}\mathcal B_{m,\theta}(w)\,dw\right|
+\left\|D_\theta
        \int_{\mathcal C_m}\mathcal B_{m,\theta}(w)\,dw\right\|
\le C.
\]
The polynomials $A_{1,m,\theta}$ and $A_{3,m,\theta}$ are odd in $w$,
whereas the Gaussian factor and $\mathcal C_m$ are even.  Their central
integrals, and also the $\theta$-derivatives of those integrals, thus
vanish identically.  Combining this parity cancellation with
\emph{(A.1d)} and the three bounds above gives
\[
\begin{aligned}
 \int_{\mathcal D_m}\Phi_{m,\theta}(w)\,dw
 &=\int_{\R^r}e^{-\frac12w^\top\Sigma_\theta w}
       \bigl\{1+m^{-1}A_{2,m,\theta}(w)\bigr\}\,dw
       +m^{-2}\mathcal R_m^{\rm F}(\theta),
\end{aligned}
\]
where
\[
        |\mathcal R_m^{\rm F}(\theta)|
        +\|D_\theta\mathcal R_m^{\rm F}(\theta)\|\le C.
\]
This is a $C^1$ identity uniformly over the arrays and all relevant
tilted parameters.  Define
\[
        B_m(\theta)
        :=\frac{\sqrt{\det\Sigma_\theta}}{(2\pi)^{r/2}}
          \int_{\R^r}e^{-\frac12w^\top\Sigma_\theta w}
             A_{2,m,\theta}(w)\,dw .
\]
Gaussian moment bounds, uniform ellipticity, and the coefficient bounds
give
\[
        |B_m(\theta)|+\|D_\theta B_m(\theta)\|\le C.
\]
After absorbing the uniformly bounded Gaussian normalization into
$\mathcal R_m^{\rm F}$ and setting
$E_m^{\rm F}(\theta)=m^{-2}\mathcal R_m^{\rm F}(\theta)$, we obtain
\[
        \mathcal J_m(z)
        =\frac{1}{(2\pi m)^{r/2}\sqrt{\det\Sigma_\theta}}
          \left\{1+m^{-1}B_m(\theta)+E_m^{\rm F}(\theta)\right\},
        \tag{A.1f}
\]
with
\[
        |E_m^{\rm F}(\theta)|
        +\|D_\theta E_m^{\rm F}(\theta)\|
        \le Cm^{-2}.
\]
In particular, \emph{(A.1b)} and the chain rule also give
\[
        \left\|\nabla_z
        E_m^{\rm F}(\theta_m(z))\right\|\le Cm^{-5/2}.
\]
For lattice $z$, \emph{(A.1f)} is the corresponding formula for
$\Pp_\theta(S_m=s)$.  Thus both the Fourier remainder and its derivative
are controlled before the saddlepoint factors are expanded.

It remains to expand the saddlepoint factors.  Put $H_m=\Sigma_{\rm ps}^-$, $\delta=m^{-1/2}z$, and write $\vartheta_m(\delta)$ for the solution of
$\nabla K_m(\theta)=\nabla K_m(0)+\delta$.  Define the local Legendre function
\[
        \Psi_m(\delta)
        :=K_m(\vartheta_m(\delta))
          -\vartheta_m(\delta)^\top\{\nabla K_m(0)+\delta\}.
\]
The envelope identity gives $\nabla_\delta\Psi_m(\delta)=-\vartheta_m(\delta)$; hence
$D^2\Psi_m(0)=-H_m^{-1}$.  The map $\delta\mapsto\vartheta_m(\delta)$ defined by $\nabla K_m(\vartheta_m(\delta))=\nabla K_m(0)+\delta$ is differentiated implicitly: writing $H_\theta=\nabla^2K_m(\vartheta_m(\delta))$,
\[
        D_\delta\vartheta_m=H_\theta^{-1},
        \qquad
        D_\delta^2\vartheta_m[u,v]
        =-H_\theta^{-1}\,D^3K_m(\vartheta_m)\bigl[H_\theta^{-1}u,\,H_\theta^{-1}v\bigr],
\]
and, inductively, $D_\delta^j\vartheta_m$ is a finite sum of products of $H_\theta^{-1}$ and derivatives $D^\ell K_m(\vartheta_m)$ with $\ell\le j+1$.  By \cref{lem:tilt-control} the inverse Hessian is uniformly bounded and $\|D^\ell K_m\|\le C_\ell$ for $\ell\le8$, so $D_\delta^j\vartheta_m$ is uniformly bounded for $j\le4$ on the relevant neighborhood.  Since the envelope identity gives $\nabla_\delta\Psi_m(\delta)=-\vartheta_m(\delta)$, this yields uniform bounds on $D^j\Psi_m$ for $3\le j\le5$, and the same products bound the derivatives of the determinant factor below.  Applying Taylor's theorem with integral remainder separately to $\Psi_m$ and to $\nabla\Psi_m$ therefore yields
\[
\begin{aligned}
        mK_m(\theta)-\theta^\top s^-
        &=m\Psi_m(m^{-1/2}z)\\
        &=-\frac12 z^\top H_m^{-1}z
          +m^{-1/2}P_{3,m}(z)+m^{-1}P_{4,m}(z)
          +m^{-3/2}E_{5,m}(z),
\end{aligned}
        \tag{A.1g}
\]
where $P_{3,m}$ and $P_{4,m}$ have degrees at most $3$ and $4$ and uniformly bounded coefficients, while
\[
        |E_{5,m}(z)|\le C(1+\|z\|^5),
        \qquad
        \|\nabla E_{5,m}(z)\|\le C(1+\|z\|^4).
\]

For the determinant term, the function
\[
        \Gamma_m(\delta)
        :=\frac12\log\frac{\det H_m}
        {\det\nabla^2K_m(\vartheta_m(\delta))}
\]
has uniformly bounded derivatives through order three.  Writing $A(\delta)=\nabla^2K_m(\vartheta_m(\delta))$, so that $\Gamma_m(\delta)=\tfrac12\log\det H_m-\tfrac12\log\det A(\delta)$, Jacobi's formula gives
\[
        D\Gamma_m[u]=-\tfrac12\operatorname{tr}\!\bigl(A^{-1}DA[u]\bigr),
\]
\[
        D^2\Gamma_m[u,v]
        =\tfrac12\operatorname{tr}\!\bigl(A^{-1}DA[v]\,A^{-1}DA[u]\bigr)
         -\tfrac12\operatorname{tr}\!\bigl(A^{-1}D^2A[u,v]\bigr),
\]
and $D^3\Gamma_m$ is a finite sum of traces of products of $A^{-1}$ and $D^jA$ with $j\le3$.  Each $D^jA=D^{j+2}K_m(\vartheta_m)\,[\,\cdot\,]$ composed with derivatives of $\vartheta_m$, so $D\Gamma_m$ uses $D^3K_m$, $D^2\Gamma_m$ uses derivatives of $K_m$ up to order $4$, and $D^3\Gamma_m$ up to order $5$; by \cref{lem:tilt-control} (with the $\vartheta_m$-derivative bounds above) all three are uniformly bounded.  Expanding $\Gamma_m$ at zero and then expanding its exponential gives
\[
        \left(\frac{\det H_m}{\det\Sigma_\theta}\right)^{1/2}
        =1+m^{-1/2}D_{1,m}(z)+m^{-1}D_{2,m}(z)
          +m^{-3/2}E_{D,m}(z),
        \tag{A.1h}
\]
where $D_{1,m}$ and $D_{2,m}$ have degrees at most $1$ and $2$, and
\[
        |E_{D,m}(z)|\le C(1+\|z\|^3),
        \qquad
        \|\nabla E_{D,m}(z)\|\le C(1+\|z\|^2).
\]
Finally, $B_m(\theta_m(z))=B_m(0)+O(\|\theta_m(z)\|)$ and
$\nabla_zB_m(\theta_m(z))=O(m^{-1/2})$.  Hence the tilted local factor in \emph{(A.1f)} contributes a bounded $m^{-1}$ term, an $m^{-3/2}O(1+\|z\|)$ remainder, and a $z$-gradient of the same or smaller order.  The Fourier error $E_m^{\rm F}$ contributes $O(m^{-2})$ and, by the chain rule and \cref{lem:tilt-control}, a $z$-gradient $O(m^{-5/2})$.

To make the remainder definition explicit, put $H_m=\Sigma_{\rm ps}^-$ and define on the entire real window
\[
\begin{aligned}
        \mathcal E_m(z)
        &:=(2\pi m)^{r/2}\sqrt{\det H_m}\,
          \exp\!\left\{\frac12z^\top H_m^{-1}z\right\}\\
        &\quad\times
          \exp\!\left\{mK_m(\theta_m(z))
          -\theta_m(z)^\top(\mu^-+\sqrt m\,z)\right\}
          \mathcal J_m(z).
\end{aligned}
        \tag{A.1i}
\]
For lattice $z=(s^- -\mu^-)/\sqrt m$, equations \emph{(A.1c)} and \emph{(A.1c0)} show that $\mathcal E_m(z)$ is exactly the probability $\Pp(S_m=s)$ divided by the Gaussian prefactor in \emph{(A.1)}.

Write
\[
        a_m(z)=m^{-1/2}P_{3,m}(z)+m^{-1}P_{4,m}(z)
                +m^{-3/2}E_{5,m}(z).
\]
On $\|z\|\le A_0m^{1/8}$, $|a_m(z)|\le Cm^{-1/8}$.  Taylor's formula with integral remainder gives
\[
\begin{aligned}
        e^{a_m(z)}
        &=1+m^{-1/2}P_{3,m}(z)\\
        &\quad+m^{-1}\left\{P_{4,m}(z)
               +\frac12P_{3,m}(z)^2\right\}
          +\mathcal R_m^{\exp}(z),
\end{aligned}
\]
with
\[
        |\mathcal R_m^{\exp}(z)|
        \le Cm^{-3/2}(1+\|z\|^9),
        \qquad
        \|\nabla\mathcal R_m^{\exp}(z)\|
        \le Cm^{-3/2}(1+\|z\|^8).
\]
The derivative bound follows by applying the same integral-remainder formula to the derivative and using the bounds on $E_{5,m}$ and $\nabla E_{5,m}$.  For explicit bookkeeping, write
\[
\begin{aligned}
        F_1(z)&=1+m^{-1/2}P_{3,m}(z)
        +m^{-1}\{P_{4,m}(z)+\tfrac12P_{3,m}(z)^2\}
        +r_{1,m}(z),\\
        F_2(z)&=1+m^{-1/2}D_{1,m}(z)+m^{-1}D_{2,m}(z)+r_{2,m}(z),\\
        F_3(z)&=1+m^{-1}B_m(0)+r_{3,m}(z),
\end{aligned}
\]
where $F_1$ is the exponential factor, $F_2$ is the determinant factor, and $F_3$ is the tilted Fourier factor.  The preceding estimates give
\[
\begin{aligned}
 |r_{1,m}(z)|&\le C m^{-3/2}\{1+\|z\|^9\},
 &\|\nabla r_{1,m}(z)\|&\le C m^{-3/2}\{1+\|z\|^8\},\\
 |r_{2,m}(z)|&\le C m^{-3/2}\{1+\|z\|^3\},
 &\|\nabla r_{2,m}(z)\|&\le C m^{-3/2}\{1+\|z\|^2\},\\
 |r_{3,m}(z)|&\le C\{m^{-3/2}(1+\|z\|)+m^{-2}\},
 &\|\nabla r_{3,m}(z)\|&\le C m^{-3/2}.
\end{aligned}
\]
Here the last line uses $B_m(\theta_m(z))-B_m(0)=O(m^{-1/2}\|z\|)$ and the $C^1$ Fourier error in \emph{(A.1f)}.  Expanding $F_1F_2F_3$, define
\[
\begin{aligned}
 Q_m^{(1)}&=P_{3,m}+D_{1,m},\\
 Q_m^{(2)}&=P_{4,m}+\tfrac12P_{3,m}^2+D_{2,m}
             +P_{3,m}D_{1,m}+B_m(0).
\end{aligned}
\]
These polynomials have degrees at most $3$ and $6$ and uniformly bounded coefficients.  Every remaining monomial has an explicit factor at most $m^{-3/2}$, except terms of order $m^{-2}$ whose polynomial degree is at most $12$.  On $\|z\|\le A_0m^{1/8}$,
\[
        m^{-2}(1+\|z\|^{12})
        \le C_{A_0}m^{-3/2}(1+\|z\|^9),
\]
and, after differentiation,
\[
        m^{-2}(1+\|z\|^{11})
        \le C_{A_0}m^{-3/2}(1+\|z\|^8).
\]
The product rule together with the displayed bounds on $r_{j,m}$ therefore controls the uncollected part by $Cm^{-3/2}(1+\|z\|^9)$ and its gradient by $Cm^{-3/2}(1+\|z\|^8)$.  Define
\[
        R_m^{(2)}(z)
        :=\mathcal E_m(z)-1-m^{-1/2}Q_m^{(1)}(z)-m^{-1}Q_m^{(2)}(z).
        \tag{A.1j}
\]
The preceding bounds, together with the differentiated Fourier estimate, give
\[
        |R_m^{(2)}(z)|\le Cm^{-3/2}(1+\|z\|^9),
        \qquad
        \|\nabla R_m^{(2)}(z)\|
        \le Cm^{-3/2}(1+\|z\|^8).
\]
This proves \emph{(A.1)}--\emph{(A.2)} and also establishes that the displayed remainder is a genuine $C^1$ extension.  If $\|h\|\le Bm^{-1/2}$ and the segment from $z$ to $z+h$ stays in the window, the mean-value theorem gives
\[
\begin{aligned}
        |R_m^{(2)}(z+h)-R_m^{(2)}(z)|
        &\le \|h\|\sup_{0\le a\le1}
               \|\nabla R_m^{(2)}(z+ah)\|\\
        &\le C_Bm^{-2}(1+\|z\|^8),
\end{aligned}
\]
which is \emph{(A.3)}.  All constants are uniform because $\theta_m(z)=O(m^{-3/8})$ remains in the fixed tilted neighborhood.
\end{proof}

\begin{lemma}[Ratio lemma]\label[lemma]{lem:ratio}
Fix $A<\infty$.  Let $t\in\Z^d$ satisfy $\one^\top t=0$ and $\|t\|_1\le2$.  Let $s\in\Z_{\ge0}^d$ satisfy $\one^\top s=m$, assume $s-t\in\Z_{\ge0}^d$, and suppose
\[
        \|(s-\mu)^-\|_2\le A m^{5/8},
        \qquad z^-:=\frac{s^- -\mu^-}{\sqrt m}.
\]
Then, for all sufficiently large $m$,
\[
        \log\frac{\Pp(S_m=s-t)}{\Pp(S_m=s)}
        =\frac1m t^\top\Sigma_{\rm ps}^+(s-\mu)
        -\frac{1}{2m}t^\top\Sigma_{\rm ps}^+t
        +O_A(m^{-3/4}).
        \tag{A.4}
\]
If instead $\|(s-\mu)^-\|_2\le M\sqrt m$ for a fixed $M$, then the remainder is $O_M(m^{-1})$.
\end{lemma}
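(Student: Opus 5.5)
The plan is to read (A.4) off the relative local Edgeworth expansion of \cref{lem:edgeworth}, evaluated at the two lattice points $s$ and $s-t$. Put $H_m=\Sigma_{\rm ps}^-$, $z=z^-=(s^--\mu^-)/\sqrt m$ and $\tau=t^-/\sqrt m$. Then $s-t$ corresponds to the standardized point $z-\tau$, with $\|\tau\|_2\le 2m^{-1/2}$. By hypothesis $\|z\|_2\le Am^{1/8}$, so both $z$ and $z-\tau$, and the whole segment joining them, lie in the window $\|\cdot\|_2\le A_0m^{1/8}$ with $A_0=A+1$, for all $m\ge1$. Nonnegativity of $s$ and $s-t$ makes both probabilities positive for large $m$, so the expansion (A.1) applies at both endpoints. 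The Gaussian prefactor $(2\pi m)^{-(d-1)/2}(\det H_m)^{-1/2}$ cancels in the ratio. Hence
\[
\log\frac{\Pp(S_m=s-t)}{\Pp(S_m=s)}
=-\tfrac12\bigl\{(z-\tau)^\top H_m^{-1}(z-\tau)-z^\top H_m^{-1}z\bigr\}
+\log\frac{1+\varepsilon_m(z-\tau)}{1+\varepsilon_m(z)},
\]
where $\varepsilon_m=m^{-1/2}Q_m^{(1)}+m^{-1}Q_m^{(2)}+R_m^{(2)}$ is the $C^1$ real extension built in the proof of \cref{lem:edgeworth}.

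For the Gaussian part, I will expand the quadratic exactly. This gives
\[
\tau^\top H_m^{-1}z-\tfrac12\tau^\top H_m^{-1}\tau
=\tfrac1m (t^-)^\top H_m^{-1}(s-\mu)^- -\tfrac1{2m}(t^-)^\top H_m^{-1}t^- .
\]
It remains to convert to the pseudoinverse on $\mathcal T$. Both $t$ and $s-\mu$ lie in $\mathcal T$, since $\one^\top s=m=\one^\top\mu$. Deleting the last coordinate is a linear isomorphism $\mathcal T\to\R^{d-1}$, and under it the quadratic form of $\Sigma_{\rm ps}$ on $\mathcal T$ becomes that of $H_m$. It follows that $x^\top\Sigma_{\rm ps}^+y=(x^-)^\top H_m^{-1}y^-$ for $x,y\in\mathcal T$. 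I will verify this by checking that $\Sigma_{\rm ps}\,\iota(H_m^{-1}y^-)=y$ up to a multiple of $\one$, where $\iota$ denotes the inverse of the deletion map on $\mathcal T$. This yields exactly the two leading terms of (A.4).

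The correction part is where the work lies, and it is the step I expect to be the main obstacle, although (A.2) is designed for it. I first need smallness in the window. There $|\varepsilon_m(\cdot)|\le C\{m^{-1/2}(1+m^{3/8})+m^{-1}(1+m^{6/8})+m^{-3/2}(1+m^{9/8})\}=O(m^{-1/8})$, so the logarithm is Lipschitz on the relevant range. By the mean-value theorem along the segment, the log-ratio is then bounded by
\[
C\|\tau\|_2\sup_{a\in[0,1]}\|\nabla\varepsilon_m(z-a\tau)\|_2
\le Cm^{-1/2}\Bigl\{m^{-1/2}(1+\|z\|^2)+m^{-1}(1+\|z\|^5)+m^{-3/2}(1+\|z\|^8)\Bigr\}.
\]
With $\|z\|\le A_0m^{1/8}$ the three terms are $O(m^{-3/4})$, $O(m^{-7/8})$ and $O(m^{-1})$, which gives the remainder $O_A(m^{-3/4})$. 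The only delicate points are the following. The gradient bounds must hold along the non-lattice segment; this is why the $C^1$ extension $\mathcal J_m$ was constructed. All constants must be uniform over arrays satisfying (A.0); this is inherited from \cref{lem:edgeworth}.

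In the bounded regime $\|(s-\mu)^-\|_2\le M\sqrt m$ we have $\|z\|\le M$. The same display then gives $O_M(m^{-1/2}\cdot m^{-1/2})=O_M(m^{-1})$, which proves the final sentence of the lemma.
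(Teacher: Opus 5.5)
Your architecture matches the paper's proof. You apply \cref{lem:edgeworth} with $A_0=A+1$ at $s$ and $s-t$, cancel the Gaussian prefactor, expand the quadratic exactly, convert to $\Sigma_{\rm ps}^+$, and bound the Edgeworth-factor log-ratio by a mean-value argument on the $C^1$ extension. Your single mean-value step for $\varepsilon_m$ is equivalent to the paper's termwise use of the gradient bounds in (A.2) and of (A.3), which is itself that mean-value argument for $R_m^{(2)}$. It yields the same $O(m^{-3/4})$, $O(m^{-7/8})$, $O(m^{-1})$ contributions and the same bounded-window refinement. The step that does not work as written is your justification of $x^\top\Sigma_{\rm ps}^+y=(x^-)^\top H_m^{-1}y^-$ for $x,y\in\mathcal T$.

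Deletion does \emph{not} carry the quadratic form of $\Sigma_{\rm ps}$ on $\mathcal T$ to $H_m$. For $x\in\mathcal T$ one has $x_d=-\one^\top x^-$, hence $x^\top\Sigma_{\rm ps}x=(Gx^-)^\top H_m(Gx^-)$ with $G=I_{d-1}+\one\one^\top$.

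Your proposed check fails for the same reason. With $\iota(w)=(w,-\one^\top w)$ one gets $\Sigma_{\rm ps}\,\iota(H_m^{-1}y^-)=y-(\one^\top H_m^{-1}y^-)\,\Sigma_{\rm ps}e_d$, which is not $y$ modulo $\operatorname{span}\{\one\}$. For $d=2$, where $\Sigma_{\rm ps}=\sigma^2(e_1-e_2)(e_1-e_2)^\top$ and $H_m=\sigma^2$, it equals $2y$.

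The correct verification, which is the paper's, uses the representative $x=(H_m^{-1}y^-,0)$, which is \emph{not} in $\mathcal T$:
\begin{itemize}
\item Since $x_d=0$, the first $d-1$ coordinates of $\Sigma_{\rm ps}x$ are $H_mH_m^{-1}y^-=y^-$.
\item Since $\one^\top\Sigma_{\rm ps}=0$, the last coordinate is $-\one^\top y^-=y_d$, so $\Sigma_{\rm ps}x=y$.
\item Because $y\in\mathcal T=\operatorname{range}\Sigma_{\rm ps}$, we have $\Sigma_{\rm ps}^+y-x\in\ker\Sigma_{\rm ps}=\operatorname{span}\{\one\}$.
\item Pairing with $a\in\mathcal T$ gives $a^\top\Sigma_{\rm ps}^+y=a^\top x=(a^-)^\top H_m^{-1}y^-$.
\end{itemize}
Conceptually, deletion identifies $H_m$ with the $\Sigma_{\rm ps}$-form on the last-coordinate-zero representatives of $\R^d/\operatorname{span}\{\one\}$, and (A.6) is the dual statement on $\mathcal T$. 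With this replacement your proof goes through.

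One harmless slip: the inclusion $\|z-\tau\|_2\le(A+1)m^{1/8}$ needs $m^{5/8}\ge2$, not every $m\ge1$.
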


\begin{proof}
Set
\[
        z^-:=\frac{s^- -\mu^-}{\sqrt m},
        \qquad
        z'^{-}:=\frac{(s-t)^- -\mu^-}{\sqrt m}=z^- -\frac{t^-}{\sqrt m}.
\]
The hypothesis gives $\|z^-\|_2\le A m^{1/8}$.  Since $\|t\|_1\le2$, also $\|z'^{-}\|_2\le A m^{1/8}+2m^{-1/2}\le(A+1)m^{1/8}$ for large $m$.  Thus \cref{lem:edgeworth}, with $A_0=A+1$, applies at both $s$ and $s-t$.  This is the window check that is needed when the ratio is evaluated at shifted lattice points.

Taking the ratio of the two expansions cancels the normalizing determinant factor.  The Gaussian part gives
\[
-\frac12\left((z'^{-})^\top(\Sigma_{\rm ps}^-)^{-1}z'^{-}
-(z^-)^\top(\Sigma_{\rm ps}^-)^{-1}z^-\right)
=\frac1m(t^-)^\top(\Sigma_{\rm ps}^-)^{-1}(s^- -\mu^-)
-\frac{1}{2m}(t^-)^\top(\Sigma_{\rm ps}^-)^{-1}t^- .
\tag{A.5}
\]
For vectors in $\mathcal T$, reduced-coordinate bilinear forms agree with the Moore--Penrose inverse on $\mathcal T$:
\[
        (a^-)^\top(\Sigma_{\rm ps}^-)^{-1}b^-
        =a^\top\Sigma_{\rm ps}^+b,
        \qquad a,b\in\mathcal T.
        \tag{A.6}
\]
Indeed, write
\[
        \Sigma_{\rm ps}=
        \begin{pmatrix}
        B & -B\one\\
        -\one^\top B & \one^\top B\one
        \end{pmatrix},
        \qquad B=\Sigma_{\rm ps}^-,
\]
which follows from the zero row and column sums.  For $b\in\mathcal T$, the vector $x=(B^{-1}b^-,0)$ satisfies $\Sigma_{\rm ps}x=b$.  The Moore--Penrose solution $\Sigma_{\rm ps}^+b$ differs from $x$ only by an element of $\operatorname{span}\{\one\}$.  Therefore, for $a\in\mathcal T$,
\[
        a^\top\Sigma_{\rm ps}^+b
        =a^\top x
        =(a^-)^\top B^{-1}b^-,
\]
which proves (A.6).  Applying (A.6) to $a=t$ and $b=s-\mu$ converts (A.5) into the first two terms in (A.4).

It remains to bound the log-ratio of the Edgeworth factors.  Write
\[
        \mathcal E_m(z)=1+m^{-1/2}Q_m^{(1)}(z)+m^{-1}Q_m^{(2)}(z)+R_m^{(2)}(z).
\]
On the window $\|z\|_2\le(A+1)m^{1/8}$, the bounds in \cref{lem:edgeworth} give
\[
        |m^{-1/2}Q_m^{(1)}(z)|\le C m^{-1/8},
        \qquad
        |m^{-1}Q_m^{(2)}(z)|\le C m^{-1/4},
        \qquad
        |R_m^{(2)}(z)|\le C m^{-3/8}.
\]
The extension $\mathcal E_m(z)$ is real and satisfies $|\mathcal E_m(z)-1|<1/2$ for all sufficiently large $m$, uniformly on the window; in particular it is positive.  Moreover, using the derivative bounds and (A.3),
\[
\begin{aligned}
|Q_m^{(1)}(z'^{-})-Q_m^{(1)}(z^-)|
        &\le C(1+\|z^-\|_2^2)m^{-1/2},\\
|Q_m^{(2)}(z'^{-})-Q_m^{(2)}(z^-)|
        &\le C(1+\|z^-\|_2^5)m^{-1/2},\\
|R_m^{(2)}(z'^{-})-R_m^{(2)}(z^-)|
        &\le C m^{-2}(1+\|z^-\|_2^8).
\end{aligned}
\]
After multiplying by $m^{-1/2}$ and $m^{-1}$ respectively, these are $O_A(m^{-3/4})$, $O_A(m^{-7/8})$, and $O_A(m^{-1})$ on $\|z^-\|\le A m^{1/8}$.  Since $\log$ is Lipschitz on a fixed neighborhood of $1$, this gives
\[
        \log \mathcal E_m(z'^{-})-\log \mathcal E_m(z^-)=O_A(m^{-3/4}).
\]
Combining this with the Gaussian part proves (A.4).  If $\|z^-\|_2\le M$, the same estimates give $O_M(m^{-1})$ for the Edgeworth-factor contribution, and hence the improved bounded-window remainder.
\end{proof}

\begin{lemma}[Refined ratio with Edgeworth-gradient term]\label[lemma]{lem:refined-ratio}
Fix $A<\infty$ and suppose the assumptions of \cref{lem:ratio} hold with this value of $A$.  Then, for all sufficiently large $m$,
\[
        \log\frac{\Pp(S_m=s-t)}{\Pp(S_m=s)}
        =\frac1m t^\top\Sigma_{\rm ps}^+(s-\mu)
        -\frac{1}{2m}t^\top\Sigma_{\rm ps}^+t
        +\frac1m Q_{m,t}(z^-)+\rho_{m,t}(s),
        \tag{A.7}
\]
where $z^-=(s^- -\mu^-)/\sqrt m$, $Q_{m,t}$ is a quadratic polynomial satisfying
\[
        |Q_{m,t}(z^-)|\le C(1+\|z^-\|_2^2),
        \tag{A.8}
\]
and
\[
        |\rho_{m,t}(s)|\le C m^{-3/2}(1+\|z^-\|_2^9).
        \tag{A.9}
\]
All constants depend only on $(d,p,A)$.
\end{lemma}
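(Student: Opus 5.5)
The plan is to rerun the proof of \cref{lem:ratio} but keep the order-$m^{-1}$ part of the Edgeworth log-ratio explicit instead of absorbing it. Set $z=z^-$ and $z'=z-t^-/\sqrt m$. By the window check in \cref{lem:ratio}, \cref{lem:edgeworth} with $A_0=A+1$ applies at both points. The Gaussian factors give exactly the first two terms of (A.7), via (A.5) and (A.6). What remains is
\[
\Delta:=\log\mathcal E_m(z')-\log\mathcal E_m(z),\qquad
\mathcal E_m=1+m^{-1/2}Q_m^{(1)}+m^{-1}Q_m^{(2)}+R_m^{(2)},
\]
and the task is to show that $\Delta=m^{-1}Q_{m,t}(z)+O(m^{-3/2}(1+\|z\|^9))$ with $Q_{m,t}$ quadratic.

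First, I will Taylor-expand $Q_m^{(1)}$ exactly. It is a polynomial of degree at most $3$ with bounded coefficients, so
\[
Q_m^{(1)}(z')-Q_m^{(1)}(z)=-m^{-1/2}\nabla Q_m^{(1)}(z)^\top t^-+\tfrac12m^{-1}D^2Q_m^{(1)}(z)[t^-,t^-]-\tfrac16m^{-3/2}D^3Q_m^{(1)}[t^-]^3 .
\]
After multiplying by $m^{-1/2}$, the first term is $-m^{-1}\nabla Q_m^{(1)}(z)^\top t^-$. This is a quadratic polynomial in $z$ with bounded coefficients, and it is the Edgeworth-gradient term, so I define $Q_{m,t}(z):=-\nabla Q_m^{(1)}(z)^\top t^-$. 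That gives (A.8) at once. The second term is $O(m^{-3/2}(1+\|z\|))$, and the third is $O(m^{-2})$. For $Q_m^{(2)}$, the gradient bound in (A.2) gives $m^{-1}|Q_m^{(2)}(z')-Q_m^{(2)}(z)|\le Cm^{-3/2}(1+\|z\|^5)$. For $R_m^{(2)}$, (A.3) gives $O(m^{-2}(1+\|z\|^8))$.

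Second, I will pass from the difference of the $\mathcal E_m$ to the difference of logs. I write $\log\mathcal E_m(z')-\log\mathcal E_m(z)=\log(1+\Delta_{\mathcal E}/\mathcal E_m(z))$, where $\Delta_{\mathcal E}=\mathcal E_m(z')-\mathcal E_m(z)$. The point needing care is that $\Delta_{\mathcal E}/\mathcal E_m(z)$ is not just $\Delta_{\mathcal E}$. The factor $1/\mathcal E_m(z)=1-m^{-1/2}Q_m^{(1)}(z)+O(m^{-1}(1+\|z\|^6))$ multiplies the leading piece $-m^{-1}\nabla Q_m^{(1)}(z)^\top t^-$. The cross term is $m^{-3/2}$ times a polynomial of degree at most $5$, so it lies in the allowed error. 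The later corrections are even smaller on the window, since $m^{-1/2}\|z\|^k$ stays controlled there. The quadratic term in $\log(1+x)$ is $O(\Delta_{\mathcal E}^2)$. Here $\Delta_{\mathcal E}=O(m^{-1}(1+\|z\|^2)+m^{-3/2}(1+\|z\|^5))$, so its square is $O(m^{-2}(1+\|z\|^{10}))$. Using $\|z\|\le Am^{1/8}$, this becomes $O(m^{-15/8}(1+\|z\|^9))$, which is $O(m^{-3/2}(1+\|z\|^9))$.

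Third, I collect all remainders into $\rho_{m,t}(s)$. Every term carries $m^{-3/2}$ times a polynomial of degree at most $9$, or an $m^{-2}$ factor whose excess degree the window absorbs, exactly as in the proof of \cref{lem:edgeworth}. This gives (A.9), with constants depending only on $(d,p,A)$.

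The main obstacle is degree bookkeeping: making sure no product term has degree above $9$ unless it also carries enough extra powers of $m^{-1/8}$ to reduce it back via $\|z\|\le Am^{1/8}$. The other point to check is that the division by $\mathcal E_m(z)$ produces no uncontrolled order-$m^{-1}$ term other than $Q_{m,t}$. Both are routine once the expansion is arranged as above.
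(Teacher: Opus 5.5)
Your argument is correct. It isolates the same Edgeworth-gradient term $Q_{m,t}(z^-)=-(t^-)^\top\nabla Q_m^{(1)}(z^-)$ through the same exact cubic Taylor expansion, but you order the remainder analysis the opposite way from the paper.

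The paper takes logarithms first. It writes $\log\mathcal E_m=m^{-1/2}Q_m^{(1)}+m^{-1}H_m+\widetilde R_m$, with $H_m=Q_m^{(2)}-\frac12(Q_m^{(1)})^2$ and $|\widetilde R_m(z)|\le Cm^{-3/2}(1+\|z\|_2^9)$, and then differences term by term. The $H_m$ difference is handled by the polynomial mean-value bound. The difference of $\widetilde R_m$ is bounded crudely by $|\widetilde R_m(z'^{-})|+|\widetilde R_m(z^-)|$. That route never divides by $\mathcal E_m(z^-)$, and it uses only the pointwise remainder bound in (A.2), not the bounded-shift estimate (A.3). The degree $9$ in (A.9) is exactly the cost of that triangle-inequality step.

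You difference first and then expand $\log(1+\Delta_{\mathcal E}/\mathcal E_m(z^-))$. This forces you to track the cross term coming from $1/\mathcal E_m(z^-)$ and the quadratic term of the logarithm, and you handle both correctly. It also makes (A.3) essential for the $R_m^{(2)}$ difference. In return, your bookkeeping gives more than the statement asks. On the window $\|z^-\|_2\le Am^{1/8}$, every error term you list is actually $O(m^{-3/2}(1+\|z^-\|_2^5))$. For example, $m^{-2}\|z^-\|_2^8\le Cm^{-3/2}\|z^-\|_2^5\,m^{-1/8}$, and $(\Delta_{\mathcal E}/\mathcal E_m(z^-))^2=O(m^{-2}(1+\|z^-\|_2^4))$. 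So (A.9) follows with room to spare.
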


\begin{proof}
The Gaussian quadratic part is exactly the one computed in the proof of \cref{lem:ratio}, so it gives the first two terms in (A.7).  It remains to expand the Edgeworth-factor ratio one order more carefully.

Use \cref{lem:edgeworth} with $A_0=A+1$.  As before, both $z^-$ and
\[
        z'^{-}=z^- -t^-/\sqrt m
\]
lie in the window $\|\cdot\|_2\le(A+1)m^{1/8}$ for large $m$.  Define
\[
        \mathcal E_m(z)=1+m^{-1/2}Q_m^{(1)}(z)+m^{-1}Q_m^{(2)}(z)+R_m^{(2)}(z).
\]
On this window the real-valued extension satisfies $|\mathcal E_m(z)-1|<1/2$ for all sufficiently large $m$, uniformly in $z$, and hence is positive.  Therefore
\[
        \log \mathcal E_m(z)=m^{-1/2}Q_m^{(1)}(z)+m^{-1}H_m(z)+\widetilde R_m(z),
        \tag{A.10}
\]
where
\[
        H_m(z):=Q_m^{(2)}(z)-\frac12(Q_m^{(1)}(z))^2
\]
is a polynomial of degree at most $6$ with uniformly bounded coefficients, and
\[
        |\widetilde R_m(z)|\le C m^{-3/2}(1+\|z\|_2^9).
\]
Since $\|z'^{-}-z^-\|=O(m^{-1/2})$ and both points lie in the same fixed moderate window, the pointwise bound on $\widetilde R_m$ gives
\[
        |\widetilde R_m(z'^{-})-\widetilde R_m(z^-)|
        \le |\widetilde R_m(z'^{-})|+|\widetilde R_m(z^-)|
        \le C m^{-3/2}(1+\|z^-\|_2^9).
        \tag{A.11}
\]

Subtract (A.10) at $z^-$ from (A.10) at $z'^{-}$.  The first Edgeworth polynomial gives the only order-$m^{-1}$ contribution.  Since $Q_m^{(1)}$ is cubic with uniformly bounded coefficients,
\[
\begin{aligned}
Q_m^{(1)}(z'^{-})-Q_m^{(1)}(z^-)
&=\nabla Q_m^{(1)}(z^-)\cdot(z'^{-}-z^-)
  +O\left(\|z'^{-}-z^-\|_2^2(1+\|z^-\|_2)\right)\\
&=-m^{-1/2}(t^-)^\top\nabla Q_m^{(1)}(z^-)
  +O\left(m^{-1}(1+\|z^-\|_2)\right).
\end{aligned}
\]
Multiplying by $m^{-1/2}$ yields
\[
        m^{-1/2}\{Q_m^{(1)}(z'^{-})-Q_m^{(1)}(z^-)\}
        =-\frac1m(t^-)^\top\nabla Q_m^{(1)}(z^-)
        +O\left(m^{-3/2}(1+\|z^-\|_2)\right).
\]
Define
\[
        Q_{m,t}(z):=-(t^-)^\top\nabla Q_m^{(1)}(z).
\]
Because $\nabla Q_m^{(1)}$ is quadratic with uniformly bounded coefficients, (A.8) follows.

The second-order logarithmic polynomial contributes only to the remainder.  Since $H_m$ has degree at most $6$ with uniformly bounded coefficients,
\[
        |H_m(z'^{-})-H_m(z^-)|
        \le C(1+\|z^-\|_2^5)\|z'^{-}-z^-\|_2
        \le C(1+\|z^-\|_2^5)m^{-1/2},
\]
and therefore
\[
        m^{-1}|H_m(z'^{-})-H_m(z^-)|
        \le C m^{-3/2}(1+\|z^-\|_2^5).
\]
Together with (A.11), this gives
\[
        \log \mathcal E_m(z'^{-})-\log \mathcal E_m(z^-)
        =\frac1m Q_{m,t}(z^-)
        +O\left(m^{-3/2}(1+\|z^-\|_2^9)\right).
\]
Adding the Gaussian quadratic contribution from \cref{lem:ratio} proves (A.7)--(A.9).
\end{proof}

\end{document}